\newtheorem{lemma}{Lemma}
\newtheorem{claim}{Claim}
\newtheorem{proposition}{Proposition}
\newtheorem{definition}{Definition}
\newtheorem{theorem}{Theorem}
\newtheorem{conjecture}{Conjecture}
\newcommand{\cost}{\mathsf{cost}}
\newcommand{\ecost}{\mathsf{ecost}}
\newcommand{\cvs}{\mathsf{ctvs}}
\newcommand{\ecvs}{\mathsf{ectvs}}
\newcommand{\ctw}{\mathsf{ctw}}
\newcommand{\tw}{\mathsf{tw}}
\newcommand{\pw}{\mathsf{pw}}
\newcommand{\cpw}{\mathsf{cpw}}
\newcommand{\R}{\mathsf{R}}
\newcommand{\SPtree}{\textsf{SP}-tree}
\newcommand{\G}{\mathbf{G}}
\newcommand{\DP}{\mathsf{DP}}
\newenvironment{proofclaim}{
\noindent \emph{Proof of claim.}{}{}} {\hfill$\Diamond$\vspace{1em}}
\title{\textbf{A polynomial time algorithm to compute the connected treewidth of a series-parallel graph}\footnote{Research supported by ANR projects DEMOGRAPH (ANR-16-CE40-0028) and ESIGMA (ANR-17-CE23-0010) and the French-German Collaboration ANR/DFG Project UTMA ANR-20-CE92-0027.}}
\author{Guillaume Mescoff\thanks{\'Ecole Normale Sup\'erieure de Rennes (ENS Rennes).}
\and Christophe Paul\thanks{LIRMM, Univ Montpellier, CNRS, Montpellier, France.}  \and Dimitrios M. Thilikos$^{\dagger}$}
\date{\today}
\definecolor{Red}{rgb}{1, 0 ,0}
\definecolor{Blue}{rgb}{0, 0 ,1}
\newcommand{\xchange}[1]{#1}
\date{}
\begin{document}

\maketitle

\begin{abstract}
\noindent It is well known that the \emph{treewidth} of a graph $G$ corresponds to the \emph{node search number} where a team of searchers is pursuing a fugitive that is \emph{lazy} and  \emph{invisible} (or alternatively is  \emph{agile} and \emph{visible}) and has the ability to move with  infinite speed via unguarded paths. 
Recently, \emph{monotone and connected node search strategies} have been considered. A search strategy is \emph{monotone} if it prevents the fugitive from pervading again areas from where he had been expelled  and is 
\emph{connected} if, at each step, the set of vertices that is or {\sl  has been} occupied by the searchers induces a connected subgraph of $G$. It has been shown that the corresponding connected and monotone search number of a graph $G$ can be expressed as the \emph{connected treewidth}, denoted $\mathbf{ctw}(G),$  that is defined as the minimum width of a rooted tree-decomposition $({{\cal X},T,r})$,  where the union of the bags corresponding to the nodes of a path of $T$ containing the root $r$ is connected in $G$. 
In this paper, we initiate the algorithmic study of connected treewidth. We design a $O(n^2\cdot\log n)$-time dynamic programming algorithm to compute the connected treewidth of biconnected series-parallel graphs. At the price of an extra $n$ factor in the running time, our algorithm generalizes to graphs of treewidth at most two.
\end{abstract}
%

\medskip

\noindent{\bf Keywords:} Graph decompositions, Graph Classes, Width parameters, Combinatorial algorithms, Series-parallel graphs, Treewidth, Connected treewidth, Dynamic programming.

\section{Introduction}

Since its introduction~\cite{BB72,Hal76,RS84}, the concept of \emph{treewidth} of a graph has led to major advancements in graph theory. 
The treewidth parameter, denoted by $\tw,$  is central to the design of efficient graph algorithms (see~\cite{Arn85} for a survey on early works in this direction and \cite{BK08} for a recent survey). According to the theorem  of Courcelle~\cite{Cou92}, properties expressible in $\mathsf{MSO}_2$ logic can be tested in parameterized linear time on graph of bounded treewidth. This result established  treewidth as an important structural parameter in the context of parameterized complexity~\cite{DF99}.  Treewidth can be defined in several equivalent ways, while the standard definition is by means of a tree-decomposition (see Section~\ref{sec:prelim}). 
During the last two decades, a number of width-parameters have been introduced as combinatorial 
variants or alternatives to treewidth (see~\cite{HOS08} for a survey on width parameters). This paper deals with the \emph{connected treewidth} parameter, a new variant of treewidth that is motivated by the study of connected search games in graphs.

A \emph{node search} game is a game opposing a group of searchers and a fugitive,  occupying the vertices of a graph. In a search strategy, one searcher may either be placed to or removed from a vertex and a winning search strategy is a sequence of moves of the searchers that eventually leads to the capture of the fugitive. 
The capture of the fugitive  happens when a searcher lands on the vertex occupied by the fugitive  while the fugitive  cannot escape along a searcher-free path. The cost of a search strategy is the maximum number of searchers simultaneously occupying vertices of the graph. 
 The \emph{clean territory} during some step of the game is the set of  
vertices from which the fugitive  has been, so far, expelled by the searchers’ strategy.
A strategy is \emph{monotone} if it guarantees that the fugitive  will not be able to visit  an already 
cleaned territory. Also, a strategy is {\em connected} if it guarantees,  that at any step, the clean territory 
induces a connected graph.
The fugitive  can be \emph{lazy} or \emph{agile}. Being lazy means that the fugitive  is staying at his position, as long as a searcher is not moving on that position. An agile fugitive  may move at any time regardless of the move of the searchers. Also a fugitive  can be   \emph{visible} or \emph{invisible} in the sense that the searcher’s strategy {\sl may} 
or {\sl may not} take into account the current position of the fugitive.
The node search number of  a graph (for some of the above variants) is the minimum cost of a winning search strategy.

It is well-known that the {\sl search number against an invisible and lazy fugitive }  is equal to treewidth plus one, while the same quantity 
is also equal to the {\sl search number against a visible and agile fugitive }~\cite{DKT97,ST93}.
On the other hand, if the fugitive  is  {\sl invisible and agile}, then the corresponding search number is 
equal to the parameter of pathwidth plus one~\cite{KP86,Moh90,KP85,Kin92}.
Moreover  all the aforementioned versions of the game are {\em monotone} in the sense that, for every search strategy, there is a monotone one with the same cost~\cite{Bie91,ST93}. In this paper, we deal only with monotone version node search.

Interestingly, the motivating story of one of the seminal papers~\cite{Par78} on graph searching was inspired by an earlier article of the Breisch in Southwestern Cavers Journal~\cite{Bre67}. This paper  concerned speleotopological explorations, which are, by essence, connected explorations as the searchers cannot ``teleport’’ to a vertex that is away from the current clean territory. 
The connectivity constraint was considered for the first time in~\cite{BFS03,BFFF12}, where 
the price of connectivity, defined as the ratio between monotone connected node search number (for some of the considered strategy variants) and the node search number, was first considered. In~\cite{FN06b,FN08}, it was proven that the price of connectivity for monotone node search against a visible and agile fugitive  (or, equivalently, an invisible and lazy one) is $\Theta(\log n)$. 
 Dereniowski~\cite{Der12} introduced the notion of \emph{connected pathwidth} of a graph, denoted $\cpw(G)$ and showed its equivalence with the monotone node search number against an invisible and agile fugitive. He proved that, for every graph $G,$  $\cpw(G)\leqslant 2\cdot \pw(G)+O(1),$  henceforth resolving the question of the price of connectivity  for a monotone node search against an invisible and agile fugitive. Extending the work of~\cite{Der12}, Adler et al.~\cite{APT19} recently introduced a definition of \emph{connected treewidth}, denoted hereafter $\ctw(G)$. They proved that, as for treewidth, the connected treewidth parameter is equivalent to the connected monotone node search number against an invisible and  lazy fugitive  and, as in the non-connected setting, the same holds for the visible and  agile case. Also, they proved that  connected treewidth 
 is equivalent to a connected variant of the tree vertex separation number (see Section~\ref{sec:prelim}). 
 
In this paper, we are interested in the problem of computing the connected treewidth of a graph. So far, very little is known on the algorithmic aspects of the connected treewidth and connected pathwidth parameters. 
First of all, checking, given a graph $G$ and an integer $k,$  whether 
the connected treewidth (or the connected pathwidth) is at most $k$ is an {\sf NP}-complete problem (see~Theorem~\ref{th_eqdcoml}).
This means that, in general, we may not expect any polynomial algorithm for computing connected treewidth.
Very recently, the problem of  deciding {\sf cpw}$(G)\leq k$ was shown to be fixed parameterized tractable~\cite{Kante2020linear}, improving an $n^{O(k^2)}$-time algorithm~\cite{DOR19}. An explanation for this lack of algorithmic results partially relies on the fact that, unlike pathwidth and treewidth, connected treewidth and connected pathwidth parameters do not enjoy nice combinatorial properties such as closeness under taking of minors.
Interestingly, both parameters are closed under contractions \cite{APT19}. However,  for graph contractions, there is no analogue of the algorithmic machinery developed in the context of graph minors  (see Section~\ref{klopath} for a more developed discussion).

We stress that connected treewidth may be arbitrarily bigger than treewidth. For instance, consider a 
complete binary tree of height $k$, introduce a new vertex and make it adjacent to all the leaves of the tree. As observed in \cite{APT19}, this graph has connected treewidth $k$ while it is a series-parallel graph that has treewidth 2.

The above motivated us to initiate the study of computing the connected treewidth on the class of series-parallel graphs. First introduced by Macmahon in 1982~\cite{Mac1892} (see also~\cite{RS42}),
 series-parallel graphs are essentially graphs of treewidth at most two~\cite{Duf65}. More precisely, a graph has treewidth at most two if and only if each of its biconnected components induces a series-parallel graph. The recursive construction, by means of series and parallel composition (see Section~\ref{sec:prelim}), of series-parallel graphs allows to solve a large number of \textsf{NP}-hard problems in polynomial (or even linear) time, see for example~\cite{BLW85,BLW87,Hasin1986Effic}. It follows that the class of series-parallel graphs, among others, forms a natural test bed for the existence of efficient graph algorithms~\cite{BPT92}. 

In this paper, we design a $O(n^2\cdot\log n)$-time algorithm to compute the connected treewidth of a biconnected series-parallel graph. The algorithm is extended to a $O(n^3\cdot\log n)$-algorithm for graphs of treewidth at most $2$. This result constitutes a first step toward the computation of connected treewidth parameterized by treewidth (see Section~\ref{klopath} for a discussion).

\section{Preliminaries}
\label{sec:prelim}

We use standard notations for graphs, as for example in~\cite{Die05}. We consider undirected and simple graphs. We let $G=(V,E)$ denote a graph on $n$ vertices, with $V=V(G)$ its vertex set and $E=E(G)$ its edge set. A vertex $x$ is a {\em neighbor} of $y$ if $xy$ is an edge of $E$. If $S$ is a subset of $V,$  then $G[S]$ is the {\em subgraph of} $G$ {\em induced by} $S$. A path $P$ between vertex $x$ and $y$ is called an {\em $(x,y)$-path} and the vertices of $P$ distinct from $x$ and $y$ are the {\em internal vertices} of $P$. A vertex $x\in V$ is a {\em cut vertex} if 
the removal of $x$ strictly increases the number of connected components of the graph.
A graph is {\em biconnected} if it is connected and does not contain a cut vertex. Given an integer $q,$  we use $[q]$ as a shortcut for  the set $\{1,\ldots,q\}$.

A \emph{layout} $\sigma$ of a graph $G=(V,E)$ is a total ordering of its vertices, in other words  $\sigma$ is a bijection from $V$ to $[n]$. 
For two vertices $x$ and $y,$  we write $x<_{\sigma} y$ if $\sigma(x)<\sigma(y)$. We define $\sigma_{<i}=\{x\in V\mid \sigma(x)<i\}$ (the sets $\sigma_{>i},$  $\sigma_{\leq i}$ and $\sigma_{\geq i}$ are similarly defined). If $S$ is a subset of $V,$  then $\sigma[S]$ is the layout of $G[S]$ such that for every $x,y\in S,$  $\sigma(x)<\sigma(y)$ if and only if $\sigma[S](x)<\sigma[S](y)$. Let $\sigma_1$ and $\sigma_2$ be two layouts on disjoint vertex sets $V_1$ and $ V_2$. Then $\sigma=\sigma_1\odot\sigma_2,$  the \emph{concatenation} of $\sigma_1$ and $\sigma_2,$  
is the layout on $V_1\cup V_2$ such that
for every $x_1\in V_1$ and every $x_2\in V_2,$  $\sigma(x_1)<\sigma(x_2),$  $\sigma[V_1]=\sigma_1$ and $\sigma[V_2]=\sigma_2$.

\subsection{Series-parallel graphs}

A \emph{$2$-terminal graph} is a pair $\G{} =(G,(x,y))$ where $G=(V,E)$ is a graph and $(x,y)$ is a pair of distinguished vertices, hereafter called the \emph{terminals}. Consider two $2$-terminal graphs $(G_1,(x_1,y_1))$ and $(G_2,(x_2,y_2)),$  where $G_1=(V_1,E_1)$ and $G_2=(V_2,E_2)$. Then:

\begin{itemize}
\item  the \emph{series-composition}, denoted $(G_1,(x_1,y_1))\otimes (G_2,(x_2,y_2)),$  yields the $2$-terminal graph $(G,(x_1,y_2))$ with $G$ being the graph obtained by identifying the terminal $y_1$ with $x_2$;
\item the \emph{parallel-composition}, denoted $(G_1,(x_1,y_1))\oplus (G_2,(x_2,y_2)),$  yields the $2$-terminal graph $(G,(x_1,y_1))$ with $G$ being the graph obtained by identifying the terminal $x_1$ with $x_2$ and the terminal $y_1$ with $y_2$.
\end{itemize}

\begin{definition}
A $2$-terminal graph $(G,(x,y))$ is \emph{series-parallel} if either it is the single edge graph with $\{x,y\}$ as vertex set, or if it results from the series or the parallel composition of two $2$-terminal series-parallel graphs. A graph $G=(V,E)$ is a \emph{series-parallel graph} if for some pair of vertices $x,y\in V,$  $(G,(x,y))$ is a $2$-terminal series-parallel graph.
\end{definition}

Observe that from the definition above, we may generate multi-graphs. However, in this paper we only consider simple graphs. Our results extend easily to multi-graphs: observe that if $G^*$ is the graph obtained from $G$ after suppressing multiple edges to simple ones, then ${\ctw}(G)={\ctw}(G^*)$.
When it is clear from the context, $\G$ will denote the $2$-terminal series-parallel graph $(G,(x,y))$. Observe that a series-parallel graph $\G$ can be represented by a so-called \emph{\SPtree} $T(\G)$. The leaves of $T(\G)$ are labelled by the edges of $G$. Every internal node of $T(\G)$ is labelled by a composition operation ($\oplus$ or $\otimes$) and a pair of terminal vertices. For an internal node $t$ of $T(\G),$  we let $T_t$ denote the subtree of $T(\G)$ rooted at $t$ and $V_t$ the subset of vertices incident to an edge labelling a leaf of  $T_t$. Suppose that $t$ is labelled $(\otimes,(x_t,y_t)),$  then the node $t$ represents the $2$-terminal graph $(G[V_t],(x_t,y_t))$.

\begin{theorem}~\cite{Epp92}\label{th:SP-2terminal}
If a graph $G=(V,E)$ is a biconnected series-parallel graph and  $xy\in E$, then $(G,(x,y))$ is a $2$-terminal series-parallel graph.
\end{theorem}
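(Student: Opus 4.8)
The plan is to prove the statement by strong induction on the number of edges, carried out in the slightly more general setting of biconnected series-parallel \emph{multigraphs} (the passage to simple graphs being harmless, as noted after the definition of series-parallel graph). Thus I would prove: for every biconnected series-parallel multigraph $G$ and every edge $xy\in E(G)$, the $2$-terminal graph $(G,(x,y))$ is series-parallel; the theorem is the case of a simple $G$. The base cases, where $G$ is a single edge or a pair of parallel edges, are immediate. The starting observation for the inductive step is that biconnectivity forces the \emph{outermost} composition to be parallel: were it a series composition, the vertex obtained by identifying $y_1$ with $x_2$ would separate the rest of $G_1$ from the rest of $G_2$ and hence be a cut vertex. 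So, fixing any terminals $(a,b)$ witnessing that $G$ is series-parallel, I would write $G$ as a \emph{maximal} parallel composition $(G,(a,b))=(H_1,(a,b))\oplus\cdots\oplus(H_k,(a,b))$ with $k\geq 2$, where each part $H_i$ is either the single edge $ab$ or has a series outermost operation. The prescribed edge $xy$ then lies in exactly one part $H_j$; if $xy$ is the edge $ab$ itself, then $(G,(x,y))=(G,(a,b))$ and there is nothing to prove, so I may assume $xy\in E(H_j)$ with $H_j$ not reduced to that edge.

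The engine of the induction is a substitution move on \SPtree{}s: replacing a leaf labelled by an edge $pq$ (whose two terminals are exactly $p$ and $q$) by the \SPtree{} of any $2$-terminal series-parallel graph $(M,(p,q))$ produces the \SPtree{} of the graph obtained by gluing $M$ across $p$ and $q$; here I also use that $(M,(p,q))$ is series-parallel if and only if $(M,(q,p))$ is, so orientation of terminals never matters. To re-root the decomposition at $xy$, I would isolate $H_j$, add a \emph{return edge} $ab$ to form $H_j^{+}=H_j+ab$, and invoke the induction hypothesis on $H_j^{+}$. This is legitimate because $(H_j^{+},(a,b))$ is the parallel composition of $(H_j,(a,b))$ with the single edge $ab$, so $H_j^{+}$ is series-parallel, and because adding a return edge to a $2$-terminal series-parallel graph always yields a biconnected graph (a routine induction on the series-parallel structure). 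Induction then gives that $(H_j^{+},(x,y))$ is series-parallel, and substituting the remaining part $R=\bigoplus_{i\neq j}H_i$ (a $2$-terminal series-parallel graph with terminals $a,b$) for the leaf $ab$ reconstructs $G=H_j\cup R$ and yields the desired $(G,(x,y))$.

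The delicate point, and what I expect to be the main obstacle, is ensuring that the graph fed to the induction has \emph{strictly} fewer edges, i.e.\ that $R$ contributes at least two edges. This fails precisely when $k=2$ and $R$ is the single edge $ab$, so that $H_j^{+}=G$ and no progress is made. In that situation I would descend one level into $H_j$ itself: since $H_j$ has a series outermost operation, write $H_j=A\otimes B$ through an internal vertex $c$, with $xy$ lying in, say, $A$ (terminals $(a,c)$); the case $xy\in B$ is symmetric. Then $A^{+}=A+ac$ is biconnected and series-parallel with strictly fewer edges than $G$ (since $B$ is nonempty), so the induction hypothesis applies to it, and reassembling $G$ reduces to substituting, for the leaf $ac$ of $A^{+}$, the $2$-terminal series-parallel graph $(ab)\otimes(B,(b,c))$ that joins $a$ to $c$ through $b$. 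This restores strict decrease of the edge-count measure in every branch and closes the induction. Throughout, the conceptual heart of the proof is exactly this re-rooting of the decomposition from the arbitrary witnessing terminals $(a,b)$ to the prescribed edge $xy$, and getting the recursion to terminate requires the case distinction above rather than a naive peel of a single part.
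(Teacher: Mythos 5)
You should first note that the paper does not prove this statement at all: it is quoted from Eppstein~\cite{Epp92} (with its algorithmic counterpart, Theorem~\ref{th:SP-recognition}, taken from Valdes, Tarjan and Lawler), so there is no internal proof to compare against. Your self-contained argument is correct, and it is essentially the classical re-rooting proof of this fact. The two facts you defer as routine really are routine inductions on the series--parallel structure: that adding a ``return edge'' between the terminals of a $2$-terminal series-parallel graph produces a biconnected graph, and that substituting a $2$-terminal series-parallel graph for a leaf of an \SPtree{} whose endpoints match its terminals again produces an \SPtree{} (together with the orientation symmetry of the terminal pair). Your observation that biconnectivity forces the outermost operation to be parallel is the right starting point; the passage to multigraphs is genuinely needed (the return edge can duplicate an existing edge, e.g.\ when the side of the series split containing $xy$ is itself a single edge); and you correctly isolate and repair the one configuration in which the naive recursion fails to shrink the edge count --- exactly two parallel parts, with the part not containing $xy$ a single edge --- by descending one series level before closing up with the return edge, which restores strict decrease since the other series factor contributes at least one edge. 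What your route buys is a complete proof from the definitions given in Section~\ref{sec:prelim}; what the citation buys the authors is brevity plus the linear-time construction of the decomposition rooted at the prescribed edge $xy$, which is what the dynamic programming of Section~\ref{adpal} actually consumes.
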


Theorem~\ref{th:SP-2terminal} can be rephrased as follows: if $xy$ is an edge of a biconnected series-parallel graph, then $\G=(G,(x,y))$ is a $2$-terminal series-parallel graph such that  $\G=(G_1,(x,y))\oplus (G_2,(x,y))$ where $G_1=(\{x,y\},\{xy\})$ and $G_2=(V,E\setminus\{xy\})$.

\begin{theorem}~\cite{VTL82}\label{th:SP-recognition}
The problem of testing whether a graph $G$ is series-parallel can be solved in linear time. Moreover if $G$ is a biconnected series-parallel graph, then a \SPtree{} of $(G,(x,y)),$  where $xy$ is an edge, can be build in linear time.
\end{theorem}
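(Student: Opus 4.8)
The plan is to characterize series-parallel graphs by a local \emph{reduction procedure} and to build the \SPtree{} by recording the reductions in reverse. I would work with multigraphs and use two rules: a \emph{parallel reduction} replaces two edges sharing both endpoints $a,b$ by a single edge $ab$, and a \emph{series reduction} smooths a non-terminal vertex $v$ of degree exactly $2$ with neighbours $a,b$, replacing its two incident edges by a single edge $ab$. The central claim is that a two-terminal graph $(G,(x,y))$ is series-parallel if and only if repeated application of these rules reduces $G$ to the single edge $xy$. For the general recognition statement I would first compute the biconnected components by a depth-first search in linear time and, invoking Theorem~\ref{th:SP-2terminal} to fix any edge of a biconnected block as its terminal pair, reduce testing $G$ to running the two-terminal procedure on each block (plus a structural check that the blocks assemble into a valid two-terminal expression).

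For correctness I would argue both directions by induction on the size. If $(G,(x,y))$ is series-parallel, then its last composition is either parallel, exposing two subgraphs between $x$ and $y$ whose innermost edges form a parallel pair, or series, exposing a non-terminal vertex that a series reduction smooths; in both cases the reduced graph is again two-terminal series-parallel with the same terminals, so induction applies. Conversely each rule is reversible — un-merging an edge or un-smoothing a vertex is exactly a parallel or a series composition — so a reduction sequence ending at $xy$ yields, read backwards, a construction of $(G,(x,y))$. The key structural fact making the procedure effective is that every biconnected two-terminal series-parallel multigraph that is not already the single edge $xy$ admits an applicable rule: taking a lowest internal node of its \SPtree{}, its two leaf children are either two parallel edges (if the node is $\oplus$) or a path through a vertex $u$ that is internal to that node's graph, hence never a terminal of any ancestor, so $u$ is a non-terminal degree-$2$ vertex of $G$ (if the node is $\otimes$). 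Consequently a run that gets \emph{stuck} with more than one edge certifies that $G$ is not series-parallel.

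The \SPtree{} is produced as a by-product of the reductions: each leaf is an original edge, each parallel reduction of $e_1,e_2$ into $ab$ creates a node labelled $(\oplus,(a,b))$ with children the subtrees of $e_1$ and $e_2$, and each series reduction at $v$ creates a node labelled $(\otimes,(a,b))$ with children the subtrees of the two smoothed edges; the root records the last reduction and matches the decomposition $\G=(G_1,(x,y))\oplus (G_2,(x,y))$ of Theorem~\ref{th:SP-2terminal}. Since every rule decreases the number of edges by exactly one, and a series-parallel graph satisfies $|E|\leqslant 2|V|-3$, there are only $O(n)$ reductions in total, so a linear running time follows provided each reduction costs amortized $O(1)$.

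I expect the main obstacle to be precisely this amortized bound, specifically detecting the parallel edges \emph{created} by series reductions: after smoothing $v$ into a new edge $ab$ one must recognize in constant time whether $ab$ already exists so as to trigger a parallel reduction, and rescanning an adjacency list each time is too costly. I would therefore maintain the current multigraph with doubly linked incidence lists, degree counters, and a worklist of reducible vertices, and resolve parallel detection with a per-vertex marking scheme (bucketing by the opposite endpoint) that is reset lazily, charging each mark to the edge that created it. With this bookkeeping every rule application, together with the discovery of the newly reducible configurations it creates, is handled in amortized constant time, yielding the claimed linear bound for both recognition and \SPtree{} construction.
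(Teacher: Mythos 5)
The paper offers no proof of this statement: it is imported verbatim from the cited reference of Valdes, Tarjan and Lawler, and is only ever invoked for biconnected graphs with an edge $xy$ fixed as the terminal pair. Your sketch reconstructs exactly the standard argument behind that reference (Duffin's characterization by series/parallel reductions, reversed to build the \SPtree{}, with worklist-and-marking bookkeeping for the linear bound), so there is nothing to compare it against in the paper itself; as a reconstruction of the cited result it is sound, and your identification of parallel-edge detection as the crux of the amortized $O(1)$ cost is the right one. The only place where you are thinner than the claim requires is the reduction of general recognition to the biconnected case: the ``structural check that the blocks assemble into a valid two-terminal expression'' hides a real condition, since a block of a $2$-terminal series-parallel graph must admit its incident cut vertices (and, for the extreme blocks, one global terminal) as a terminal pair, and not every vertex pair of a biconnected series-parallel graph can serve as terminals --- Theorem~\ref{th:SP-2terminal} only guarantees this for adjacent pairs. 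This does not affect the part of the theorem the paper actually uses, but a complete proof of the first sentence would need to spell out that block-cut-tree argument.
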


\subsection{Connected tree-decomposition and connected layouts}

A \emph{tree-decomposition} of a graph $G=(V,E)$ is pair $(T,\mathcal{F})$ where $T=(V_T,E_T)$ is a tree and $\mathcal{F}=\{X_t\subseteq V\mid t\in V_T\}$ such that
\begin{enumerate}
\item $\bigcup_{t\in V_T} X_t=V$;
\item for every edge $xy\in E,$  there exists a node $t\in V_T$ such that $\{x,y\}\subseteq X_t$;
\item for every vertex $x\in V,$  the set $\{t\in V_T\mid x\in X_t\}$ induces a connected subgraph of $T$.
\end{enumerate}
We refer to $V_T$ as the set of nodes of $T$ and the sets of $\mathcal{F}$ as the bags of $(T,\mathcal{F})$. The width of a tree-decomposition $(T,\mathcal{F})$ is $\mathsf{width}(T,\mathcal{F})=\max\{|X|-1\mid X\in \mathcal{F}\}$ and the treewidth of a graph $G$ is 
$$\mathsf{tw}(G)=\min\{\mathsf{width}(T,\mathcal{F})\mid (T,\mathcal{F}) \mbox{ is a tree-decomposition of } G\}$$

For two nodes $u$ and $v$ of $V_T,$  we define $P_{u,v}$ as the unique path between $u$ and $v$ in $T$ and 
 the set $V_{u,v}\subseteq V$ as $\bigcup_{t\in P_{uv}} X_t$.
A tree-decomposition $(T,\mathcal{F})$ is \emph{connected} if there exists a node $r\in V_T$ such that for every node $u\in V_T,$  the subgraph $G[V_{r,u}]$ is connected. Then the connected treewidth of a graph $G$ is 
$$\mathsf{ctw}(G)=\min\{\mathsf{width}(T,\mathcal{F})\mid (T,\mathcal{F}) \mbox{ is a connected tree-decomposition of } G\}$$

A layout $\sigma$ of $G$ is \emph{connected} if for every $i\in[n],$  the subgraph $G[\sigma_{\leq i}]$ is connected. We let $\mathcal{L}^c(G)$ be the set of connected layouts of a graph $G$.
For every 
$v\in V,$  we define the \emph{supporting set} of $v$ as
\begin{eqnarray*}
S_{\sigma}(v) = \big\{x\in V(G)\mid \sigma(x)<\sigma(v)~\mbox{and there exists a $(x,v)$-path $P$ such that} \\
 \mbox{\hfill every internal vertex $y$ of $P$ satisfies $\sigma(y)>\sigma(v)$}\}.
\end{eqnarray*}
We  set $\mathsf{cost}(G,\sigma)=\max \{|S_{\sigma}(v)|\mid v\in V\}$. The \emph{tree vertex separation number} of a graph is defined as 
$$\mathsf{vs}(G)=\min\{\mathsf{cost}(G,\sigma)\mid \sigma\in \mathcal{L}(G)\}.$$ 
When restricting to the set of connected  layouts, we obtain the \emph{connected tree vertex separation number} 
$$\mathsf{cvs}(G)= \min\{\mathsf{cost}(G,\sigma)\mid \sigma\in \mathcal{L}^{c}(G)\}.$$

\begin{theorem}[{\rm \cite{APT19}}] \label{th:equiv} 
For every graph $G=(V,E),$   we have $\ctw(G)=\cvs(G)$.
\end{theorem}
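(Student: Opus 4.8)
The plan is to establish the two inequalities $\ctw(G)\leq\cvs(G)$ and $\cvs(G)\leq\ctw(G)$ separately. In each direction I would convert an optimal witness for one parameter into a witness for the other, following the blueprint of the classical equivalence $\tw(G)=\mathsf{vs}(G)$, but carrying along the connectivity constraint in addition to the usual width/cost bookkeeping. The whole difficulty is that this extra constraint must be preserved by both constructions.

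For $\ctw(G)\leq\cvs(G)$, I would start from a connected layout $\sigma\in\mathcal{L}^c(G)$ with $\cost(G,\sigma)=\cvs(G)$ and build an elimination-type tree-decomposition $(T,\mathcal{F})$: the nodes of $T$ are the vertices of $G$, the bag of $v$ is $B_v=\{v\}\cup S_\sigma(v)$, the root is $r=\sigma^{-1}(1)$, and the parent of any $v\neq r$ is the $\sigma$-largest element of $S_\sigma(v)$. Write $A_v$ for the set containing $v$ and all its ancestors in $T$. The first step is a propagation lemma: letting $C_v$ be the connected component of $v$ in $G[\sigma_{\geq\sigma(v)}]$, one checks that $S_\sigma(v)=N(C_v)\cap\sigma_{<\sigma(v)}$, and that if $x\in S_\sigma(v)$ with $x\neq p(v)$ then $C_v\subseteq C_{p(v)}$ and hence $x\in S_\sigma(p(v))$. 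Iterating this yields $S_\sigma(v)\subseteq A_v\setminus\{v\}$, which at once gives the three tree-decomposition axioms, the width bound $\max_v|B_v|-1=\cost(G,\sigma)$, and the identity $V_{r,u}=A_u$. Connectivity of the decomposition then reduces to showing $G[A_u]$ is connected, and this is exactly where the connectivity of $\sigma$ enters: since $G[\sigma_{\leq\sigma(u)}]$ is connected, $u$ has a neighbour $w<_\sigma u$; such a $w$ lies in $S_\sigma(u)\subseteq A_u$, so $u$ attaches to the inductively connected subgraph $G[A_u\setminus\{u\}]=G[A_{p(u)}]$.

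For $\cvs(G)\leq\ctw(G)$, I would take a connected tree-decomposition $(T,\mathcal{F})$ rooted at $r$ of width $\ctw(G)$ and read off a layout. Process the nodes of $T$ in depth-first pre-order from $r$, and when processing $t_j$ append to $\sigma$ the vertices first appearing in $X_{t_j}$, ordered by a breadth-first exploration of $G[\bigcup_{i\leq j}X_{t_i}]$ started from the already-placed vertices. Because every pre-order prefix $T'$ is a connected subtree containing the full root-path of each of its nodes, the set of vertices placed after $j$ steps equals $\bigcup_{u\in T'}V_{r,u}$, a union of connected subgraphs all sharing $G[X_r]$, hence connected; the within-bag breadth-first ordering then upgrades this to connectivity of every prefix $\sigma_{\leq i}$, so $\sigma\in\mathcal{L}^c(G)$. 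For the cost, the first appearance of $v$ occurs at its topmost bag $t=\mathrm{top}(v)$, and the standard separator argument for tree-decompositions gives $S_\sigma(v)\subseteq X_t\setminus\{v\}$, whence $\cost(G,\sigma)\leq\max_t|X_t|-1=\ctw(G)$.

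The main obstacle is the simultaneous control of connectivity and width, which has no analogue in the classical $\tw=\mathsf{vs}$ argument. In the first direction the crux is the propagation lemma $S_\sigma(v)\subseteq A_v\setminus\{v\}$ combined with the observation that a connected layout forces each vertex to have an earlier neighbour, so that the root-to-$u$ bag unions remain connected; in the second direction it is the interplay between the depth-first traversal of $T$ (needed for the separator bound) and the within-bag breadth-first ordering (needed for prefix-connectivity of $\sigma$). I expect the most delicate point to be checking that these two orderings are compatible, i.e. that the connectivity-driven tie-breaking inside a bag does not destroy the containment $S_\sigma(v)\subseteq X_{\mathrm{top}(v)}$, since all vertices sharing a top node already lie in the same bag.
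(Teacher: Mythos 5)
The paper does not prove Theorem~\ref{th:equiv}: it is imported verbatim from \cite{APT19}, so there is no in-paper argument to compare yours against. That said, your proof is correct and is the natural adaptation of the classical $\tw=\mathsf{vs}$ correspondence. In the direction $\ctw(G)\leq\cvs(G)$, your propagation lemma is right (the identity $S_\sigma(v)=N(C_v)\cap\sigma_{<\sigma(v)}$ holds, and choosing $p(v)$ as the $\sigma$-largest element of $S_\sigma(v)$ is exactly what makes $S_\sigma(v)\setminus\{p(v)\}\subseteq S_\sigma(p(v))$ go through, since it guarantees $\sigma(x)<\sigma(p(v))$ for the remaining elements), and connectivity of the layout is what guarantees both that $S_\sigma(v)\neq\emptyset$ for $v\neq r$ (so the parent map is well defined) and that $G[A_u]$ is connected. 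In the converse direction, the point you flag as delicate is in fact harmless, for the reason you yourself give: the containment $S_\sigma(v)\subseteq X_{\mathrm{top}(v)}$ depends only on the bags being introduced in DFS pre-order (a vertex placed before $v$ cannot have its top bag strictly below $\mathrm{top}(v)$, and any path leaving the set of vertices confined to the subtree rooted at $\mathrm{top}(v)$ must do so through $X_{\mathrm{top}(v)}$), not on the within-bag tie-breaking, so the BFS ordering inside each bag can be chosen freely to secure prefix-connectivity. The only caveat worth recording is that the whole statement implicitly assumes $G$ connected (otherwise $\mathcal{L}^c(G)=\emptyset$), and in the second direction one should normalise $X_r\neq\emptyset$ so that the sets $V_{r,u}$ pairwise intersect; both are trivial in the setting of the paper.
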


Notice that if in the above definitions we drop the connectivity demand from the considered layouts, we have that $\tw(G)=\mathsf{vs}(G)$ providing an alternative layout-definition of  the parameter of treewidth, as observed in~\cite{Arn85,DKT97}.
It is known that deciding whether $\tw(G)\leq k$ is an {\sf NP}-complete problem \cite{ACP87}. An easy reduction of treewidth to connected treewidth is the following: consider a graph $G,$  add a vertex $v_{\rm new},$  and make $v_{\rm new}$ adjacent to all the vertices of $G$. We call the new graph $G^+$.
It follows, as a direct consequence of the layout definitions, that 
$\ctw(G^+) = \tw(G)+1$. We conclude to the following.

\begin{theorem}
 \label{th_eqdcoml}
The problem of deciding, given a graph $G$ and an integer $k,$  whether $\ctw(G)\leq k$ is {\sf NP}-complete.
\end{theorem}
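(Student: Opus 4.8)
The plan is to show membership in {\sf NP} and {\sf NP}-hardness separately.

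For membership, I would use the layout characterisation. By Theorem~\ref{th:equiv}, $\ctw(G)=\cvs(G)$, so $\ctw(G)\leqslant k$ holds if and only if $G$ has a connected layout $\sigma\in\mathcal{L}^c(G)$ with $\cost(G,\sigma)\leqslant k$. Such a $\sigma$ is just a permutation of $V$, i.e.\ a certificate of polynomial size, and given $\sigma$ I can verify in polynomial time that it is connected (test connectivity of every prefix $G[\sigma_{\leq i}]$) and compute $\cost(G,\sigma)$. For the latter, each supporting set $S_\sigma(v)$ is, for instance, the set of lower-indexed vertices having a neighbour in the connected component of $v$ in $G[\{v\}\cup\sigma_{>\sigma(v)}]$, which is a routine reachability computation. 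Hence the problem lies in {\sf NP}.

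For hardness, I would give a Karp reduction from the {\sf NP}-complete problem of deciding whether $\tw(G)\leqslant k$. Given an instance $(G,k)$, I construct $G^+$ by adding a vertex $v_{\rm new}$ adjacent to all of $V(G)$ and output the instance $(G^+,k+1)$; this is clearly a polynomial-time transformation. Correctness is then immediate from the identity $\ctw(G^+)=\tw(G)+1$ recalled above: indeed $\tw(G)\leqslant k$ holds iff $\tw(G)+1\leqslant k+1$ holds iff $\ctw(G^+)\leqslant k+1$ holds. Combined with membership in {\sf NP}, this establishes {\sf NP}-completeness.

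The only real content is the identity $\ctw(G^+)=\tw(G)+1$, which is stated just before the theorem; let me indicate why it holds and where care is needed. For the upper bound $\cvs(G^+)\leqslant\mathsf{vs}(G)+1$, I would take an optimal $\mathsf{vs}$-layout $\sigma$ of $G$ and prepend $v_{\rm new}$: since $v_{\rm new}$ is universal, every prefix of the new layout is connected, and every supporting set gains exactly the single vertex $v_{\rm new}$, so the cost rises by exactly one. The lower bound $\ctw(G^+)\geqslant\tw(G)+1$ is the delicate direction; the cleanest route is to note that a connected tree-decomposition is in particular a tree-decomposition, whence $\ctw(G^+)\geqslant\tw(G^+)$, together with the fact that adding a universal vertex increases treewidth by exactly one. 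In layout terms one checks directly that any layout of $G^+$ must pay $\tw(G)+1$: either some vertex of $G$ lies after $v_{\rm new}$, inheriting its $G$-support together with $v_{\rm new}$, or every heavy vertex precedes $v_{\rm new}$, in which case $v_{\rm new}$ supports all earlier vertices and already costs $\tw(G)+1$. I expect this lower-bound bookkeeping to be the main, though still routine, obstacle.
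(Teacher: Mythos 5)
Your proof is correct and follows exactly the paper's approach: the paper establishes hardness by the same universal-vertex reduction from treewidth, asserting $\ctw(G^+)=\tw(G)+1$ as ``a direct consequence of the layout definitions.'' Your write-up simply fills in the details the paper leaves implicit (the {\sf NP} membership check via Theorem~\ref{th:equiv} and the two-case layout argument for the lower bound $\ctw(G^+)\geq \tw(G)+1$), and those details are sound.
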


\subsection{Rooted graphs and extended graphs}

\paragraph{Rooted graphs.} A \emph{rooted graph} is a pair $(G,\R)$ where $G=(V,E)$ is a graph and $\R\subseteq V$ is a subset of vertices, hereafter called \emph{roots}. 
The definition of a rooted graph naturally extends to two-terminal graphs. If $\G=(G,(x,y))$ is a series-parallel graph and $\R\subseteq V$ a set of roots, then the corresponding rooted two-terminal graph will be denoted by $(\G,\R)$. Observe that the set of roots $\R$ may be different from the terminal pair $(x,y)$.

A rooted graph $(G,\R)$ is connected if and only if every connected component of $G$ contains a root from $\R$. A \emph{rooted layout} of $(G,\R)$ is a layout $\sigma$ of $G$ such that $\sigma_{\leq |\R|}=\R$. Based on this, the notion of connected layout naturally extends to rooted graphs and rooted layouts as follows. A rooted layout $\sigma$ of $(G,\R)$ is connected if for every $i,$  $|\R|<i\leq n,$  $(G[\sigma_{\leq i}],\R)$ is a connected rooted graph.

\paragraph{Extended graphs.} Let $\G=(G,(x,y))$ be a $2$-terminal graph where $G=(V,E)$. Suppose that $F\subseteq {V\choose 2 }\setminus E(G),$  i.e., $F$  is a subset of non-edges of $G$. We define the \emph{extended} graph $\G^{+F}$ as the $2$-terminal graph $(G^{+F},(x,y))$ where $G^{+F}=(V,E\cup F)$. The
edges in $E$ are called \emph{solid} edges, while the edges of $F$ are called \emph{fictive} edges. Hereafter the $2$-terminal graph $\G$ (and the graph $G$ respectively) is called the \emph{solid graph} of $\G^{+F}$ (and $G^{+F}$ respectively). 

We define the connected components of the extended graph $\G^{+F}$ as the connected components of its solid graph $\G$. In particular, $\G^{+F}$ is connected if and only if $\G$ is connected. Thereby, we say a vertex is isolated in $\G^{+F}$ if it is isolated in $\G$. Likewise, we say that two vertices are adjacent in $\G^{+F}$ if they are in $\G$. Therefore the neighbourhood $N(v)$ of a vertex $v$ in $\G^{+F}$ is its neighbourhood in $\G$, while its extended neighbourhood $N^{+F}(v)$ also comprises every vertex $u$ such that $uv\in F$ (we say that $u$ is an \emph{extended-neighbour} of $v$). The purpose of introducing fictive edges is {\sl not} to augment the connectivity of the solid graph but to keep track of cumulative cost in the recursive calls of the dynamic programming algorithm while computing the connected treewidth of a series-parallel graph. 

This connectivity definition of an extended graph also transfers to (rooted) layouts. More precisely, if $\R\subseteq V$ is a set of roots and $\G^{+F}$ an extended graph, then $(\G^{+F},\R)$ is a \emph{rooted extended graph}.  A layout $\sigma$ of $(\G^{+F},\R)$ is connected if and only if it is a connected layout of $(\G,\R)$. Observe that if $\G^{+F}$ is not connected, then a connected layout of $\G^{+F}$ exists if and only if every connected component of $\G^{+F}$ contains a root from $\R$. 

An \emph{extended path} of $\G^{+F}$ is a path that may contain a fictive edge of $F,$  while a \emph{solid path} in $\G^{+F}$ is a path of $\G,$  that is, a path that only contains solid edges. In a layout $\sigma$ of the extended graph $\G^{+F}$, the \emph{extended supporting set} of  vertex $v$ is:
\begin{eqnarray*}
S^{(e)}_{\sigma}(v) = \big\{x\in V\mid \sigma(x)<\sigma(v)~\mbox{and there exists an extended $(x,v)$-path $P$ such that }\\
 \mbox{\hfill every internal vertex $y$ of $P$ belongs to $\sigma_{>i}$}\}.
\end{eqnarray*}

\noindent
The definitions of the extended cost $\ecost(\G^{+F},\sigma)$ and of the extended connected tree vertex separation number $\ecvs(\G^{+F})$ follow accordingly: 
$$\mathsf{ecvs}(\G^{+F})= \min\{\mathsf{ecost}(\G^{+F},\sigma)\mid \sigma\in \mathcal{L}^{c}(\G^{+F})\},$$
 where $\ecost(\G^{+F},\sigma)=\max\{|S^{(e)}_{\sigma}(v)|\mid v\in V\}$.

\section{A dynamic programming algorithm}

\label{adpal}

Before describing the dynamic programming algorithm to compute the connected treewidth of treewidth at most $2$ graphs, we examine the case of biconnected series-parallel graphs. More precisely, we show how to derive the connected treewidth of a series-parallel graph depending on whether it results from a series or from a parallel composition.  To handle these recursion steps, we have to manipulate extended series-parallel graphs.

\subsection{Biconnected series-parallel graphs}

In this section, we let $\G=(G,(x,y))$ be a series-parallel graph such that $G=(V,E)$. We suppose that $G$ results from the series or the parallel composition of $G_1=(V_1,E_1)$ and $G_2=(V_2,E_2)$. When dealing with the $2$-terminal graphs $\G_1$ and $\G_2,$  the terminal pairs will be clear from the context.

\subsubsection{Parallel composition}

\begin{lemma} \label{lem:parallel-no-fictive}
Let $(\G^{+\emptyset},\R)$ be a rooted extended graph such that $\R=\{x,y\}$ and 
$\G=\G_1\oplus \G_2$ with $\G_1=(G_1,(x,y))$ and $\G_2=(G_2,(x,y))$  (see Figure~\ref{fig:parallel}).
Then

$$\ecvs(\G^{+\emptyset},\R)=\max \left\{\ecvs(\G_1^{+\emptyset},\R), \ecvs(\G_2^{+\emptyset},\R) \right\}.$$
\end{lemma}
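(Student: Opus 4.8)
The plan is to establish the two inequalities $\ecvs(\G^{+\emptyset},\R)\le \max\{\ecvs(\G_1^{+\emptyset},\R),\ecvs(\G_2^{+\emptyset},\R)\}$ and $\ecvs(\G^{+\emptyset},\R)\ge \max\{\ecvs(\G_1^{+\emptyset},\R),\ecvs(\G_2^{+\emptyset},\R)\}$ separately. Since $F=\emptyset$, the extended supporting sets coincide with the ordinary ones, so throughout I work with ordinary paths and with $\ecost$ equal to the ordinary cost. Write $V_1^\circ=V_1\setminus\{x,y\}$ and $V_2^\circ=V_2\setminus\{x,y\}$; by the definition of parallel composition $V_1\cap V_2=\{x,y\}$, so $V_1^\circ$ and $V_2^\circ$ are disjoint, and — the structural fact I will use repeatedly — every vertex of $V_1^\circ$ has all its neighbours in $V_1$ and every vertex of $V_2^\circ$ has all its neighbours in $V_2$. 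Also, since $\R=\{x,y\}=\sigma_{\le 2}$ for every rooted layout $\sigma$, the two terminals always occupy the first two positions.

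For the inequality $\ge$, I start from an optimal connected rooted layout $\sigma$ of $(\G^{+\emptyset},\R)$ and claim that its restriction $\sigma[V_1]$ is a connected rooted layout of $(\G_1^{+\emptyset},\R)$ of no larger cost (and symmetrically for $V_2$). Connectivity I would prove by induction on the prefixes of $\sigma[V_1]$: when a vertex $w\in V_1^\circ$ is appended, connectivity of $\sigma$ forces $w$ to be joined inside $G[\sigma_{\le \sigma(w)}]$ to a root, and since every neighbour of $w$ lies in $V_1$, the first edge of such a connecting path already links $w$ to an earlier vertex of $V_1$; hence $w$ attaches to the inductively rooted-connected prefix of $V_1$ already placed. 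For the cost, I would observe that any witnessing path for $u\in S_{\sigma[V_1]}(v)$ lives in $G_1\subseteq G$ and that its internal vertices, being $V_1$-vertices above $v$ in $\sigma[V_1]$, are also above $v$ in $\sigma$; thus $S_{\sigma[V_1]}(v)\subseteq S_{\sigma}(v)$ and $\ecost(\G_1^{+\emptyset},\sigma[V_1])\le \ecost(\G^{+\emptyset},\sigma)$.

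For the inequality $\le$, I take optimal connected rooted layouts $\sigma_1,\sigma_2$ of $(\G_1^{+\emptyset},\R)$ and $(\G_2^{+\emptyset},\R)$, arranged so that they agree on the order of the two roots, and form $\sigma=\sigma_1\odot\sigma_2[V_2^\circ]$, i.e. I keep $V_1$ in the order of $\sigma_1$ and append $V_2^\circ$ in the order of $\sigma_2$. Connectivity of $\sigma$ is inherited: a prefix contained in $V_1$ is a prefix of $\sigma_1$ (adding possibly the edge $xy$ cannot disconnect), while a longer prefix is the union of all of $G_1$ with a rooted-connected prefix of $\sigma_2$, the two sharing $x,y$. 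The heart of the argument — and the step I expect to be the main obstacle — is the cost bound, where I must show that the supporting set of a vertex never mixes the two sides. Concretely, for $v\in V_1^\circ$ any admissible path has all its internal vertices above $v$, hence cannot use the terminals $x,y$ (which sit at positions $1,2$, below $v$); since leaving $V_1$ requires passing through a terminal, such a path stays in $G_1$, giving $S_\sigma(v)\subseteq S_{\sigma_1}(v)$. Symmetrically, for $v\in V_2^\circ$ any vertex below $v$ reachable by an admissible path lies in $V_2$, so $S_\sigma(v)\subseteq S_{\sigma_2}(v)$, while the two terminals themselves have supporting sets of size at most one. Taking maxima yields $\ecost(\G^{+\emptyset},\sigma)\le \max\{\ecost(\G_1^{+\emptyset},\sigma_1),\ecost(\G_2^{+\emptyset},\sigma_2)\}$, the desired bound. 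The delicate point, and the reason placing the roots at the front is essential, is precisely that the ``internal vertices above $v$'' condition forbids routing a support path through a terminal, and thus across the gluing.
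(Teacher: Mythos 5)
Your proof is correct and follows essentially the same route as the paper's: both arguments rest on the fact that the terminals $\{x,y\}$ separate $V_1^\circ$ from $V_2^\circ$ and sit at the front of every rooted layout, so supporting sets never cross the gluing; the lower bound comes from restricting an optimal layout of $\G$ to each side, and the upper bound from concatenating optimal layouts of the two sides. The only cosmetic difference is that the paper first rearranges the optimal layout $\sigma^*$ into the concatenated form $\sigma^*[V_1]\odot\sigma^*[V_2\setminus\{x,y\}]$ and argues cost preservation, whereas you split the argument cleanly into the two inequalities.
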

\begin{proof}
Let $\sigma^*\in\mathcal{L}^c(\G^{+\emptyset},\R)$ be a connected layout of minimum cost. From $\sigma^*,$  we define a layout $\sigma$ of $(\G^{+\emptyset},\R)$ as follows: 
$\sigma=\sigma^*[V_1]\odot \sigma^*[V_2\setminus\{x,y\}]$ (see Figure~\ref{fig:layout-rearrangement-parallel}).

\begin{figure}[h]
\centering
\begin{tikzpicture}[scale=0.9]
   \tikzstyle{vertex}=[fill,circle,minimum size=0.2cm,inner sep=0pt]
   \tikzstyle{vertexRoot}=[fill,red,rectangle,minimum size=0.2cm,inner sep=0pt]
   \tikzstyle{vertex2}=[draw,circle,minimum size=0.2cm,inner sep=0pt]
   \tikzstyle{diamondb}=[fill=red!20,draw,diamond,minimum size=0.25cm,inner sep=0pt]
   \tikzstyle{diamondw}=[fill=blue!20,draw,diamond,minimum size=0.25cm,inner sep=0pt]

\draw[thin,->,arrows={-latex'}] (-5,0) -- (5.5,0);
\node[anchor=east] at (-5,0) {$\sigma^*$};

\foreach \i/\posnd in {1/-4.5,2/-4}
{\node[vertexRoot] (nd\i) at (\posnd,0){};}

\node[anchor=north] at (-4.5,0.6) {$x$};
\node[anchor=north] at (-4,0.6) {$y$};

\foreach \i/\posnd in {3/-3.5,4/-3,5/-2.5,6/-2,9/-0.5,10/0,15/2.5,16/3,18/4}
{\node[diamondb] (nd\i) at (\posnd,0){};}
\foreach \i/\posnd in {7/-1.5,8/-1,11/0.5,12/1,13/1.5,14/2,17/3.5,19/4.5,20/5}
{\node[diamondw] (nd\i) at (\posnd,0){};}

\draw[thin,->,arrows={-latex'}] (-5,-1.5) -- (5.5,-1.5);
\node[anchor=east] at (-5,-1.5) {$\sigma$};

\foreach \i/\posnd in {1/-4.5,2/-4}
{\node[vertexRoot] (nds\i) at (\posnd,-1.5){};}

\node[anchor=north] at (-4.5,-1.7) {$x$};
\node[anchor=north] at (-4,-1.7) {$y$};

\foreach \i/\posnd in {3/-3.5,4/-3,5/-2.5,6/-2,9/-1.5,10/-1,15/-0.5,16/0,18/0.5}
{\node[diamondb] (nds\i) at (\posnd,-1.5){};}
\foreach \i/\posnd in {7/1,8/1.5,11/2, 12/2.5, 13/3,14/3.5,17/4,19/4.5,20/5}
{\node[diamondw] (nds\i) at (\posnd,-1.5){};}

\foreach \i in {1,2,3,4,5,6,7,8,9,10,11,12,13,14,15,16,17,18,19,20}
{\draw[gray!60,thick,dotted] (nd\i) -- (nds\i);
}
\end{tikzpicture}
\caption{Rearranging a layout of minimum cost of an extended graph $\G=\G_1\oplus\G_2$ without a fictive edge. Red vertices belongs to $V_1\setminus\{x,y\}$ and blue vertices belong to $V_2\setminus\{x,y\}$. \label{fig:layout-rearrangement-parallel}}
\end{figure}
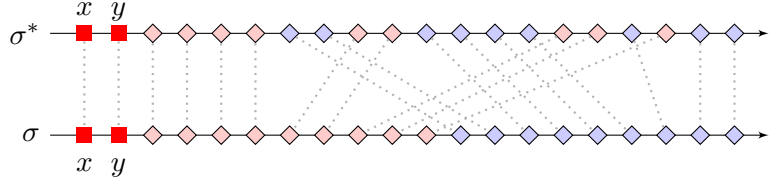

Observe that as $\{x,y\}$ separates $V_1$ from $V_2$ and as $\sigma^*\in\mathcal{L}^c(\G^{+\emptyset},\R)$, it follows that $\sigma\in\mathcal{L}^c(\G^{+\emptyset},\R),$  $\sigma_1=\sigma[V_1]\in\mathcal{L}^c(\G_1^{+\emptyset},\R)$ and $\sigma_2=\sigma[V_2]\in\mathcal{L}^c(\G_2^{+\emptyset},\R)$.
Moreover, $\{x,y\}$ separating $V_1$ from $V_2$ implies that for every vertex $v_1\in V_1\setminus\{x,y\},$  we have $S^{(e)}_{\sigma^*}(v_1)=S^{(e)}_{\sigma_1}(v_1)\subseteq V_1$ and that for every vertex $v_2\in V_2\setminus\{x,y\},$  we have $S^{(e)}_{\sigma^*}(v_2)=S^{(e)}_{\sigma_2}(v_2)\subseteq V_2$. It follows that for every vertex $v\in V\setminus\{x,y\},$  $S^{(e)}_{\sigma^*}(v)=S^{(e)}_{\sigma}(v),$  implying that $\ecost(\G^{+\emptyset},\sigma)=\ecost(\G^{+\emptyset},\sigma^*)$. 



\begin{figure}[htbh]
\centering
\begin{tikzpicture}[scale=0.9]
   \tikzstyle{vertex}=[fill,circle,minimum size=0.2cm,inner sep=0pt]
   \tikzstyle{vertexRoot}=[fill,red,rectangle,minimum size=0.15cm,inner sep=0pt]

\filldraw[fill=red!20]
  (0,0) .. controls (-1.5,1) .. (0,2) .. controls (-0.5,1) .. (0,0)
  -- cycle ;
\node[] at (-0.7,1) {$G_1$};

\filldraw[fill=blue!20]
  (0,0) .. controls (1.5,1) .. (0,2) .. controls (0.5,1) .. (0,0)
  -- cycle ;
\node[] at (0.7,1) {$G_2$};

\node[vertexRoot]  (y) at (0,2){};
\node[vertexRoot] (x) at (0,0){};

\node[left] at (y) {$y$};
\node[below] at (x) {$x$};

\draw[black,very thick,->] (2.25,1) -- (2.75,1) ;

\filldraw[fill=red!20]
  (5,0) .. controls (3.5,1) .. (5,2) .. controls (4.5,1) .. (5,0)
  -- cycle ;
\node[] at (4.3,1) {$G_1$};

\node[vertexRoot]  (yy) at (5,2){};
\node[vertexRoot] (xx) at (5,0){};

\node[left] at (yy) {$y$};
\node[below] at (xx) {$x$};

\draw[black,very thick] (6.3,1) -- (6.7,1) ;
\draw[black,very thick] (6.5,1.2) -- (6.5,0.8) ;


\filldraw[fill=blue!20]
  (8,0) .. controls (9.5,1) .. (8,2) .. controls (8.5,1) .. (8,0)
  -- cycle ;
\node[] at (8.7,1) {$G_2$};

\node[vertexRoot]  (yyy) at (8,2){};
\node[vertexRoot] (xxx) at (8,0){};

\node[left] at (yyy) {$y$};
\node[below] at (xxx) {$x$};






%
%

\end{tikzpicture}
\caption{Parallel decomposition of an extended graph without fictive edges. If $\G=\G_1\oplus\G_2$, then we have $\ecvs(\G^{+\emptyset},\R)=\max \left\{\ecvs(\G_1^{+\emptyset},\R), \ecvs(\G_2^{+\emptyset},\R) \right\}.$
\label{fig:parallel}}
\end{figure}
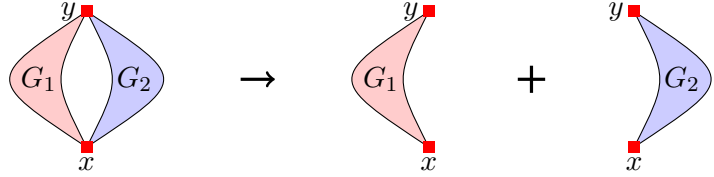

To conclude, we observe that if $\tau_1\in \mathcal{L}^c(\G_1^{+\emptyset},\R)$ and $\tau _2\in\mathcal{L}^c(\G_2^{+\emptyset},\R),$  then $\tau=\tau_1\odot \tau_2[V_2\setminus\{x,y\}]$ belongs to $\mathcal{L}^c(\G^{+\emptyset},\R)$ and that $\ecost(\G^{+\emptyset},\tau)=\max\{\ecost(\G_1^{+\emptyset},\tau_1),\ecost(\G_2^{+\emptyset},\tau_2)\}$. So the optimality of $\sigma^*$ and $\sigma$ imply that $\ecvs(\G^{+\emptyset},\R)=\max \left\{\ecvs(\G_1^{+\emptyset},\R), \ecvs(\G_2^{+\emptyset},\R) \right\}$.
\end{proof}

\begin{lemma} \label{lem_parallel_fictive}
Let $(\G^{+F},\R)$ be a rooted extended graph such that $\R=\{x,r_1\dots, r_k\}$ (with $k>0$), $r_1,\dots r_k$ are isolated vertices (in $\G$), $F=\{yr_i\mid i\in[k]\}$ and 
$(G[V\setminus\{r_1,\dots, r_k\}],(x,y))=\G_1\oplus \G_2$ with $\G_1=(G_1,(x,y))$ and $\G_2=(G_2,(x,y))$
(see Figure~\ref{fig:parallel-fictive1}).
Then

$$\ecvs(\G^{+F},\R)
=\min \left\{
\begin{array}{c} 
\max \left\{\ecvs(\G[V_1\cup\R]^{+F\cup\{xy\}},\R), \ecvs(\G_2^{+\emptyset},\{x,y\}) \right\}\\ 
\\
\max \left\{\ecvs(\G[V_2\cup\R]^{+F\cup\{xy\}},\R), \ecvs(\G_1^{+\emptyset},\{x,y\}) \right\} \\
\end{array}\right\}.
$$

\end{lemma}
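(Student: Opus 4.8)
The plan is to prove the two inequalities separately, reading the right-hand side as the two symmetric ways to ``peel off'' one of the two parallel parts: in each option one part is retained together with the root context (the isolated roots $r_1,\dots,r_k$, their fictive edges $yr_i$, and the extra fictive edge $xy$ that records that $x$ and $y$ are joined through the peeled part), while the other part is treated as a fresh series-parallel graph with $\{x,y\}$ promoted to roots. The single structural fact driving everything is that $\{x,y\}$ separates $V_1\setminus\{x,y\}$ from $V_2\setminus\{x,y\}$; since $x\in\R$ is always among the first $|\R|$ vertices of any rooted layout, a supporting path crossing from one part to the other must enter through $y$, hence such crossings can only occur for vertices placed before $y$.

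For the upper bound I would, for the first option, take an optimal connected layout $\sigma_1$ of $(\G[V_1\cup\R]^{+F\cup\{xy\}},\R)$ and an optimal one $\sigma_2$ of $(\G_2^{+\emptyset},\{x,y\})$, and form $\sigma=\sigma_1\odot\sigma_2[V_2\setminus\{x,y\}]$, exactly as in Lemma~\ref{lem:parallel-no-fictive}. Connectivity follows because $\sigma_1$ already reveals $G_1$ connectedly from $x$ (so $x,y$ are present before the block $V_2\setminus\{x,y\}$ is appended) and $\sigma_2$ reveals $G_2$ connectedly from $\{x,y\}$. For the cost, the separator confines $S^{(e)}_\sigma(v)$ to $V_2$ when $v\in V_2\setminus\{x,y\}$, where it coincides with $S^{(e)}_{\sigma_2}(v)$; and it confines $S^{(e)}_\sigma(v)$ to $V_1\cup\R$ when $v\in V_1\cup\R$, because every supporting path that dips into $V_2$ must use both $x$ and $y$, and since the whole block $V_2\setminus\{x,y\}$ is placed after $y$ such a dip can be replaced by the fictive edge $xy$ without changing which vertices are supported. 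Thus $\ecost(\G^{+F},\sigma)\le\max\{\ecvs(\G[V_1\cup\R]^{+F\cup\{xy\}},\R),\ecvs(\G_2^{+\emptyset},\{x,y\})\}$; the symmetric construction handles the second option, and taking the minimum gives ``$\le$''.

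For the lower bound I would start from an optimal connected layout $\sigma^*$ and let $A_1$ (resp.\ $A_2$) be the set of vertices of $V_1\setminus\{x,y\}$ (resp.\ $V_2\setminus\{x,y\}$) placed before $y$. If $A_2=\emptyset$, then $\sigma^*[V_2]$ is already a valid rooted layout of $(\G_2^{+\emptyset},\{x,y\})$ (both $x$ and $y$ precede all of $V_2\setminus\{x,y\}$), and its cost is at most $\ecost(\G^{+F},\sigma^*)$; moreover, because every interior vertex of $G_2$ now lies after $y$, any use of the fictive edge $xy$ inside the restricted layout $\sigma^*[V_1\cup\R]$ of $(\G[V_1\cup\R]^{+F\cup\{xy\}},\R)$ is faithfully realised in $\sigma^*$ by an $x$-$y$ path through $G_2$ whose interior lies after the vertex in question, so this restriction also has cost at most $\ecost(\G^{+F},\sigma^*)$. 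Hence the first option is at most $\ecost(\G^{+F},\sigma^*)=\ecvs(\G^{+F},\R)$. The case $A_1=\emptyset$ symmetrically bounds the second option, and either way ``$\ge$'' follows.

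The main obstacle is the remaining case $A_1\neq\emptyset$ and $A_2\neq\emptyset$, where both parts send vertices before $y$. Here neither restriction is immediately faithful: inside $\sigma^*[V_1\cup\R]$ the fictive edge $xy$ may wrongly add the terminal $x$ to the supporting set of a $V_1$-vertex placed before $y$ precisely when the already-revealed interior of $G_2$ separates $x$ from $y$ in $G_2$, and symmetrically on the other side; in addition, promoting $y$ ahead of a part's pre-$y$ interior perturbs the supporting sets inside that part. My plan is to show that one can always relocate the entire pre-$y$ block of one of the two parts to just after $y$ without increasing $\ecost$, thereby reducing to the already-treated case $A_i=\emptyset$. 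The key point enabling this exchange is that \emph{before} $y$ is revealed the two parts interact only through the fixed earliest root $x$, so moving one part's pre-$y$ vertices past $y$ can only delete cross-part vertices from the other part's supporting sets and keeps each relocated vertex's supporting set inside its own part. Turning this monotonicity into a rigorous, cost-non-increasing exchange for the correctly chosen side---so that the minimum of the two options is attained---is the technical heart of the proof.
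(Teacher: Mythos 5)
Your overall strategy matches the paper's: prove the easy inequality by concatenation, and for the hard direction rearrange an optimal layout $\sigma^*$ so that the entire interior of one of the two parallel parts is pushed past $y$, reducing to the case where one of the restrictions is faithful. However, you explicitly leave the central step unproven: in the case $A_1\neq\emptyset$ and $A_2\neq\emptyset$ you only state a \emph{plan} to ``relocate the entire pre-$y$ block of one of the two parts to just after $y$ without increasing $\ecost$,'' and you yourself call this the technical heart of the proof. That step is exactly where the content of the lemma lies, so the proposal is incomplete as a proof.

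Moreover, the monotonicity you invoke to justify the exchange is not correct as stated. You claim the relocation ``can only delete cross-part vertices from the other part's supporting sets and keeps each relocated vertex's supporting set inside its own part.'' It is true that after relocating $V_2\setminus\{x,y\}$ past $y$ each relocated vertex $v_2$ has $S^{(e)}(v_2)\subseteq V_2$, but its supporting set can \emph{gain} the vertex $y$ (since $y$ now precedes $v_2$ and there are $(y,v_2)$-paths in $G_2$ with late interiors), so the cost at $v_2$ does not obviously fail to increase. The paper closes this hole with two ingredients you are missing: (i) the side to relocate is not arbitrary but is the part \emph{not} containing the earliest-placed neighbour $v$ of $y$ in $\sigma^*$ (this is also what guarantees the rearranged layout stays connected, since $y$ retains a neighbour before it); and (ii) a replacement argument showing that whenever $y$ is newly gained by $S^{(e)}(v_2)$, the vertex $v\in V_1$ belonged to $S^{(e)}_{\sigma^*}(v_2)$ (extend the $(y,v_2)$-path by the edge $yv$) but not to the new supporting set, so $|S^{(e)}_{\sigma}(v_2)|\leq|S^{(e)}_{\sigma^*}(v_2)|$. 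Without identifying the correct side and supplying this one-for-one trade, the exchange argument, and hence the lower bound in the mixed case, does not go through.
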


%
\begin{proof}
We observe that $\G^{+F}$ contains $k$ isolated vertices, the root vertices $r_1,\dots, r_k,$  and a connected component resulting from a parallel composition. Let $\sigma^*\in\mathcal{L}^c(\G^{+F},\R)$ be a connected layout of minimum cost. Consider the neighbor $v$ of $y$ so that $\sigma^*(v)$ is minimized. By the connectivity of $\sigma^*,$  $\sigma^*(v)<\sigma^*(y)$. Suppose without loss of generality that $v\in V_1$. The case $v\in V_2$ is symmetric.
{Observe that $v$ can be the vertex $x,$  in which case it can arbitrarily be considered as a vertex of $V_1$ and $V_2$).}  
\xchange{
From $\sigma^*,$  we define the layout $\sigma_1=\sigma[V_1\cup \R]$  of $(\G[V_1\cup\R]^{+F\cup\{xy\}},\R)$ and the layout $\sigma_2=\sigma[V_2]$ of $(\G_2,\{x,y\})$,  where $\sigma=\langle r_1,\dots, r_k\rangle\odot \sigma^*[V_1]\odot \sigma^*[V_2\setminus\{x,y\}]$ is a layout of $(\G^{+F},\R)$  (see Figure~\ref{fig:layout-rearrangement-parallel}).
}



\begin{claim} \label{cl:parallel-fictive-1}
$\sigma_1\in\mathcal{L}^c(\G[V_1\cup\R]^{+F\cup\{xy\}},\R)$,
$\sigma_2\in\mathcal{L}^c(\G_2,\{x,y\})$ and 
$\sigma\in\mathcal{L}^c(\G^{+F},\R)$. 
\end{claim}
\begin{proofclaim}
Let $v_1$ be a vertex of $V_1$ distinct from $x$ and $y$. Observe that every neighbor of $v_1$ belongs to $V_1$. By the assumption above, we know that $y$ has a neighbor $v\in V_1$ prior to it in $\sigma^*$. As the relative ordering between vertices of $V_1$ is left unchanged, every vertex of $V_1\setminus\{x\}$ has a neighbor prior to it in $\sigma$ as well. 
\xchange{This implies that $\sigma_1\in\mathcal{L}^c(\G[V_1\cup\R]^{+F\cup\{xy\}},\R)$.}
Consider now a vertex $v_2\in V_2$ distinct from $x$ and $y$. As in the previous case,  every neighbor of $v_2$ belongs to $V_2$. As the relative ordering between vertices of $V_2$ has only been modified by moving $y$ ahead, $v_2$ has a neighbor prior to it in $\sigma$. 
\xchange{This implies that $\sigma_2\in\mathcal{L}^c(\G_2,\{x,y\})$ and that $\sigma\in\mathcal{L}^c(\G^{+F},\R)$.}
\end{proofclaim}

\begin{claim} \label{cl:parallel-fictive-2}
{$\ecost((\G^{+F},\R),\sigma)= \ecost((\G^{+F},\R),\sigma^*)$.}
\end{claim}

\begin{proofclaim}
We first consider a vertex $v_1\in V_1$. By construction, we have $\sigma^*[V_1]=\sigma[V_1]$ and for every vertex $v_2\in V_2\setminus\{x,y\},$  $\sigma(v_1)\leqslant \sigma(v_2)$. It follows that $S^{(e)}_{\sigma}(v_1)\subseteq V_1\cup\R$. Suppose that $\sigma(v_1)>\sigma(y)$. As $\{x,y\}$ separates $v_1$ from the vertices of $V_2\cup\{r_1,\dots, r_k\},$  we have $S^{(e)}_{\sigma}(v_1)=S^{(e)}_{\sigma^*}(v_1)$. Suppose that $\sigma(v_1)\leq\sigma(y)$. We observe that if for $i\in[k],$  $r_i\in S^{(e)}_{\sigma}(v_1),$  then there exists a $(v_1,y)$-path $P$ in $G_1$ such that every vertex $v\in P$ satisfies $\sigma(v_1)\leq \sigma(v)$. Similarly, if $\xchange{u}\in V_1$ belongs to $S^{(e)}_{\sigma}(v_1),$  then there exists a $(v_1,\xchange{u})$-path $P'$ in $G_1$ such that every vertex $v'\in P'$ distinct from \xchange{$u$} satisfies $\sigma(v_1)\leq \sigma(v')$. The existence of the paths $P$ and $P'$ is a consequence of the fact that $\{x,y\}$ separates the vertices of $V_1\setminus\{x,y\}$ from the rest of the graph. As $\sigma^*[V_1\cup\R]=\sigma[V_1\cup\R],$  we have $\sigma^*(v_1)\leq\sigma^*(v)$ and $\sigma^*(v_1)\leq\sigma^*(v')$. It follows that $r_i,u\in S^{(e)}_{\sigma^*}(v_1)$, implying \xchange{in turn} that $|S^{(e)}_{\sigma}(v_1)|\leq |S^{(e)}_{\sigma^*}(v_1)|$.

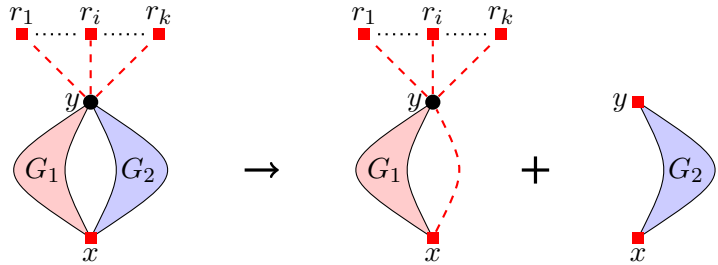
\begin{figure}[h]
\centering
\begin{tikzpicture}[scale=0.9]
   \tikzstyle{vertex}=[fill,circle,minimum size=0.2cm,inner sep=0pt]
   \tikzstyle{vertexRoot}=[fill,red,rectangle,minimum size=0.15cm,inner sep=0pt]

\filldraw[fill=red!20]
  (0,0) .. controls (-1.5,1) .. (0,2) .. controls (-0.5,1) .. (0,0)
  -- cycle ;
\node[] at (-0.7,1) {$G_1$};

\filldraw[fill=blue!20]
  (0,0) .. controls (1.5,1) .. (0,2) .. controls (0.5,1) .. (0,0)
  -- cycle ;
\node[] at (0.7,1) {$G_2$};

\node[vertex]  (y) at (0,2){};
\node[vertexRoot] (x) at (0,0){};

\foreach \k in {0,1,2}{
	\node[vertexRoot] (r\k) at (\k-1,3){};
	\draw[red,thick,dashed] (y) -- (r\k);
}

\node[above] at (r0) {$r_1$};
\node[above] at (r1) {$r_i$};
\node[above] at (r2) {$r_k$};
\node[left] at (y) {$y$};
\node[below] at (x) {$x$};

\draw[black,thick,dotted] (-0.8,3) -- (-0.2,3) ;
\draw[black,thick,dotted] (0.2,3) -- (0.8,3) ;

\draw[black,very thick,->] (2.25,1) -- (2.75,1) ;

\filldraw[fill=red!20]
  (5,0) .. controls (3.5,1) .. (5,2) .. controls (4.5,1) .. (5,0)
  -- cycle ;
\node[] at (4.3,1) {$G_1$};

\node[vertex]  (yy) at (5,2){};
\node[vertexRoot] (xx) at (5,0){};

\foreach \k in {0,1,2}{
	\node[vertexRoot] (rr\k) at (\k+4,3){};
	\draw[red,thick,dashed] (yy) -- (rr\k);
}

\draw[red,thick,dashed] (xx) .. controls (5.5,1) .. (yy) ;

\node[above] at (rr0) {$r_1$};
\node[above] at (rr1) {$r_i$};
\node[above] at (rr2) {$r_k$};
\node[left] at (yy) {$y$};
\node[below] at (xx) {$x$};

\draw[black,thick,dotted] (4.2,3) -- (4.8,3) ;
\draw[black,thick,dotted] (5.2,3) -- (5.8,3) ;

\draw[black,very thick] (6.3,1) -- (6.7,1) ;
\draw[black,very thick] (6.5,1.2) -- (6.5,0.8) ;


\filldraw[fill=blue!20]
  (8,0) .. controls (9.5,1) .. (8,2) .. controls (8.5,1) .. (8,0)
  -- cycle ;
\node[] at (8.7,1) {$G_2$};

\node[vertexRoot]  (yyy) at (8,2){};
\node[vertexRoot] (xxx) at (8,0){};

\node[left] at (yyy) {$y$};
\node[below] at (xxx) {$x$};






%
%

\end{tikzpicture}
\caption{Decomposition of an extended graph with several isolated roots and resulting from a parallel composition.
\label{fig:parallel-fictive1}}
\end{figure}

Let $v_2$ be a vertex of $V_2\setminus\{x,y\}$. Observe that, as $\{x,y\}$ separates the vertices of $V_2\setminus\{x,y\}$ from the vertices of $\R\cup V_1\setminus\{x,y\}$ and as $\sigma(x)<\sigma(y)<\sigma(v_2),$  we have $S^{(e)}_{\sigma}(v_2)\subseteq V_2$. As $\sigma[V_2\setminus\{y\}]=\sigma^*[V_2\setminus\{y\}],$  every vertex $u\in V_2\setminus\{y\}$ that belongs to $S^{(e)}_{\sigma}(v_2)$ also belongs to $S^{(e)}_{\sigma^*}(v_2)$. 
Suppose that $y\in S^{(e)}_{\sigma}(v_2)$. Then there exists a $(y,v_2)$-path $P$ such that every internal vertex $u$ of $P$ satisfies $\sigma(v_2)<\sigma(u)$. As $\sigma[V_2\setminus\{y\}]=\sigma^*[V_2\setminus\{y\}],$  we also have $\sigma^*(v_2)<\sigma^*(v)$. Let us distinguish two cases:
\begin{itemize}
\item If $\sigma^*(y)<\sigma^*(v_2),$  then $y\in S^{(e)}_{\sigma}(v_2)$ implying that $S^{(e)}_{\sigma}(v_2)\subseteq S^{(e)}_{\sigma^*}(v_2)$. 
\item Otherwise, $\sigma^*(y)>\sigma^*(v_2)$ and then $y\notin S^{(e)}_{\sigma^*}(v_2)$. But in that case, let us recall that by assumption the first neighbor $v$ of $y$ in $\sigma^*$ belongs to $V_1$. It follows that $v\in S^{(e)}_{\sigma^*}(v_2)$. As we argue that $S^{(e)}_{\sigma}(v_2)\subseteq V_2,$  $v\in S^{(e)}_{\sigma^*}(v_2)\setminus S^{(e)}_{\sigma}(v_2)$. So $v$ is a replacement vertex for $y$ in $S^{(e)}_{\sigma^*}(v_2),$  implying that $|S^{(e)}_{\sigma}(v_2)|\leq |S^{(e)}_{\sigma^*}(v_2)|$.
\end{itemize}

So we proved that $\ecost((\G^{+F},\R),\sigma)\leq \ecost((\G^{+F},\R),\sigma^*)$. As by Claim~\ref{cl:parallel-fictive-1}, $\sigma\in\mathcal{L}^c(\G^{+F},\R),$  the optimality of $\sigma^*$ implies that $\ecost((\G^{+F},\R),\sigma)= \ecost((\G^{+F},\R),\sigma^*)$.
\end{proofclaim}

Let us now conclude the proof. 
Claim~\ref{cl:parallel-fictive-1}, Claim~\ref{cl:parallel-fictive-2} and the optimality of $\sigma^*$ imply that $\sigma$ is a connected layout of $\G^{+F}$ of minimum cost.
\xchange{By Claim~\ref{cl:parallel-fictive-1}, we have $\sigma_1=\sigma[V_1\cup \R]\in\mathcal{L}^c(\G[V_1\cup\R]^{+F\cup\{xy\}},\R)$.}
{Observe that in the extended graph $\G[V_1\cup\R]^{+F\cup\{xy\}},$  the edge $xy$ simulates every $(x,y)$-path of $G$ whose internal vertices belong to $V_2$.}
It follows that for every vertex $v_1\in V_1\cup\R,$  $S^{(e)}_{\sigma}(v_1)=S^{(e)}_{\sigma_1}(v_1)$.
\xchange{Likewise, by Claim~\ref{cl:parallel-fictive-1}, we know that $\sigma_2=\sigma[V_2]\in\mathcal{L}^c(\G_2,\{x,y\})$.}
 As $\{x,y\}$ separates the vertices of $V_2$ from the other vertices, for every $v_2\in V_2\setminus\{x,y\},$  we have $S^{(e)}_{\sigma}(v_2)=S^{(e)}_{\sigma_2}(v_2)$.
It follows that $\ecvs(\G^{+F},\R)=\max \left\{\ecvs(\G[V_1\cup\R]^{+F\cup\{xy\}},\R), \ecvs(\G_2^{+\emptyset},\{x,y\}) \right\}$. (see Figure~\ref{fig:parallel-fictive1})
\end{proof}

The next lemma shows that, in the case of a parallel composition, a fictive edge between the two terminal vertices is irrelevant. This is because the graph is already biconnected.

\begin{lemma} \label{lem_parallel_fictivebis}
Let {$(\G^{+F\cup\{xy\}},\R)$} be a rooted extended graph such that $\R=\{x,r_1\dots, r_k\}$ (with $k>0$), $r_1,\dots r_k$ are isolated vertices (in $\G$), $F=\{yr_i\mid i\in[k]\}$ and $(G[V\setminus\{r_1,\dots, r_k\}],(x,y))=\G_1\oplus \G_2$ with $\G_1=(G_1,(x,y))$ and $\G_2=(G_2,(x,y))$. Then
$$\ecvs(\G^{+F\cup\{xy\}},\R)=\ecvs(\G^{+F},\R).$$
\end{lemma}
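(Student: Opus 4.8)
The plan is to prove the two inequalities separately, the trivial one being $\ecvs(\G^{+F},\R)\leq\ecvs(\G^{+F\cup\{xy\}},\R)$. Connectivity of a rooted layout depends only on the solid graph, so $\mathcal{L}^c(\G^{+F},\R)=\mathcal{L}^c(\G^{+F\cup\{xy\}},\R)$, and adding the fictive edge $xy$ can only create new extended paths; hence for every $\sigma$ and every $v$ the set $S^{(e)}_{\sigma}(v)$ computed in $\G^{+F}$ is contained in the one computed in $\G^{+F\cup\{xy\}}$, so $\ecost(\G^{+F},\sigma)\leq\ecost(\G^{+F\cup\{xy\}},\sigma)$; taking the minimum over the common layout set gives the inequality. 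For the reverse one we may assume $xy\notin E(G)$ (otherwise $xy$ is not a fictive edge and there is nothing to prove); then each of $G_1,G_2$ contains an $(x,y)$-path with at least one internal vertex, yielding two internally disjoint solid $(x,y)$-paths $Q_1\subseteq G_1$ and $Q_2\subseteq G_2$.

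To prove $\ecvs(\G^{+F\cup\{xy\}},\R)\leq\ecvs(\G^{+F},\R)$ I would start from an optimal connected layout $\sigma^*$ of $(\G^{+F},\R)$ and rearrange it exactly as in Lemma~\ref{lem_parallel_fictive}: choosing $V_1$ to be the branch containing the $\sigma^*$-earliest neighbour of $y$, set $\tau=\langle r_1,\dots,r_k\rangle\odot\sigma^*[V_1]\odot\sigma^*[V_2\setminus\{x,y\}]$. Claim~\ref{cl:parallel-fictive-1} and Claim~\ref{cl:parallel-fictive-2} apply verbatim, since the edge $xy$ affects neither connectivity nor $\ecost$ in $\G^{+F}$, so $\tau\in\mathcal{L}^c(\G^{+F},\R)$ and $\ecost(\G^{+F},\tau)=\ecost(\G^{+F},\sigma^*)=\ecvs(\G^{+F},\R)$. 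The point of this particular rearrangement is that it enforces property $(\star)$: every vertex of $V_2\setminus\{x,y\}$ appears after $y$ in $\tau$.

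It then suffices to show that for this $\tau$ the extra edge is inert, i.e.\ that $S^{(e)}_\tau(v)$ is the same whether computed in $\G^{+F}$ or in $\G^{+F\cup\{xy\}}$ for every $v$; this gives $\ecost(\G^{+F\cup\{xy\}},\tau)=\ecvs(\G^{+F},\R)$ and hence the inequality. I would use the reformulation, immediate from the definition, that if $W_v$ denotes the component of $v$ in the extended graph induced by $\{v\}\cup\tau_{>\tau(v)}$, then a vertex $u$ with $\tau(u)<\tau(v)$ lies in $S^{(e)}_\tau(v)$ exactly when $u$ has an extended neighbour in $W_v$. The edge $xy$ can matter only in two ways. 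If both $x,y\in\{v\}\cup\tau_{>\tau(v)}$, then $x$ being a root (hence among the first $k+1$ positions) forces $v$ to be a root too, so $\tau_{<\tau(v)}$ consists only of some $r_i$'s; as these do not lie on $Q_1\subseteq H$, the whole of $Q_1$ sits after (or at) $v$ and already joins $x$ and $y$ inside $W_v$, so $xy$ merges nothing and changes no support. Otherwise $xy$ can only act as a boundary edge from an early endpoint into $W_v$; since $y$ is a non-root ($\tau(y)\geq k+2$) this endpoint must be $x$, with $y\in W_v$. Here property $(\star)$ and biconnectivity finish the argument: the internal vertices of $Q_2$ lie in $V_2\setminus\{x,y\}$, hence after $y$ by $(\star)$, so the portion of $Q_2$ from the neighbour $b$ of $x$ up to $y$ stays after $v$ and places $b\in W_v$; as $xb$ is solid, $x$ already belongs to $S^{(e)}_\tau(v)$ computed in $\G^{+F}$, so $xy$ adds nothing. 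This last case, where the parallel structure (two internally disjoint $(x,y)$-paths) and the rearrangement guaranteeing $(\star)$ are both indispensable, is the main obstacle; the rest is bookkeeping.
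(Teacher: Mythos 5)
Your proof is correct, but it takes a genuinely different route from the paper's. The paper proves the lemma by starting from an optimal connected layout of $(\G^{+F\cup\{xy\}},\R)$, performing the same rearrangement as in Lemma~\ref{lem_parallel_fictive}, and re-deriving the decomposition identity $\ecvs(\G^{+F\cup\{xy\}},\R)=\max\{\ecvs(\G[V_1\cup\R]^{+F\cup\{xy\}},\R),\ecvs(\G_2^{+\emptyset},\{x,y\})\}$; it then observes that this matches one branch of the recurrence of Lemma~\ref{lem_parallel_fictive} (the second copy of the fictive edge $xy$ being absorbed in the subproblem $\G[V_1\cup\R]^{+F\cup\{xy\}}$) and concludes by that lemma. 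You instead split the equality into two inequalities: the easy one by monotonicity of $\ecost$ in the fictive edge set over the common layout space, and the hard one by taking an optimal layout of $(\G^{+F},\R)$, applying the same rearrangement to enforce that $V_2\setminus\{x,y\}$ comes after $y$, and then verifying directly that the edge $xy$ changes no extended supporting set of the rearranged layout --- the key point being the two internally disjoint solid $(x,y)$-paths $Q_1\subseteq G_1$ and $Q_2\subseteq G_2$, which witness that $x$ and $y$ are already suitably connected in every relevant suffix. Your case analysis (both endpoints late versus only $x$ early, using $\tau(x)<\tau(y)$) is exhaustive and each case is handled correctly. What your approach buys is that it is self-contained and makes rigorous the paper's informal justification that the fictive edge is irrelevant ``because the graph is already biconnected''; what the paper's approach buys is brevity, by reusing the full machinery of Lemma~\ref{lem_parallel_fictive} (at the cost of some hand-waving about why the claims and the final recurrence survive the presence of the extra edge). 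The only cosmetic blemishes are the undefined symbol $H$ in your case~(a) (presumably $G_1$) and the fact that the trivial degenerate case $xy\in E(G)$ is dismissed in one clause, both of which are easily repaired.
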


\begin{figure}[h]
\centering
\begin{tikzpicture}[scale=1.0]
   \tikzstyle{vertex}=[fill,circle,minimum size=0.2cm,inner sep=0pt]
   \tikzstyle{vertexRoot}=[fill,red,rectangle,minimum size=0.15cm,inner sep=0pt]

\filldraw[fill=red!20]
  (0,0) .. controls (-1.5,1) .. (0,2) .. controls (-0.5,1) .. (0,0)
  -- cycle ;
\node[] at (-0.7,1) {$G_1$};

\filldraw[fill=blue!20]
  (0,0) .. controls (1.5,1) .. (0,2) .. controls (0.5,1) .. (0,0)
  -- cycle ;
\node[] at (0.7,1) {$G_2$};

\node[vertex]  (y) at (0,2){};
\node[vertexRoot] (x) at (0,0){};

\draw[red,thick,dashed] (x) .. controls (2.3,1) .. (y) ;

\foreach \k in {0,1,2}{
	\node[vertexRoot] (r\k) at (\k-1,3){};
	\draw[red,thick,dashed] (y) -- (r\k);
}

\node[above] at (r0) {$r_1$};
\node[above] at (r1) {$r_i$};
\node[above] at (r2) {$r_k$};
\node[left] at (y) {$y$};
\node[below] at (x) {$x$};

\draw[black,thick,dotted] (-0.8,3) -- (-0.2,3) ;
\draw[black,thick,dotted] (0.2,3) -- (0.8,3) ;

\draw[black,very thick,->] (2.25,1) -- (2.75,1) ;

\filldraw[fill=red!20]
  (5,0) .. controls (3.5,1) .. (5,2) .. controls (4.5,1) .. (5,0)
  -- cycle ;
\node[] at (4.3,1) {$G_1$};

\filldraw[fill=blue!20]
  (5,0) .. controls (6.5,1) .. (5,2) .. controls (5.5,1) .. (5,0)
  -- cycle ;
\node[] at (5.7,1) {$G_2$};

\node[vertex]  (yy) at (5,2){};
\node[vertexRoot] (xx) at (5,0){};

\foreach \k in {0,1,2}{
	\node[vertexRoot] (rr\k) at (\k+4,3){};
	\draw[red,thick,dashed] (yy) -- (rr\k);
}


\node[above] at (rr0) {$r_1$};
\node[above] at (rr1) {$r_i$};
\node[above] at (rr2) {$r_k$};
\node[left] at (yy) {$y$};
\node[below] at (xx) {$x$};

\draw[black,thick,dotted] (4.2,3) -- (4.8,3) ;
\draw[black,thick,dotted] (5.2,3) -- (5.8,3) ;

\end{tikzpicture}
\caption{The fictive edge $xy$ is irrelevant: $\ecvs(\G^{+F\cup\{xy\}},\R)=\ecvs(\G^{+F},\R).$
\label{fig:parallel-fictive2}}
\end{figure}

\begin{proof}
Let {$\sigma^*\in\mathcal{L}^c(\G^{+F\cup\{xy\}},\R)$} be a connected layout of minimum cost. {Consider the neighbor $v$ of $y$ such that $\sigma^*(v)$ is minimum. By the connectivity of $\sigma^*,$  $\sigma^*(v)<\sigma^*(y)$. Suppose without loss of generality that $v\in V_1$.} The case $v\in V_2$ is symmetric. 

\xchange{
As in the proof of Lemma~\ref{lem_parallel_fictive}, we define  the layout $\sigma_1=\sigma[V_1\cup \R]$ of  $(\G[V_1\cup\R]^{+F\cup\{xy\}},\R)$ and the layout  $\sigma_2=\sigma[V_2]$ of $(\G_2,\{x,y\})$,  where $\sigma=\langle r_1,\dots, r_k\rangle\odot \sigma^*[V_1]\odot \sigma^*[V_2\setminus\{x,y\}]$ is a layout of $(\G^{+F\cup\{xy\}},\R)$. 
We first observe that Claim~\ref{cl:parallel-fictive-1} 
applies to $\sigma_1$, $\sigma_2$ and $\sigma$ which are thereby connected layouts. We also remark that  the proof of Claim~\ref{cl:parallel-fictive-2} applies to $\sigma$ even in the presence of the fictive edge $xy$.
So we have $\sigma\in\mathcal{L}^c(\G^{+F},\R)$ and $\ecost((\G^{+F},\R),\sigma)= \ecost((\G^{+F},\R),\sigma^*)$.
Following the proof of Lemma~\ref{lem_parallel_fictive}, we show that for every $v_1\in V_1\cup \R$, $S^{(e)}_{\sigma}(v_1)=S^{(e)}_{\sigma_1}(v_1)$ and that for every $v_2\in V_2\setminus\{x,y\}$, $S^{(e)}_{\sigma}(v_2)=S^{(e)}_{\sigma_2}(v_2)$.
So we proved that {$\ecvs(\G^{+F\cup\{xy\}},\R)=\max \left\{\ecvs(\G[V_1\cup\R]^{+F\cup\{xy\}},\R), \ecvs(\G_2^{+\emptyset},\{x,y\}) \right\}$}. Lemma~\ref{lem_parallel_fictive} allows us to conclude that {$\ecvs(\G^{+F\cup\{x,y\}},\R)=\ecvs(\G^{+F},\R)$} as claimed. In other words the fictive edge $xy$ is irrelevant  (see Figure~\ref{fig:parallel-fictive2}).
}
\end{proof}


\subsubsection{Series composition}

\begin{lemma} \label{lem_series_no_fictive}
Let $(\G^{+\emptyset},\R)$ be a rooted extended graph with $\R=\{x,y\}$ and such that $\G=\G_1\otimes \G_2$ with $\G_1=(G_1,(x,z))$ and $\G_2=(G_2,(z,y))$
(see Figure~\ref{fig:series}). Then

$$\ecvs(\G^{+\emptyset},\R)=\min \left\{\begin{array}{c} 
\max\left\{\ecvs(\tilde{\G}_1^{+\{zy\}},\R),\ecvs(\G_2^{+\emptyset},\R_2)\right\}\\ 
\\
\max\left\{\ecvs(\tilde{\G}_2^{+\{zx\}},\R),\ecvs(\G_1^{+\emptyset},\R_1)  \right\} \\
\end{array}\right\},
$$
where $\tilde{\G}_1$ (resp. $\tilde{\G}_2$) is obtained from $\G_1$ (resp. $\G_2$) by adding $y$ (resp. $x$) as an isolated vertex,  and where $\R_1=\{z,x\},$  $\R_2=\{z,y\}$.
\end{lemma}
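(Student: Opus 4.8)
The plan is to prove the two inequalities $\ecvs(\G^{+\emptyset},\R)\leq \min\{A,B\}$ and $\ecvs(\G^{+\emptyset},\R)\geq \min\{A,B\}$, where $A$ and $B$ denote the first and second terms inside the $\min$. Throughout I will exploit the fact that, since $\G=\G_1\otimes\G_2$ is a series composition, the vertex $z$ is a cut vertex separating $V_1\setminus\{z\}$ from $V_2\setminus\{z\}$, with the root $x$ lying in $V_1$ and the root $y$ lying in $V_2$. Consequently every solid path joining a vertex of $V_1\setminus\{z\}$ to a vertex of $V_2\setminus\{z\}$ passes through $z$; this single structural fact drives both the support computations and the connectivity checks, exactly as $\{x,y\}$ played that role in Lemma~\ref{lem_parallel_fictive}.

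For the upper bound I would treat branch $A$ and argue symmetrically for $B$. Starting from optimal connected layouts $\tau_1\in\mathcal{L}^c(\tilde{\G}_1^{+\{zy\}},\R)$ and $\tau_2\in\mathcal{L}^c(\G_2^{+\emptyset},\R_2)$, I form $\tau=\tau_1\odot\tau_2[V_2\setminus\{z,y\}]$, which is a layout of $\G^{+\emptyset}$ since $V=(V_1\cup\{y\})\cup(V_2\setminus\{z,y\})$ with $z\in V_1$. I would first check $\tau\in\mathcal{L}^c(\G^{+\emptyset},\R)$: the prefix covering $V_1\cup\{y\}$ is connected because $\tau_1$ is (the isolated root $y$ and the fictive edge $zy$ do not affect solid connectivity), and each subsequently appended vertex of $V_2\setminus\{z,y\}$ joins the component of $z$ or $y$, which already contains the root $x$ (through the fully placed $G_1$) or the root $y$. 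Then I would verify $\ecost(\G^{+\emptyset},\tau)=\max\{\ecost(\tilde{\G}_1^{+\{zy\}},\tau_1),\ecost(\G_2^{+\emptyset},\tau_2)\}$. The key point is that $z$ precedes all of $V_2\setminus\{z,y\}$ in $\tau$: hence for $v_1\in V_1$ the only vertex of $V_2\setminus\{z\}$ that can enter $S^{(e)}_{\tau}(v_1)$ is $y$, and it does so exactly when a $(y,z)$-path routes through $G_2$, which is precisely what the fictive edge $zy$ simulates, so $S^{(e)}_{\tau}(v_1)=S^{(e)}_{\tau_1}(v_1)$; and for $v_2\in V_2\setminus\{z,y\}$ the cut vertex $z$ forces $S^{(e)}_{\tau}(v_2)\subseteq V_2$ and equal to $S^{(e)}_{\tau_2}(v_2)$. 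This yields $\ecvs(\G^{+\emptyset},\R)\leq A$, and symmetrically $\leq B$.

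For the lower bound I would take a connected layout $\sigma^*$ of $(\G^{+\emptyset},\R)$ of minimum cost and split cases on the first neighbor $v$ of $z$ in $\sigma^*$ (which satisfies $\sigma^*(v)<\sigma^*(z)$ by connectivity), according to whether $v\in V_1$ or $v\in V_2$; I treat $v\in V_1$, the other case being symmetric. As in the rearrangement of Lemma~\ref{lem_parallel_fictive}, I would define $\sigma$ by moving every vertex of $V_2\setminus\{z,y\}$ after $z$ while keeping all other relative orders, and set $\sigma_1=\sigma[V_1\cup\{y\}]$ and $\sigma_2=\sigma[V_2]$. Mirroring Claim~\ref{cl:parallel-fictive-1}, I would check $\sigma\in\mathcal{L}^c(\G^{+\emptyset},\R)$, $\sigma_1\in\mathcal{L}^c(\tilde{\G}_1^{+\{zy\}},\R)$ and $\sigma_2\in\mathcal{L}^c(\G_2^{+\emptyset},\R_2)$: connectivity of the $V_1$-side survives because $z$'s connection came through $V_1$ and the roots $x,y$ stay present, while on the $V_2$-side moving $z$ to the front only helps connectivity with roots $\{z,y\}$. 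Mirroring Claim~\ref{cl:parallel-fictive-2}, I would check $\ecost(\G^{+\emptyset},\sigma)\leq\ecost(\G^{+\emptyset},\sigma^*)$, so by optimality the cost is preserved, and the same support identities as above give $\ecost(\G^{+\emptyset},\sigma)=\max\{\ecost(\tilde{\G}_1^{+\{zy\}},\sigma_1),\ecost(\G_2^{+\emptyset},\sigma_2)\}\geq A$. Hence $\ecvs(\G^{+\emptyset},\R)\geq A\geq\min\{A,B\}$ in this case, and $\geq B\geq\min\{A,B\}$ in the symmetric case.

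I expect the main obstacle to be the cost-preservation step of the rearrangement (the analogue of Claim~\ref{cl:parallel-fictive-2}): one must verify that shifting all of $V_2\setminus\{z,y\}$ past $z$ never enlarges any extended supporting set, carefully distinguishing vertices $v_2$ with $\sigma^*(z)<\sigma^*(v_2)$ (whose support is already confined to $V_2$) from those with $\sigma^*(v_2)<\sigma^*(z)$, where the first neighbor $v\in V_1$ of $z$ serves as a replacement witness for $z$ in $S^{(e)}_{\sigma^*}(v_2)$, exactly as in Lemma~\ref{lem_parallel_fictive}. The delicate bookkeeping is that of the cut vertex $z$, which plays the double role of a terminal of $\tilde{\G}_1$ and a root of $\G_2$, together with the fictive edge $zy$ that replaces every $(z,y)$-path through $G_2$.
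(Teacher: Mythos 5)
Your proposal is correct and follows essentially the same route as the paper: the paper's proof also fixes an optimal connected layout $\sigma^*$, looks at the first-placed neighbour $v$ of the cut vertex $z$ to decide the case $v\in V_1$ versus $v\in V_2$, rearranges to $\sigma=\langle x,y\rangle\odot\sigma^*[V_1\setminus\{x\}]\odot\sigma^*[V_2\setminus\{y,z\}]$, and proves the same two claims (connectivity of $\sigma,\sigma_1,\sigma_2$ and cost preservation, with the first neighbour $v$ serving as the replacement witness for $z$) before decomposing the supporting sets via the fictive edge $zy$. The only difference is that you spell out the recombination step $\tau=\tau_1\odot\tau_2[V_2\setminus\{z,y\}]$ for the upper bound, which the paper leaves implicit in this lemma (it is written out only in the analogous parallel case).
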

\begin{proof}
Let $\sigma^*\in\mathcal{L}^c(\G,\R)$ be a connected layout of minimum cost. {Consider the neighbor $v$ of $z$ so that $\sigma^*(v)$ is minimized. By the connectivity of $\sigma^*,$  $\sigma^*(v)<\sigma^*(z)$. Suppose without loss of generality that $v\in V_1$.} The case $v\in V_2$ is symmetric. 
\xchange{
From $\sigma^*,$  we define $\sigma_1=\sigma[V_1\cup\R]$ a layout of $(\ecvs(\tilde{\G}_1^{+\{zy\}},\R)$, $\sigma_2=\langle z,y\rangle \odot \sigma[V_2\setminus\{y,z\}]$ a layout of $(\G_2^{+\emptyset},\R_2)$ and $\sigma=\langle x,y\rangle \odot\sigma^*[V_1\setminus\{x\}]\odot \sigma^*[V_2\setminus\{y,z\}]$ a layout of $(\G^{+\emptyset},\R)$ (see Figure~\ref{fig:series-rearrangement}).
}

\begin{figure}[h]
\centering
\begin{tikzpicture}[scale=0.9]
   \tikzstyle{vertex}=[fill,circle,minimum size=0.2cm,inner sep=0pt]
   \tikzstyle{vertexRoot}=[fill,red,rectangle,minimum size=0.2cm,inner sep=0pt]
   \tikzstyle{vertexRoot2}=[fill,draw,circle,minimum size=0.2cm,inner sep=0pt]
   \tikzstyle{vertex2}=[draw,circle,minimum size=0.2cm,inner sep=0pt]
   \tikzstyle{diamondb}=[fill=red!20,draw,diamond,minimum size=0.25cm,inner sep=0pt]
   \tikzstyle{diamondw}=[fill=blue!20,draw,diamond,minimum size=0.25cm,inner sep=0pt]

\draw[thin,->,arrows={-latex'}] (-5,0) -- (5.5,0);
\node[anchor=east] at (-5,0) {$\sigma^*$};

\foreach \i/\posnd in {1/-4.5,2/-4}
{\node[vertexRoot] (nd\i) at (\posnd,0){};}

\foreach \i/\posnd in {12/1}
{\node[vertexRoot2] (nd\i) at (\posnd,0){};}

\node[anchor=north] at (-4.5,0.6) {$x$};
\node[anchor=north] at (-4,0.6) {$y$};
\node[anchor=north] at (1,0.6) {$z$};
\node[anchor=south] at (-2.5,0.15) {$v$};
\node[anchor=south] at (-1,0.07) {$v_2$};

\foreach \i/\posnd in {3/-3.5,4/-3,5/-2.5,6/-2,9/-0.5,10/0,15/2.5,16/3,18/4}
{\node[diamondb] (nd\i) at (\posnd,0){};}
\foreach \i/\posnd in {7/-1.5,8/-1,11/0.5,13/1.5,14/2,17/3.5,19/4.5,20/5}
{\node[diamondw] (nd\i) at (\posnd,0){};}

\draw[thin,->,arrows={-latex'}] (-5,-1.5) -- (5.5,-1.5);
\node[anchor=east] at (-5,-1.5) {$\sigma$};

\foreach \i/\posnd in {1/-4.5,2/-4}
{\node[vertexRoot] (nds\i) at (\posnd,-1.5){};}
\foreach \i/\posnd in {12/-0.5}
{\node[vertexRoot2] (nds\i) at (\posnd,-1.5){};}

\node[anchor=north] at (-4.5,-1.7) {$x$};
\node[anchor=north] at (-4,-1.7) {$y$};
\node[anchor=north] at (-0.5,-1.7) {$z$};
\node[anchor=north] at (-2.5,-1.7) {$v$};
\node[anchor=north] at (2,-1.7) {$v_2$};

\foreach \i/\posnd in {3/-3.5,4/-3,5/-2.5,6/-2,9/-1.5,10/-1,15/0,16/0.5,18/1}
{\node[diamondb] (nds\i) at (\posnd,-1.5){};}
\foreach \i/\posnd in {7/1.5,8/2,11/2.5,13/3,14/3.5,17/4,19/4.5,20/5}
{\node[diamondw] (nds\i) at (\posnd,-1.5){};}

\foreach \i in {1,2,3,4,5,6,7,8,9,10,11,12,13,14,15,16,17,18,19,20}
{\draw[gray!60,thick,dotted] (nd\i) -- (nds\i);
}

\draw[ thick,-,>=latex] (nd12) to[bend left=40] (nd5);
\draw[ thick,-,>=latex] (nd12) to[bend left=30] (nd8);
\draw[ thick,-,>=latex] (nds5) to[bend left=40] (nds12);
\draw[ thick,-,>=latex] (nds12) to[bend left=30] (nds8);

\end{tikzpicture}
\caption{Rearranging a layout $\sigma^*$ of $\G=\G_1\otimes\G_2$ of minimum cost into $\sigma=\langle x,y\rangle \odot\sigma^*[V_1\setminus\{x\}]\odot \sigma^*[V_2\setminus\{y,z\}]$. Red diamond vertices belong to $V_1\setminus\{x\}$, blue diamond vertices belong to $V_2\setminus\{y,z\}$ and red square vertices are the roots. Observe that the path $v,z,v_2$ certifies that $v\in S^{(e)}_{\sigma^*}(v_2)$. But as $\sigma(z)<\sigma(v_2)$, {$v\notin S^{(e)}_{\sigma}(v_2)$}. Instead we have that $z\in S^{(e)}_{\sigma}(v_2)$.
\label{fig:series-rearrangement}}
\end{figure}
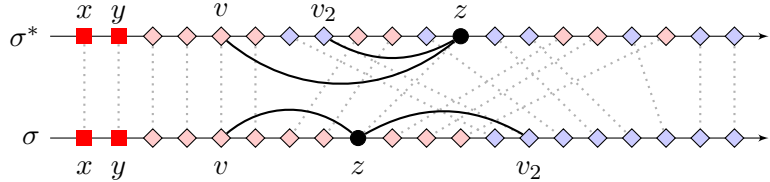

\begin{claim} \label{cl:series1}
$\sigma_1\in\mathcal{L}^c(\ecvs(\tilde{\G}_1^{+\{zy\}},\R)$, $\sigma_2\in\mathcal{L}^c(\G_2^{+\emptyset},\R_2)$ and $\sigma\in\mathcal{L}^c(\G^{+\emptyset},\R)$. 
\end{claim}
\begin{proofclaim}
Let $v_1$ be a vertex of $V_1$ distinct from $x$ and $z$. Every neighbor of $v_1$ belongs to $V_1$. As the relative ordering between vertices of $V_1$ is left unchanged, vertex $v_1$ has a neighbor prior to it in $\sigma$. 
\xchange{
It follows that $\sigma_1\in\mathcal{L}^c(\ecvs(\tilde{\G}_1^{+\{zy\}},\R)$.
}
Suppose that $v_2$ is a vertex of $V_2$ distinct from $y$ and $z$. Then as in the previous case,  every neighbor of $v_2$ belongs to $V_2$. 
As the relative ordering in $\sigma_2$ and $\sigma$ between vertices of $V_2$ has only been modified by moving $z$ ahead, $v_2$ has a neighbor prior to it in $\sigma$. 
\xchange{
It follows that $\sigma_2\in\mathcal{L}^c(\G_2^{+\emptyset},\R_2)$.
}
To prove that $\sigma$ is connected, we are left with vertex $z$. By assumption, $z$ has a neighbor $v\in V_1$ such that $\sigma^*(v)<\sigma^*(z),$  implying that $\sigma(v)<\sigma(z)$. As every vertex has a neighbor prior to it in $\sigma,$  the layout $\sigma$ is connected.
\end{proofclaim}

\begin{figure}[h]
\centering
\begin{tikzpicture}[scale=1.0]
   \tikzstyle{vertex}=[fill,circle,minimum size=0.2cm,inner sep=0pt]
   \tikzstyle{vertexRoot}=[fill,red,rectangle,minimum size=0.15cm,inner sep=0pt]

\filldraw[fill=red!20]
  (2,-2) .. controls (1.3,-1) .. (2,0) .. controls (2.7,-1) .. (2,-2)
  -- cycle ;
\node[] at (2,-1) {$G_1$};

\filldraw[fill=blue!20]
  (2,0) .. controls (1.3,1) .. (2,2) .. controls (2.7,1) .. (2,0)
  -- cycle ;
\node[] at (2,1) {$G_2$};


\node[vertex]  (z) at (2,0){};
\node[vertexRoot] (x) at (2,-2){};
\node[vertexRoot] (y) at (2,2){};


\node[left] at (z) {$z$};
\node[below] at (x) {$x$};
\node[above] at (y) {$y$};

\draw[black,very thick,->] (3.25,0) -- (3.75,0) ;

\filldraw[fill=red!20]
  (5,-1.5) .. controls (4.3,-0.5) .. (5,0.5) .. controls (5.7,-0.5) .. (5,-1.5)
  -- cycle ;
\node[] at (5,-0.5) {$G_1$};

\node[vertexRoot]  (yy) at (5,1.5){};
\node[vertexRoot] (xx) at (5,-1.5){};
\node[vertex] (zz) at (5,0.5){};

\node[left] at (zz) {$z$};
\node[below] at (xx) {$x$};
\node[above] at (yy) {$y$};

\draw[red,dashed] (zz) -- (yy) ;

\draw[black,very thick] (6.3,0) -- (6.7,0) ;
\draw[black,very thick] (6.5,-0.2) -- (6.5,0.2) ;


\filldraw[fill=blue!20]
  (8,-1) .. controls (7.3,0) .. (8,1) .. controls (8.7,0) .. (8,-1)
  -- cycle ;
\node[] at (8,0) {$G_2$};

\node[vertexRoot]  (yyy) at (8,1){};
\node[vertexRoot] (zzz) at (8,-1){};

\node[left] at (yyy) {$y$};
\node[below] at (zzz) {$z$};

\end{tikzpicture}
\caption{Decomposition of an extended graph resulting from a series composition.
\label{fig:series}}
\end{figure}

\begin{claim} \label{cl:series2}
{$\ecost((\G^{+\emptyset},\R),\sigma)= \ecost((\G^{+\emptyset},\R),\sigma^*)$.} 
\end{claim}


\begin{proofclaim}
We first consider a vertex $v_1\in V_1$. {By construction, we have $\sigma^*[V_1]=\sigma[V_1]$ and for every vertex $v_2\in V_2\setminus\{x,y\},$  $\sigma(v_1)\leqslant \sigma(v_2)$. It follows that $S^{(e)}_{\sigma}(v_1)\subseteq V_1\cup\R$.} Suppose that $\sigma(v_1)>\sigma(z)$. As $z$ separates $v_1$ from the vertices of $V_2,$  we have that $S^{(e)}_{\sigma}(v_1)=S^{(e)}_{\sigma^*}(v_1)$. Suppose now that $\sigma(v_1)\leq \sigma(z)$ 
{
and let $v\in V_1$ a vertex that belongs to $S^{(e)}_{\sigma}(v_1)$. Then, as $z$ is a cut vertex of $\G^{+\emptyset},$  there exists a $(v,v_1)$-path $P$ in $G_1$ such that every vertex $v'\in P$ distinct from $v$ satisfies $\sigma(v_1)\leq \sigma(v')$. As $\sigma^*[V_1]=\sigma[V_1],$  the path $P$ certifies that $v\in S^{(e)}_{\sigma^*}(v_1),$  implying that $|S^{(e)}_{\sigma}(v_1)|\leq |S^{(e)}_{\sigma^*}(v_1)|$.
}

Let us now consider a vertex $v_2\in V_2\setminus\{y,z\}$.
Observe that, as $z$ is a cut vertex, $\sigma^*(z)<\sigma^*(v_2)$ implies that $S^{(e)}_{\sigma^*}(v_2)\subseteq V_2$. As $\sigma[V_2\setminus\{z\}]=\sigma^*[V_2\setminus\{z\}],$  every vertex $u\in V_2\setminus\{z\}$ that belongs to $S^{(e)}_{\sigma}(v_2)$ also belongs to $S^{(e)}_{\sigma^*}(v_2)$. Suppose that $z\in S^{(e)}_{\sigma}(v_2)$. Then there exists a $(z,v_2)$-path $P$ such that every internal vertex $u$ of $P$ satisfies $\sigma(v_2)<\sigma(u)$. 
As $\sigma[V_2\setminus\{z\}]=\sigma^*[V_2\setminus\{z\}],$  we also have $\sigma^*(v_2)<\sigma^*(v)$. Let us distinguish two cases:
\begin{itemize}
\item If $\sigma^*(z)<\sigma^*(v_2),$  then  $z\in S^{(e)}_{\sigma}(v_2)$ implying that $S^{(e)}_{\sigma}(v_2)\subseteq S^{(e)}_{\sigma^*}(v_2)$. 
\item Otherwise, $\sigma^*(z)>\sigma^*(v_2)$ and then $z\notin S^{(e)}_{\sigma^*}(v_2)$. But in that case, let us recall that by assumption the first neighbor $v$ of $z$ in $\sigma^*$ belongs to $V_1$. It follows that $v\in S^{(e)}_{\sigma^*}(v_2)$. As we argue that $S^{(e)}_{\sigma}(v_2)\subseteq V_2,$  $v\in S^{(e)}_{\sigma^*}(v_2)\setminus S^{(e)}_{\sigma}(v_2)$. So $v$ is a replacement vertex for $z$ in $S^{(e)}_{\sigma^*}(v_2),$  implying that $|S^{(e)}_{\sigma}(v_2)|\leq |S^{(e)}_{\sigma^*}(v_2)|$.
\end{itemize}

So we proved that $\ecost((\G^{+\emptyset},\R),\sigma)\leq \ecost((\G^{+\emptyset},\R),\sigma^*)$. As by Claim~\ref{cl:series1}, $\sigma\in\mathcal{L}^c(\G^{+F},\R),$  the optimality of $\sigma^*$ implies that $\ecost((\G^{+\emptyset},\R),\sigma)= \ecost((\G^{+\emptyset},\R),\sigma^*)$.
\end{proofclaim}

Let us now conclude the proof. 
Claim~\ref{cl:series1}, Claim~\ref{cl:series2} and the optimality of $\sigma^*$ imply that $\sigma$ is a connected layout of $\G^{+F}$ of minimum cost.
By Claim~\ref{cl:series1}, we have $\sigma_1=\sigma[V_1\cup \R]\in\mathcal{L}^c(\tilde{\G}_1^{+\{xy\}},\R)$. 
Observe that in the extended graph $\tilde{\G}_1^{+\{xy\}},$  the fictive edge $xy$ simulates every $(z,y)$-path of $\G$ whose internal vertices belong to $V_2$.
It follows that for every vertex $v_1\in V_1\cup\R,$  $S^{(e)}_{\sigma}(v_1)=S^{(e)}_{\sigma_1}(v_1)$. Likewise, by Claim~\ref{cl:series1}, we know that $\sigma_2=\sigma[V_2]\in\mathcal{L}^c(\G_2^{+\emptyset},\{z,y\})$.  As $z$ separates the vertices of $V_2$ from the other vertices, for every $v_2\in V_2\setminus\{z\},$  we have $S^{(e)}_{\sigma}(v_2)=S^{(e)}_{\sigma_2}(v_2)$. 
It follows that $\ecvs(\G^{+\emptyset},\R)=\max\left\{\ecvs(\tilde{\G}_1^{+\{zy\}},\R),\ecvs(\G_2^{+\emptyset},\R_2)\right\}$.
\end{proof}

Let us now consider a rooted extended graph $(\G^{+F},\R)$ with $\R=\{x,r_1\dots, r_k\}$ and $(G[V\setminus\{r_1,\dots, r_k\}],(x,y))=\G_1\otimes \G_2$  (see Figure~\ref{fig:series-rooted}). The fact that $y$ is not a root vertex forces that in  a connected layout starting from the root $x$ forces the vertex $z$ to be the first vertex of $V_2$. Using this observation, one can apply the same arguments than in the proof of Lemma~\ref{lem_series_no_fictive}.

\begin{lemma} \label{lem_series_fictive}
Let $(\G^{+F},\R)$ be a rooted extended graph such that $\R=\{x,r_1\dots, r_k\}$ (with $k>0$), $r_1,\dots r_k$ are isolated vertices (in $\G$), $F=\{yr_i\mid i\in[k]\}$ and $(G[V\setminus\{r_1,\dots, r_k\}],(x,y))=\G_1\otimes \G_2$ with $\G_1=(G_1,(x,z))$ and $\G_2=(G_2,(z,y))$. Then

$$\ecvs(\G^{+F},\R)
= \max \left\{\ecvs(\G[V_1\cup\R]^{+F'},\R), \ecvs(\G[V_2\cup\R']^{+F},\R') \right\},
$$
where $F'=\{zr_i\mid i\in[k]\}$ and $\R'=\{z,r_1,\dots, r_k\}$.
\end{lemma}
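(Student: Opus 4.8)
The plan is to mirror the proof of Lemma~\ref{lem_series_no_fictive}, the single genuinely new ingredient being an observation that removes the choice between the two series-orientations and thereby collapses the min of two maxima into a single maximum. Write $z$ for the vertex shared by $\G_1$ and $\G_2$, so that $z$ is a cut vertex of $\G$ separating $V_1\setminus\{z\}$ from $V_2\setminus\{z\}$. Since the $r_i$ are isolated and fictive edges do not affect connectivity, the only root lying in the connected component $V_1\cup V_2$ of $\G^{+F}$ is $x\in V_1$. Hence, in any $\sigma^*\in\mathcal{L}^c(\G^{+F},\R)$, to place a vertex of $V_2\setminus\{z\}$ one must already have a connected (in the solid graph) set reaching it from $x$, which forces $z$ to be placed before it. Thus $z$ precedes every vertex of $V_2\setminus\{z\}$ in $\sigma^*$, and in particular the first neighbour of $z$ lies in $V_1$. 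Unlike in Lemma~\ref{lem_series_no_fictive}, this holds with no ``without loss of generality'': the symmetric case $v\in V_2$ simply cannot occur, which is exactly why only one branch survives and the formula is a single max.

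First I would take an optimal $\sigma^*\in\mathcal{L}^c(\G^{+F},\R)$ and rearrange it into $\sigma=\langle x,r_1,\dots,r_k\rangle\odot\sigma^*[V_1\setminus\{x\}]\odot\sigma^*[V_2\setminus\{z\}]$, where $z$ is regarded as a vertex of $V_1$ (so $y\in V_2$ is placed inside the last block). Setting $\sigma_1=\sigma[V_1\cup\R]$ and $\sigma_2=\langle r_1,\dots,r_k,z\rangle\odot\sigma[V_2\setminus\{z\}]$, the analogue of Claim~\ref{cl:series1} shows, exactly as before, that $\sigma$, $\sigma_1$ and $\sigma_2$ are connected layouts of $(\G^{+F},\R)$, $(\G[V_1\cup\R]^{+F'},\R)$ and $(\G[V_2\cup\R']^{+F},\R')$, since the relative orders inside $V_1$ and inside $V_2$ are preserved and $z$ still precedes $V_2\setminus\{z\}$. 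Next I would prove the analogue of Claim~\ref{cl:series2}, namely $\ecost(\G^{+F},\sigma)=\ecost(\G^{+F},\sigma^*)$. Here the forced structure makes the argument strictly easier: because $\sigma^*(z)<\sigma^*(v_2)$ for every $v_2\in V_2\setminus\{z\}$, the troublesome case ``$\sigma^*(z)>\sigma^*(v_2)$'' that needed a replacement vertex in Lemma~\ref{lem_series_no_fictive} is now vacuous, so for $v_2\in V_2\setminus\{z\}$ the extended supporting set is literally unchanged, while the same cut-vertex reasoning handles $v_1\in V_1$ and $z$ itself. In fact, since $z$ already precedes $V_2\setminus\{z\}$ in $\sigma^*$, one could even decompose $\sigma^*$ directly and skip the rearrangement; I keep it only to stay parallel to Lemma~\ref{lem_series_no_fictive}.

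Then I would decompose the cost along $z$. For $v_1\in V_1\cup\R$ the claim is $S^{(e)}_{\sigma}(v_1)=S^{(e)}_{\sigma_1}(v_1)$: every extended path leaving $V_1$ passes through $z$, and an extended $(r_i,v_1)$-path of $\G^{+F}$ of the form $r_i\text{--}y\text{--}\cdots\text{--}z\text{--}\cdots\text{--}v_1$ has all of its $V_2$-part automatically after $v_1$ (by the forced structure), so it witnesses $r_i\in S^{(e)}_{\sigma}(v_1)$ exactly when the fictive edge $zr_i$ together with a $V_1$-path $z\text{--}\cdots\text{--}v_1$ witnesses $r_i\in S^{(e)}_{\sigma_1}(v_1)$; this is precisely why $F$ is replaced by $F'=\{zr_i\}$ on the $V_1$-side. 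Symmetrically, since $z$ separates $V_2$ from the rest and the fictive edges $F=\{yr_i\}$ stay inside $V_2\cup\R'$, for $v_2\in V_2\setminus\{z\}$ no vertex of $V_1$ enters the supporting set and $S^{(e)}_{\sigma}(v_2)=S^{(e)}_{\sigma_2}(v_2)$, with $z$ now acting as the root of $\G[V_2\cup\R']^{+F}$. Together with the bookkeeping for the roots $z,r_1,\dots,r_k$ (whose supporting sets in $\sigma_2$ are subsets of their counterparts in $\sigma$), these identities give $\ecost(\G^{+F},\sigma)=\max\{\ecost(\sigma_1),\ecost(\sigma_2)\}$ and hence $\ecvs(\G^{+F},\R)\geq\max\{\ecvs(\G[V_1\cup\R]^{+F'},\R),\ecvs(\G[V_2\cup\R']^{+F},\R')\}$.

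For the reverse inequality I would start from optimal connected layouts $\tau_1$ of $(\G[V_1\cup\R]^{+F'},\R)$ and $\tau_2$ of $(\G[V_2\cup\R']^{+F},\R')$ and glue them as $\tau=\langle x,r_1,\dots,r_k\rangle\odot\tau_1[V_1\setminus\{x\}]\odot\tau_2[V_2\setminus\{z\}]$: connectivity of $\tau_1$ grows $V_1$ from $x$ up to $z$, connectivity of $\tau_2$ then grows $V_2$ from the already-placed root $z$, and the very same supporting-set identities yield $\ecost(\G^{+F},\tau)=\max\{\ecost(\tau_1),\ecost(\tau_2)\}$, hence $\ecvs(\G^{+F},\R)\leq\max\{\ecvs_1,\ecvs_2\}$ and the claimed equality. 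The step I expect to be the main obstacle is not the high-level recursion but the careful verification of these supporting-set identities in the presence of the fictive edges: one must check that collapsing the $V_2$-detour $r_i\text{--}y\text{--}\cdots\text{--}z$ into the single fictive edge $zr_i$ neither creates nor destroys any supporting relation, and that the roots $r_1,\dots,r_k$ together with the interface vertex $z$ are accounted for consistently on both sides of the cut; once the forced ordering of $z$ is in hand, everything else is a direct transcription of the arguments of Lemma~\ref{lem_series_no_fictive}.
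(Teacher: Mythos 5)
Your proposal is correct and follows essentially the same route as the paper: it exploits that $y$ is not a root and $z$ is the cut vertex, so connectivity forces $z$ to precede every vertex of $V_2\setminus\{z\}$, then rearranges the optimal layout into $\langle \R\rangle\odot\sigma^*[V_1\setminus\{x\}]\odot\sigma^*[V_2\setminus\{z\}]$, proves the analogues of Claims~\ref{cl:series-roots1} and~\ref{cl:series-roots2}, and decomposes the extended supporting sets across $z$ with the fictive edges $zr_i$ simulating the detours through $V_2$. The only (welcome) difference is that you spell out the reverse inequality by gluing optimal layouts $\tau_1,\tau_2$, a step the paper leaves implicit.
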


\begin{proof}
Let $\sigma^*\in\mathcal{L}^c(\G^{+F},\R)$ be a connected layout of minimum cost. 
\xchange{
From $\sigma^*,$  we define: $\sigma_1=\sigma[V_1\cup\R]$ a layout of $(\G[V_1\cup\R]^{+F'},\R)$, $\sigma_2=\langle z,y\rangle \odot \sigma[V_2\setminus\{y,z\}]$ a layout of $(\G[V_2\cup\R']^{+F},\R')$ and $\sigma=\langle r_1,\dots, r_k\rangle \odot\sigma^*[V_1]\odot \sigma^*[V_2\setminus\{z\}]$ (see Figure~\ref{fig:series-rearrangement-roots}) a layout of $(\G^{+F},\R)$.
}
We observe that in this case, as $y$ is not a root, $\sigma^*(z)\leq\sigma^*(v_2)$ for every vertex $v_2\in V_2$.

\begin{figure}[h]
\centering
\begin{tikzpicture}[scale=1.0]
   \tikzstyle{vertex}=[fill,circle,minimum size=0.2cm,inner sep=0pt]
   \tikzstyle{vertexRoot}=[fill,red,rectangle,minimum size=0.2cm,inner sep=0pt]
   \tikzstyle{vertexRoot2}=[fill,draw,circle,minimum size=0.2cm,inner sep=0pt]
   \tikzstyle{vertex2}=[draw,circle,minimum size=0.2cm,inner sep=0pt]
   \tikzstyle{diamondb}=[fill=red!20,draw,diamond,minimum size=0.25cm,inner sep=0pt]
   \tikzstyle{diamondw}=[fill=blue!20,draw,diamond,minimum size=0.25cm,inner sep=0pt]

\draw[thin,->,arrows={-latex'}] (-6,0) -- (5.5,0);
\node[anchor=east] at (-6,0) {$\sigma^*$};

\foreach \i/\posnd in {1/-5.5,2/-5,3/-4.5}{
	\node[vertexRoot] (r\i) at (\posnd,0){};
	\node[anchor=north] at (\posnd,0.6) {$r_{\i}$};
	}	
\foreach \i/\posnd in {2/-4}
{\node[vertexRoot] (nd\i) at (\posnd,0){};}

\foreach \i/\posnd in {7/-1.5}
{\node[vertexRoot2] (nd\i) at (\posnd,0){};}

\node[anchor=north] at (-4,0.6) {$x$};
\node[anchor=north] at (-1.5,0.6) {$z$};
\node[anchor=north] at (3.5,0.6) {$y$};

\foreach \i/\posnd in {3/-3.5,4/-3,5/-2.5,6/-2,9/-0.5,10/0, 12/1,15/2.5,16/3,18/4}
	{\node[diamondb] (nd\i) at (\posnd,0){};}

\foreach \i/\posnd in {8/-1,11/0.5,13/1.5,14/2,17/3.5,19/4.5,20/5}
	{\node[diamondw] (nd\i) at (\posnd,0){};}

\draw[thin,->,arrows={-latex'}] (-6,-1.5) -- (5.5,-1.5);
\node[anchor=east] at (-6,-1.5) {$\sigma$};

\foreach \i/\posnd in {1/-5.5,2/-5,3/-4.5}{
	\node[vertexRoot] (rs\i) at (\posnd,-1.5){};
	\node[anchor=north] at (\posnd,-1.7) {$r_{\i}$};
	}

\foreach \i/\posnd in {1/-4.5,2/-4}
{\node[vertexRoot] (nds\i) at (\posnd,-1.5){};}
\foreach \i/\posnd in {7/-1.5}
{\node[vertexRoot2] (nds\i) at (\posnd,-1.5){};}

\node[anchor=north] at (-4,-1.7) {$x$};
\node[anchor=north] at (4,-1.7) {$y$};
\node[anchor=north] at (-1.5,-1.7) {$z$};

\foreach \i/\posnd in {3/-3.5,4/-3,5/-2.5,6/-2, 9/-1, 10/-0.5,12/0,15/0.5,16/1, 18/1.5}
{\node[diamondb] (nds\i) at (\posnd,-1.5){};}

\foreach \i/\posnd in {8/2,11/2.5,13/3,14/3.5,17/4,19/4.5,20/5}
{\node[diamondw] (nds\i) at (\posnd,-1.5){};}

\foreach \i in {2,3,4,5,6,7,8,9,10,11,12,13,14,15,16,17,18,19,20}
{\draw[gray!60,thick,dotted] (nd\i) -- (nds\i);
}
\foreach \i in {1,2,3}
{\draw[gray!60,thick,dotted] (r\i) -- (rs\i);
}


\end{tikzpicture}
\caption{Rearranging a layout $\sigma^*$ of minimum cost into $\sigma=\langle r_1,r_2,r_3,x\rangle \odot\sigma^*[V_1\setminus\{x\}]\odot \sigma^*[V_2\setminus\{z\}]$. Red diamond vertices belong to $V_1\setminus\{x\}$, blue diamond vertices belong to $V_2\setminus\{z\}$, red square vertices are the roots.
\label{fig:series-rearrangement-roots}}
\end{figure}
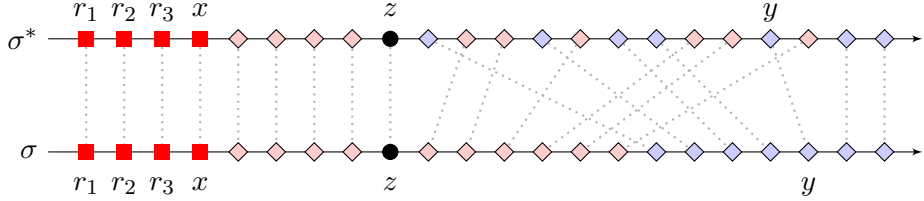

\begin{claim} \label{cl:series-roots1}
$\sigma_1\in\mathcal{L}^c(\G[V_1\cup\R]^{+F'},\R)$,
$\sigma_2\in\mathcal{L}^c(\G[V_2\cup\R']^{+F},\R')$ and 
$\sigma\in\mathcal{L}^c(\G^{+F},\R)$. 
\end{claim}

\begin{proofclaim}
Let $v_1$ be a vertex of $V_1$ distinct from $x$. Every neighbor of $v_1$ belongs to $V_1$. As the relative ordering between vertices of $V_1$ is left unchanged, vertex $v_1$ has a neighbor prior to it in $\sigma$. 
\xchange{
It follows that $\sigma_1\in\mathcal{L}^c(\G[V_1\cup\R]^{+F'},\R)$.
}
Suppose that $v_2$ is a vertex of $V_2$ distinct from $z$. Then as in the previous case,  every neighbor of $v_2$ belongs to $V_2$. As the relative ordering between vertices of $V_2$ is left unchanged in $\sigma_2$ and $\sigma$, vertex $v_2$ has a neighbor prior to it in $\sigma$. It follows that 
\xchange{
$\sigma_2\in\mathcal{L}^c(\G[V_2\cup\R']^{+F},\R')$ and $\sigma\in\mathcal{L}^c(\G^{+F},\R)$.
}
\end{proofclaim}

\begin{figure}[h]
\centering
\begin{tikzpicture}[scale=0.9]
   \tikzstyle{vertex}=[fill,circle,minimum size=0.2cm,inner sep=0pt]
   \tikzstyle{vertexRoot}=[fill,red,rectangle,minimum size=0.15cm,inner sep=0pt]

\filldraw[fill=red!20]
  (1.5,-2) .. controls (0.8,-1) .. (1.5,0) .. controls (2.2,-1) .. (1.5,-2)
  -- cycle ;
\node[] at (1.5,-1) {$G_1$};

\filldraw[fill=blue!20]
  (1.5,0) .. controls (0.8,1) .. (1.5,2) .. controls (2.2,1) .. (1.5,0)
  -- cycle ;
\node[] at (1.5,1) {$G_2$};


\node[vertex]  (z) at (1.5,0){};
\node[vertexRoot] (x) at (1.5,-2){};
\node[vertex] (y) at (1.5,2){};

\foreach \k in {0,1,2}{
	\node[vertexRoot] (r\k) at (\k+0.5,3){};
	\draw[red,thick,dashed] (y) -- (r\k);
}

\draw[black,thick,dotted] (0.7,3) -- (1.3,3) ;
\draw[black,thick,dotted] (1.7,3) -- (2.3,3) ;

\node[above] at (r0) {$r_1$};
\node[above] at (r1) {$r_i$};
\node[above] at (r2) {$r_k$};
\node[left] at (z) {$z$};
\node[below] at (x) {$x$};
\node[left] at (y) {$y$};

\draw[black,very thick,->] (3.25,0) -- (3.75,0) ;

\filldraw[fill=red!20]
  (5.5,-1.5) .. controls (4.8,-0.5) .. (5.5,0.5) .. controls (6.2,-0.5) .. (5.5,-1.5)
  -- cycle ;
\node[] at (5.5,-0.5) {$G_1$};

\node[vertexRoot] (xx) at (5.5,-1.5){};
\node[vertex] (zz) at (5.5,0.5){};

\foreach \k in {0,1,2}{
	\node[vertexRoot] (rr\k) at (\k+4.5,1.5){};
	\draw[red,thick,dashed] (zz) -- (rr\k);
}

\draw[black,thick,dotted] (4.7,1.5) -- (5.3,1.5) ;
\draw[black,thick,dotted] (5.7,1.5) -- (6.3,1.5) ;

\node[above] at (rr0) {$r_1$};
\node[above] at (rr1) {$r_i$};
\node[above] at (rr2) {$r_k$};
\node[left] at (zz) {$z$};
\node[below] at (xx) {$x$};

\draw[black,very thick] (7.3,0) -- (7.7,0) ;
\draw[black,very thick] (7.5,-0.2) -- (7.5,0.2) ;


\filldraw[fill=blue!20]
  (9.5,-1.5) .. controls (8.8,-0.5) .. (9.5,0.5) .. controls (10.2,-0.5) .. (9.5,-1.5)
  -- cycle ;
\node[] at (9.5,-0.5) {$G_2$};

\node[vertex]  (yyy) at (9.5,0.5){};
\node[vertexRoot] (zzz) at (9.5,-1.5){};

\foreach \k in {0,1,2}{
	\node[vertexRoot] (rrr\k) at (\k+8.5,1.5){};
	\draw[red,thick,dashed] (yyy) -- (rrr\k);
}

\draw[black,thick,dotted] (8.7,1.5) -- (9.3,1.5) ;
\draw[black,thick,dotted] (9.7,1.5) -- (10.3,1.5) ;

\node[above] at (rrr0) {$r_1$};
\node[above] at (rrr1) {$r_i$};
\node[above] at (rrr2) {$r_k$};
\node[left] at (yyy) {$y$};
\node[below] at (zzz) {$z$};

\end{tikzpicture}
\caption{Decomposition of an extended graph resulting from a series composition. A connected layout starting at $x$ places $z$ as the first vertex of $V_2$.
\label{fig:series-rooted}}
\end{figure}
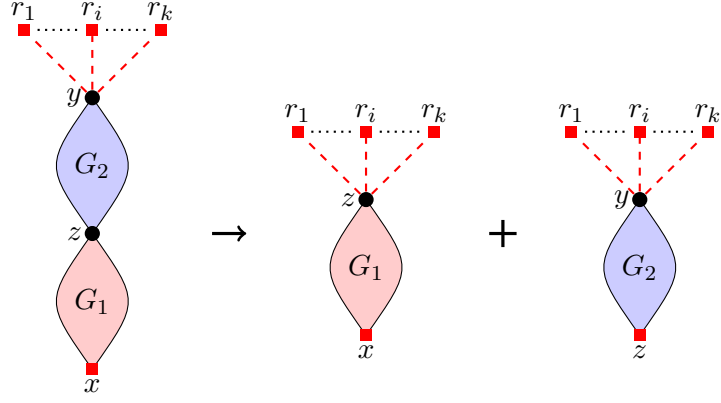

\begin{claim} \label{cl:series-roots2}
$\ecost((\G^{+F},\R),\sigma)= \ecost((\G^{+F},\R),\sigma^*)$. 
\end{claim}

\begin{proofclaim}
We first consider a vertex $v_1\in V_1$. Observe that $\sigma^*(z)\leq\sigma^*(v_2)$ for every vertex $v_2\in V_2$. It follows that $S^{(e)}_{\sigma^*}(v_1)\cap(V_2\setminus\{z\})=\emptyset$. 
The fact that $\sigma^*[V_1\cup\R]=\sigma[V_1\cup\R]$ implies $S^{(e)}_{\sigma}(v_1)=S^{(e)}_{\sigma^*}(v_1)$. 
Similar arguments hold for every vertex $v_2\in V_2\setminus\{z\}$. As $z$ separates vertices of $V_2$ from vertices of $V_1$ and as $\sigma^*(x)<\sigma^*(z)<\sigma^*(v_2),$  we have $S^{(e)}_{\sigma^*}(v_2)\cap(V_1\setminus\{x,z\})=\emptyset$. 
The fact that $\sigma^*[V_2\cup\R]=\sigma[V_2\cup\R]$ implies $S^{(e)}_{\sigma}(v_2)=S^{(e)}_{\sigma^*}(v_2)$. Thereby $\ecost((\G^{+F},\R),\sigma)= \ecost((\G^{+F},\R),\sigma^*)$. 
\end{proofclaim}

Let us now conclude the proof of the lemma. Claim~\ref{cl:series-roots1}, Claim~\ref{cl:series-roots2} and the optimality of $\sigma^*$ imply that $\sigma$ is a connected layout of $G^{+F}$ of minimum cost. 
From Claim~\ref{cl:series-roots1}, we have that $\sigma_1=\sigma[V_1\cup \R]\in\mathcal{L}^c(\G[V_1\cup\R]^{+F'},\R)$. 
{Observe that in the extended graph $\G[V_1\cup\R]^{+F'},$  for $i\in[k],$  the fictive edge $zr_i\in F'$ simulates every simple extended $(z,r_i)$-path in $\G^{+F}$.}
It follows that for every vertex $v_1\in V_1\cup\R,$  $S^{(e)}_{\sigma}(v_1)=S^{(e)}_{\sigma_1}(v_1)$. 
Likewise, from  Claim~\ref{cl:series-roots1}, we know that $\sigma_2=\sigma[V_2\cup\R']\in\mathcal{L}^c(\G[V_2\cup\R']^{+F},\R')$.  
As noticed before, $z$ separates the vertices of $V_2$ from vertices of $V_1$. It follows for every $v_2\in V_2\setminus\{z\},$  we have $S^{(e)}_{\sigma}(v_2)=S^{(e)}_{\sigma_2}(v_2),$  completing the proof.
\end{proof}

If there is a fictive edge between the two terminal vertices $x$ and $y$ of $\G$, then , as in Lemma~\ref{lem_series_fictive}, a connected layout starting at $x$ may visit $y$ before $z$. In this case, we obtain the following lemma.

\begin{lemma} \label{lem_series_fictive2}
Let $(\G^{+F\cup\{xy\}},\R)$ be a rooted extended graph such that $\R=\{x,r_1\dots, r_k\}$ (with $k>0$), $r_1,\dots r_k$ are isolated vertices (in $\G$), {$F=\{yr_i\mid i\in[k]\}$} and $(G[V\setminus\{r_1,\dots, r_k\}],(x,y))=\G_1\otimes \G_2$ with $\G_1=(G_1,(x,z))$ and $\G_2=(G_2,(z,y))$. Then

$$\ecvs(\G^{+F\cup\{xy\}},\R)
= \max \left\{\ecvs(\G[V_1\cup\R]^{+F'\cup\{xz\}},\R), \ecvs(\G[V_2\cup\R']^{+F\cup\{xy\}},\R') \right\},
$$
where $\R'=\{z,r_1,\dots, r_k,x\}$ and $F'=\{zr_i\mid i\in[k]\}$.
\end{lemma}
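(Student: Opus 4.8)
The plan is to follow the proof of Lemma~\ref{lem_series_fictive} almost verbatim, isolating the effect of the extra fictive edge $xy$ and showing that it is captured exactly by the two edges added in the recursive calls: the edge $xz$ on the $\G_1$-side, and the retained edge $xy$ together with the new root $x$ on the $\G_2$-side. First I would fix an optimal connected layout $\sigma^*\in\mathcal{L}^c(\G^{+F\cup\{xy\}},\R)$. The key point is that, by definition, connectivity of a layout depends only on the solid graph, so the fictive edge $xy$ does not change which layouts are connected: since $z$ is a cut vertex separating $V_1$ from $V_2\setminus\{z\}$ in $\G$ and $y\notin\R$, any connected prefix meeting $V_2\setminus\{z\}$ must already contain $z$, whence $\sigma^*(z)\leq\sigma^*(v_2)$ for every $v_2\in V_2$. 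I would then reuse the rearrangement $\sigma=\langle r_1,\dots,r_k\rangle\odot\sigma^*[V_1]\odot\sigma^*[V_2\setminus\{z\}]$ and set $\sigma_1=\sigma[V_1\cup\R]$ and $\sigma_2=\sigma[V_2\cup\R']$ as the candidate layouts for the two subproblems.

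Since the solid structure is unchanged, the connectivity part is identical to Claim~\ref{cl:series-roots1}: the relative orders inside $V_1$ and inside $V_2$ are preserved and $z$ precedes $V_2\setminus\{z\}$, so every non-root vertex retains a solid neighbour before it, and $\sigma$, $\sigma_1$, $\sigma_2$ are connected layouts of $(\G^{+F\cup\{xy\}},\R)$, $(\G[V_1\cup\R]^{+F'\cup\{xz\}},\R)$ and $(\G[V_2\cup\R']^{+F\cup\{xy\}},\R')$ respectively. Here $x$ is kept as a root of $\R'$ (it is already a root of the original problem): this both guarantees that $\sigma_2$ is a valid rooted layout and pins $x$ at the front of $\sigma_2$, so that the contribution of the fictive edge $xy$ is accounted for consistently with $\sigma$. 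The cost-preservation step copies Claim~\ref{cl:series-roots2}: as $\sigma^*[V_1\cup\R]=\sigma[V_1\cup\R]$, $\sigma^*[V_2\cup\R]=\sigma[V_2\cup\R]$ and $z$ separates the two sides, the extended supporting set of each vertex is unchanged, so $\ecost((\G^{+F\cup\{xy\}},\R),\sigma)=\ecost((\G^{+F\cup\{xy\}},\R),\sigma^*)$ and $\sigma$ remains optimal.

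The decomposition of the cost is the step where $xy$ really matters, and this is the part I expect to be the main obstacle. On the $\G_2$-side the bookkeeping is direct: $x$ is retained as a root and $xy$ is kept, so every extended path that reaches a vertex $v_2\in V_2\setminus\{z\}$ by using $xy$ (or a fictive edge $yr_i$) in $\G^{+F\cup\{xy\}}$ survives verbatim in $\G[V_2\cup\R']^{+F\cup\{xy\}}$ and conversely; together with the fact that $z$ separates $V_2$ from the rest, this yields $S^{(e)}_{\sigma}(v_2)=S^{(e)}_{\sigma_2}(v_2)$. On the $\G_1$-side I would argue that the only extended reachability that $xy$ newly creates for a vertex $v_1\in V_1\cup\R$ is a detour that leaves $V_1$ at $x$ along $xy$ and re-enters at $z$ along a solid $(y,z)$-path inside $G_2$; all internal vertices of such a detour lie in $V_2\setminus\{z\}$ and hence come after every vertex of $V_1$ in $\sigma$, so the detour behaves exactly like the single fictive edge $xz$ added to $\G_1$'s call. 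Two points need care here: first, because $x$ is a root it can only appear as an endpoint of a supporting path (never as an internal vertex placed after some $v_1\neq x$), so no spurious support is introduced; and second, detours that additionally use a fictive edge $yr_i$ are already simulated by the edges of $F'=\{zr_i\}$ exactly as in Lemma~\ref{lem_series_fictive}. This gives $S^{(e)}_{\sigma}(v_1)=S^{(e)}_{\sigma_1}(v_1)$ for all $v_1\in V_1\cup\R$. Combining the two identities yields $\ecost(\sigma)=\max\{\ecost(\sigma_1),\ecost(\sigma_2)\}$; optimality of $\sigma^*$ gives the lower bound, and recombining optimal layouts of the two subproblems, symmetric to this decomposition, gives the matching upper bound, establishing the claimed equality.
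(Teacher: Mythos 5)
Your proposal is correct and follows essentially the same route as the paper: the same rearrangement $\sigma=\langle r_1,\dots,r_k\rangle\odot\sigma^*[V_1]\odot\sigma^*[V_2\setminus\{z\}]$, reuse of the connectivity and cost-preservation claims of Lemma~\ref{lem_series_fictive} (which go through unchanged since connectivity and the separator $z$ depend only on the solid graph), and the same accounting of the fictive edge $xy$ via the added edge $xz$ on the $\G_1$-side and the retained edge $xy$ with $x$ kept as a pendant root on the $\G_2$-side. Your explicit remark that the $xy$-detour re-enters $V_1$ only at $z$ through internal vertices of $V_2\setminus\{z\}$, and is therefore simulated exactly by $xz$, is precisely the paper's justification.
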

\begin{proof}
We proceed as in the proof of Lemma~\ref{lem_series_fictive}. We transform a layout $\sigma^*\in\mathcal{L}^c(\G^{+F\cup\{xy\}}),\R)$ of minimum cost into the layout $\sigma=\langle r_1,\dots, r_k\rangle \odot\sigma^*[V_1]\odot \sigma^*[V_2\setminus\{z\}]$ (see Figure~\ref{fig:series-rearrangement-roots}). Observe that since the solid graph $G$ is the same as in Lemma~\ref{lem_series_fictive}, Claim~\ref{cl:series-roots1} applies and thereby $\sigma\in\mathcal{L}^c(\G^{+F\cup\{xy\}},\R)$.

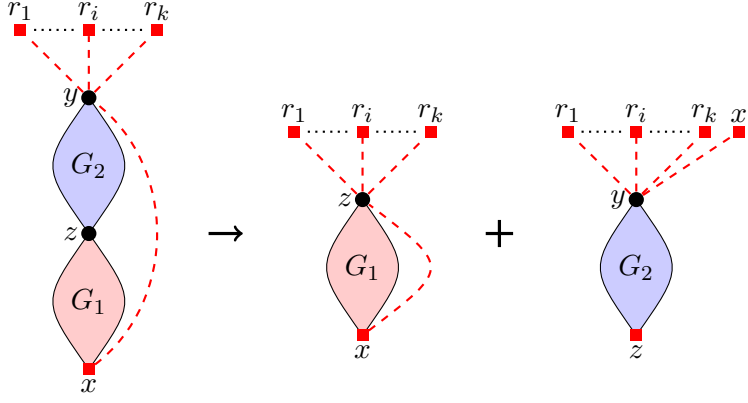
\begin{figure}[h]
\centering
\begin{tikzpicture}[scale=0.9]
   \tikzstyle{vertex}=[fill,circle,minimum size=0.2cm,inner sep=0pt]
   \tikzstyle{vertexRoot}=[fill,red,rectangle,minimum size=0.15cm,inner sep=0pt]

\filldraw[fill=red!20]
  (1.5,-2) .. controls (0.8,-1) .. (1.5,0) .. controls (2.2,-1) .. (1.5,-2)
  -- cycle ;
\node[] at (1.5,-1) {$G_1$};

\filldraw[fill=blue!20]
  (1.5,0) .. controls (0.8,1) .. (1.5,2) .. controls (2.2,1) .. (1.5,0)
  -- cycle ;
\node[] at (1.5,1) {$G_2$};

\node[vertex]  (z) at (1.5,0){};
\node[vertexRoot] (x) at (1.5,-2){};
\node[vertex] (y) at (1.5,2){};

\draw[red,thick,dashed] (x) .. controls (2.8,-1) and (2.8,1) .. (y) ;

\foreach \k in {0,1,2}{
	\node[vertexRoot] (r\k) at (\k+0.5,3){};
	\draw[red,thick,dashed] (y) -- (r\k);
}

\draw[black,thick,dotted] (0.7,3) -- (1.3,3) ;
\draw[black,thick,dotted] (1.7,3) -- (2.3,3) ;

\node[above] at (r0) {$r_1$};
\node[above] at (r1) {$r_i$};
\node[above] at (r2) {$r_k$};
\node[left] at (z) {$z$};
\node[below] at (x) {$x$};
\node[left] at (y) {$y$};

\draw[black,very thick,->] (3.25,0) -- (3.75,0) ;

\filldraw[fill=red!20]
  (5.5,-1.5) .. controls (4.8,-0.5) .. (5.5,0.5) .. controls (6.2,-0.5) .. (5.5,-1.5)
  -- cycle ;
\node[] at (5.5,-0.5) {$G_1$};

\node[vertexRoot] (xx) at (5.5,-1.5){};
\node[vertex] (zz) at (5.5,0.5){};

\draw[red,thick,dashed] (xx) .. controls (6.8,-0.5) .. (zz) ;

\foreach \k in {0,1,2}{
	\node[vertexRoot] (rr\k) at (\k+4.5,1.5){};
	\draw[red,thick,dashed] (zz) -- (rr\k);
}

\draw[black,thick,dotted] (4.7,1.5) -- (5.3,1.5) ;
\draw[black,thick,dotted] (5.7,1.5) -- (6.3,1.5) ;

\node[above] at (rr0) {$r_1$};
\node[above] at (rr1) {$r_i$};
\node[above] at (rr2) {$r_k$};
\node[left] at (zz) {$z$};
\node[below] at (xx) {$x$};

\draw[black,very thick] (7.3,0) -- (7.7,0) ;
\draw[black,very thick] (7.5,-0.2) -- (7.5,0.2) ;


\filldraw[fill=blue!20]
  (9.5,-1.5) .. controls (8.8,-0.5) .. (9.5,0.5) .. controls (10.2,-0.5) .. (9.5,-1.5)
  -- cycle ;
\node[] at (9.5,-0.5) {$G_2$};

\node[vertex]  (yyy) at (9.5,0.5){};
\node[vertexRoot] (zzz) at (9.5,-1.5){};

\foreach \k in {0,1,2}{
	\node[vertexRoot] (rrr\k) at (\k+8.5,1.5){};
	\draw[red,thick,dashed] (yyy) -- (rrr\k);
}
\node[vertexRoot] (rrr3) at (11,1.5){};
\draw[red,thick,dashed] (yyy) -- (rrr3);

\draw[black,thick,dotted] (8.7,1.5) -- (9.3,1.5) ;
\draw[black,thick,dotted] (9.7,1.5) -- (10.3,1.5) ;

\node[above] at (rrr0) {$r_1$};
\node[above] at (rrr1) {$r_i$};
\node[above] at (rrr2) {$r_k$};
\node[above] at (rrr3) {$x$};
\node[left] at (yyy) {$y$};
\node[below] at (zzz) {$z$};

\end{tikzpicture}
\caption{Decomposition of an extended graph resulting from a series composition.
\label{fig:series-xy}}
\end{figure}

\begin{claim} \label{cl:series-roots-xy}
$\ecost((\G^{+F\cup\{xy\}},\R),\sigma)= \ecost((\G^{+F\cup\{xy\}},\R),\sigma^*)$. 
\end{claim}

\begin{proofclaim}
The existence of the fictive edge $xy$ does not change the arguments used in the proof of Claim~\ref{cl:series-roots2}. Let us consider $v_1\in V_1$ and $v_2\in V_2\setminus\{z\}$. First as $\sigma^*(x)<\sigma^*(z)<\sigma^*(v_2)$ and $z$ separates vertices of $V_2$ from vertices of $V_1,$  we obtain that  $S^{(e)}_{\sigma^*}(v_1)\cap(V_2\setminus\{z\})=\emptyset$ and $S^{(e)}_{\sigma^*}(v_2)\cap(V_1\setminus\{x,z\})=\emptyset$. Moreover $\sigma^*[V_1\cup\R]=\sigma[V_1\cup\R]$ implies $S^{(e)}_{\sigma}(v_1)=S^{(e)}_{\sigma^*}(v_1)$ and $\sigma^*[V_2\cup\R]=\sigma[V_2\cup\R]$ implies $S^{(e)}_{\sigma}(v_2)=S^{(e)}_{\sigma^*}(v_2),$  proving the claim. In other words, $\sigma$ is a connected layout of $\G^{+F\cup\{xy\}}$ of minimum cost. 
\end{proofclaim}

Let us now conclude the proof of the lemma. 
Claim~\ref{cl:series-roots-xy}, the fact that $\sigma\in\mathcal{L}^c(\G^{+F\cup\{xy\}},\R)$ and the optimality of $\sigma^*$ imply that $\sigma$ is a connected layout of $\G^{+F}$ of minimum cost.
From Claim~\ref{cl:series-roots1}, we have that $\sigma_1=\sigma[V_1\cup \R]\in\mathcal{L}^c(\G[V_1\cup\R]^{+F'\cup\{xy\}},\R)$. 
{Observe that in the extended graph $\G[V_1\cup\R]^{+F'},$  for $i\in[k],$  the fictive edge $zr_i\in F'$ (for $i\in[k]$) simulates every simple extended $(z,r_i)$-path in $\G^{+F\cup\{xy\}}$. Similarly the fictive edge $zx$ aims at representing extended $(z,x)$-paths avoiding $V_1\setminus\{x\}$ in $\G^{+F\cup\{xy\}}$.} It follows that for every vertex $v_1\in V_1\cup\R,$  $S^{(e)}_{\sigma}(v_1)=S^{(e)}_{\sigma_1}(v_1)$. 
Likewise, from Claim~\ref{cl:series-roots1}, we know that $\sigma_2=\sigma[V_2\cup\R']\in\mathcal{L}^c(\G[V_2\cup\R']^{+F\cup\{zx\}},\R')$. 
 As noticed before, $z$ separates the vertices of $V_2$ from vertices of $V_1,$  and thereby for every $v_2\in V_2,$  $V_1\setminus\{z\}\cap S^{(e)}_{\sigma}(v_2)=\emptyset$. Observe that despite the fact that $x\notin V_2,$  it is preserved as a (pendant) root in $\R'$. It follows for every $v_2\in V_2\setminus\{z\},$  we have $S^{(e)}_{\sigma}(v_2)=S^{(e)}_{\sigma_2}(v_2),$  completing the proof.
\end{proof}

\subsection{The dynamic programming algorithm}

The following upper bound on connected treewidth enables us to optimize the size of the DP tables.

\begin{theorem}[\cite{FN06a,FN06b,FN08}] \label{th:upper-bound}
Every graph $G$ on $n$ vertices satisfies $\ctw(G)\leq \tw(G)\cdot(\log n{+}1)$.
\end{theorem}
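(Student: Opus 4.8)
The plan is to pass to the layout formulation through the identity $\ctw(G)=\cvs(G)$ (Theorem~\ref{th:equiv}) and, for a connected graph $G$ with $w:=\tw(G)$, to build a single connected layout $\sigma\in\mathcal{L}^c(G)$ with $\cost(G,\sigma)\le w(\log n+1)$. I would start from a tree-decomposition $(T,\mathcal{F})$ of $G$ of width $w$, refined to a \emph{nice} one so that any two adjacent bags $X_a,X_b$ share at most $w$ vertices; thus every edge of $T$ induces an \emph{adhesion} $Y=X_a\cap X_b$ that is a separator of $G$ with $|Y|\le w$. The layout is then produced by a divide-and-conquer recursion guided by a centroid decomposition of $T$.

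First I would invoke the balanced-separator property of bounded treewidth: $T$ has a centroid node whose removal leaves subtrees each spanning at most $n/2$ of the vertices of $G$, the interface to each child being an adhesion of size at most $w$. The recursion places the centroid region and then recurses on each side; since the number of remaining vertices halves at each step, the recursion tree has depth at most $\log_2 n$. Throughout a call I keep the adhesion of the current region occupied, so that the frontier separating the already-ordered vertices from the rest is guarded by at most $w$ vertices at each nested level. The crucial accounting step is to verify that, for a vertex $v$ ordered inside a region at recursion depth $d\le\log_2 n$, the supporting set $S_{\sigma}(v)$ is contained in the union of the $d+1$ adhesions guarding the nested regions on the branch from the top region down to $v$'s region: any vertex placed before $v$ and joined to $v$ by a path whose internal vertices all come later must cross one of these region boundaries, hence lie on one of the held separators. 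As there are at most $\log_2 n+1$ such adhesions, each of size at most $w$, this yields $|S_{\sigma}(v)|\le w(\log_2 n+1)$ and hence $\cost(G,\sigma)\le w(\log n+1)$.

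The main obstacle is to realise this recursion while keeping \emph{every} prefix of $\sigma$ connected, i.e.\ genuinely $\sigma\in\mathcal{L}^c(G)$, and not merely bounding the supporting sets. The adhesions held at the various levels must form a connected backbone attaching each freshly processed subregion to the part of $G$ already placed, so before descending into a child region one must first order a connected chain of vertices linking the parent's guard to the child's adhesion, reusing these vertices as further guards. Getting this interleaving right — ordering the subregions and their connecting chains so that each newly placed vertex has a neighbour among the earlier ones, while never holding more than $\log_2 n+1$ adhesions of size at most $w$ simultaneously — is exactly where the technical work lies, and it is also what forces the factor $\log n$ rather than a constant. The tightness of the bound is witnessed by the complete binary tree, for which $\tw=1$ while $\ctw=\Theta(\log n)$: any connected order must keep $\Theta(\log n)$ guard vertices active at once, so the logarithmic loss cannot be removed.
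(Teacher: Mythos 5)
The paper does not prove this statement at all---it imports it from \cite{FN06a,FN06b,FN08}---so your argument has to stand on its own. Your overall strategy (pass to the layout formulation via Theorem~\ref{th:equiv}, take a width-$\tw(G)$ tree-decomposition, recurse on balanced separators of the decomposition tree so that the recursion depth is $\log_2 n$, and charge each supporting set to the at most $\log_2 n+1$ adhesions held along the branch to $v$'s region) is indeed the mechanism used in the cited references, and you correctly identify the source of the $\log n$ factor and its tightness on the complete binary tree.

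As written, however, this is a plan rather than a proof: you yourself flag that ``getting this interleaving right \dots is exactly where the technical work lies'' and then do not do that work. Constructing the order in which the centroid bag, the connecting chains, and the subregions are placed so that \emph{every} prefix induces a connected subgraph---while never holding more than one adhesion per recursion level---is the entire content of the theorem; without it the claim $\sigma\in\mathcal{L}^c(G)$ is unsupported, and if the connecting chains are handled carelessly they can force extra guarded vertices that break the $\tw(G)$-per-level accounting. Two further points would need repair even once that is done. First, $S_{\sigma}(v)$ is not contained only in the held adhesions: a vertex $u$ of $v$'s own current region placed just before $v$ and adjacent to $v$ belongs to $S_{\sigma}(v)$ (the edge $uv$ has no internal vertices), so you must argue that the recursion bottoms out at single vertices, or absorb the final bag into the count, and this affects the constant. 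Second, the natural guard at each level is $N(C)\subseteq X$ for $X$ the centroid bag and $C$ a component of $G-X$, which has size at most $|X|\le \tw(G)+1$ rather than $\tw(G)$; obtaining the stated bound $\tw(G)\cdot(\log n+1)$ instead of $(\tw(G)+1)\cdot(\log n+1)$ requires the sharper bookkeeping carried out in \cite{FN08}. The lower-bound remark at the end explains why the logarithm is necessary but does not substitute for the construction.
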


As series-parallel graphs have treewidth at most two, Theorem~\ref{th:upper-bound} implies that the connected treewidth of a series-parallel graph on $n$ vertices is at most $c_{\rm sp}=\lceil 2(\log n+1)\rceil$. This bound allows us to optimize the size of the table in our dynamic programming algorithm. Moreover this bound is tight~\cite{FN06b,FN08}, even on series-parallel graphs as certified by a construction of contraction obstructions for connected treewidth at most $k,$ for every $k\geq 2$~\cite{APT19}.
Let us first focus on the case of biconnected series-parallel graph.

\begin{proposition} \label{prop:SP}
Let $G$ be a biconnected series-parallel graph on $n$ vertices. Then computing $\ctw(G)$ can be done in $O(n^2\cdot\log n)$-time.
\end{proposition}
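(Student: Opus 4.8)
The plan is to recast the composition lemmas of this section as the transition rules of a bottom-up dynamic program over an \SPtree{} of $G$. By Theorem~\ref{th:equiv} it suffices to compute $\cvs(G)$; since a connected layout is determined by its first vertex together with the rule that extends it, and since $\ecvs$ of an extended graph is insensitive to the chosen terminal pair, one has
$$\cvs(G)=\min_{v\in V}\ecvs(G^{+\emptyset},\{v\}),$$
where $\ecvs(G^{+\emptyset},\{v\})$ is the least cost of a connected layout starting at $v$. As every vertex is an endpoint of an edge, I would iterate over the $O(n)$ edges $xy\in E$: for each, Theorem~\ref{th:SP-recognition} builds in linear time an \SPtree{} of $(G,(x,y))$ (well defined by Theorem~\ref{th:SP-2terminal}), whose root node carries the whole of $G$ with terminals $(x,y)$. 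Running the DP on this tree produces $\ecvs(G^{+\emptyset},\{x\})$ and $\ecvs(G^{+\emptyset},\{y\})$ as root entries, and the minimum of all these values over all base edges equals $\cvs(G)=\ctw(G)$.

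For a fixed \SPtree{}, the program attaches to each node $t$, with terminals $(x_t,y_t)$ and solid subgraph $G[V_t]$, a table of states recording which terminal carries the single active root, the number $k$ of isolated roots attached by fictive edges to the other terminal, and whether the terminal fictive edge $x_ty_t$ is present. The value stored is the matching quantity $\ecvs$, computed from the entries of the children of $t$ by the $\min$/$\max$ recurrences of Lemmas~\ref{lem:parallel-no-fictive}, \ref{lem_parallel_fictive} and~\ref{lem_parallel_fictivebis} at a parallel node, and of Lemmas~\ref{lem_series_no_fictive}, \ref{lem_series_fictive} and~\ref{lem_series_fictive2} at a series node, the applicable lemma being selected by the root-and-fictive configuration. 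Leaves, which are single solid edges, and the degenerate configuration $k=0$ are evaluated directly as base cases. Because each recurrence combines only a constant number of child entries, a single table entry is computed in $O(1)$ time once its children are known.

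The main obstacle, and the origin of the logarithmic factor, is to show that only $O(\log n)$ states per node can be relevant. Here I would observe that, in any connected layout, each of the $k$ isolated roots attached to a terminal lies in that terminal's extended supporting set---being joined to it by a fictive edge and placed before it---so the associated $\ecvs$ is at least $k$. Hence any state with $k>c_{\rm sp}=\lceil 2(\log n+1)\rceil$ already exceeds the global bound $\ctw(G)\le c_{\rm sp}$ supplied by Theorem~\ref{th:upper-bound} and can be discarded as never optimal. With at most $c_{\rm sp}=O(\log n)$ admissible states per node and $O(n)$ nodes, one DP pass costs $O(n\log n)$; iterating over the $O(n)$ base edges then yields the announced $O(n^2\log n)$ running time. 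The part that demands care is the correctness bookkeeping: checking that the six composition lemmas, together with the direct handling of leaves and of the case $k=0$, exhaust every configuration that can occur, and that applying them bottom-up indeed returns $\ecvs(G^{+\emptyset},\{v\})$ at the root, so that the outer minimum equals $\ctw(G)$.
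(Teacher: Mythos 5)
Your proposal follows essentially the same route as the paper: one \SPtree{} per edge $xy$, a DP table at each node indexed by the active root terminal, the number $k\le c_{\rm sp}$ of fictive isolated roots, and the presence of the fictive terminal edge, with Lemmas~\ref{lem:parallel-no-fictive}--\ref{lem_series_fictive2} as transitions and Theorem~\ref{th:upper-bound} truncating $k$ to $O(\log n)$, yielding $O(n\log n)$ per tree and $O(n^2\cdot\log n)$ overall. The only imprecision is that the root entry of each DP is $\ecvs(\G^{+\emptyset},\{x,y\})$ (both endpoints of the base edge serving as roots) rather than $\ecvs(G^{+\emptyset},\{x\})$; since the first two vertices of any connected layout span an edge, minimizing these values over all $xy\in E$ still returns $\cvs(G)=\ctw(G)$.
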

\begin{proof}
Let $\G=(G,(x,y)),$  with $G=(V,E),$  be a biconnected $2$-terminal graph such that $xy\in E$. By Theorem~\ref{th:SP-2terminal}, we have $\G=\G_1\oplus \G_2$ where $\G_1=(G_1,(x,y))$ with $G_1=(\{x,y\},\{xy\})$ and $\G_2=(G_2,(x,y))$ with $G_2=(V,E\setminus\{xy\})$.
By Theorem~\ref{th:SP-recognition}, in linear time, we can compute $T(\G),$  the \SPtree{} of $\G$. Recall that the root of $T(\G)$ corresponds to the parallel composition $\G_1\oplus \G_2$.
We let $\G_t=(G_t,(x_t,y_t))$ with $G_t=(V_t,E_t)$ denote the subgraph represented by node $t$ of $T(\G)$. We let  
\xchange{$\tilde{G}_{t,k}$} be the graph $G_t$ augmented with $k$ isolated vertices $r_1,\dots, r_k$ where $k\leq c_{\rm sp}$ and denote $\tilde{\G}_{t,k}=(\tilde{G}_{t,k},(x_t,y_t))$. 
In order to apply the rules described in Lemmas~\ref{lem:parallel-no-fictive} --~\ref{lem_series_fictive2}, the table $\DP_{t}[\,\cdot\,]$ stored at every node $t$ contains the following values:
\begin{itemize}
\item $\DP_{t}[0]=\ecvs(\G_t^{+\emptyset},\{x_t,y_t\})$;
\item for $k\in[c_{\rm sp}],$  $\DP_{t}[k,x_t]=\ecvs(\xchange{\tilde{\G}_{t,k}}[V_t\cup\R]^{+F},\R)$ where $\R=\{x_t,r_1,\dots, r_k\}$ and $F=\{y_tr_i\mid i\in[k]\}$;
\item for $k\in[c_{\rm sp}],$  $\DP_{t}[k,y_t]=\ecvs(\xchange{\tilde{\G}_{t,k}}[V_t\cup\R]^{+F},\R)$ where $\R=\{y_t,r_1,\dots, r_k\}$ and $F=\{x_tr_i\mid i\in[k]\}$;
\item for $k\in[c_{\rm sp}-1],$  $\DP_{t}[k,x_t,x_ty_t]=\ecvs(\xchange{\tilde{\G}_{t,k}}[V_t\cup\R]^{+F},\R)$ where $\R=\{x_t,r_1,\dots, r_k\}\}$ and $F=\{x_ty_t\}\cup\{y_tr_i\mid i\in[k]\}$;
\item for $k\in[c_{\rm sp}-1],$  $\DP_{t}[k,y_t,x_ty_t]=\ecvs(\xchange{\tilde{\G}_{t,k}}[V_t\cup\R]^{+F},\R)$ where $\R=\{y_t,r_1,\dots, r_k\}\}$ and $F=\{x_ty_t\}\cup\{x_tr_i\mid i\in[k]\}$;
\end{itemize}

The bounds on the integer $k$, determining the number of entries in the table $\DP_t[\,\cdot\,]$ of a node $t$ is, are delimited by the upper-bound $c_{\rm sp}$, as asserted by Theorem~\ref{th:upper-bound}.
The initialization of the table for leaf nodes (see below) guarantees  that this bound is respected.

We observe that for every node $t,$  every entry of $\DP_t[\,\cdot\,]$ corresponds to an extended rooted two-terminal graph $(\mathbf{H}^{+F},\R)$ such that: $\R$ contains at least two vertices; at least one vertex of $\R$ is a terminal vertex; and every root vertex that is not a terminal vertex is an isolated vertex. These properties implies that every connected component of $\mathbf{H}^{+F}$ contains a root vertex, and thereby it guarantees the existence of a connected layout of $(\mathbf{H}^{+F},\R)$.

Suppose that $t$ represents a parallel composition $\G_t=\G'_1\oplus \G'_2$. The children $t_1$ and $t_2$ of $t$ respectively represent the $2$-terminal graphs $\G'_1=(G'_1,(x_t,y_t))$ and $\G'_2=(G'_2,(x_t,y_t))$. Then $\DP_t[\,\cdot\,]$ is computed as follows:

\begin{itemize}
\item By Lemma~\ref{lem:parallel-no-fictive}, $\DP_{t}[0]=\max \left\{\DP_{t_1}[0],\DP_{t_2}[0]\right\}$.
\item By Lemma~\ref{lem_parallel_fictive}, we have for $k\in[c_{\rm sp}-1]$:
$$\DP_{t}[k,x_t]=\min \left\{
\begin{array}{c} 
\max \left\{\DP_{t_1}[k,x_t,x_ty_t], \DP_{t_2}[0]\right\}\\ 
\\
\max \left\{\DP_{t_2}[k,x_t,x_ty_t], \DP_{t_1}[0] \right\} \\
\end{array}\right\} \mbox{ and}
$$

$$\DP_{t}[k,y_t]=\min \left\{
\begin{array}{c} 
\max \left\{\DP_{t_1}[k,y_t,x_ty_t], \DP_{t_2}[0]\right\}\\ 
\\
\max \left\{\DP_{t_2}[k,y_t,x_ty_t], \DP_{t_1}[0] \right\} \\
\end{array}\right\}.
$$

\item By Lemma~\ref{lem_parallel_fictivebis}, we have for $k\in[c_{\rm sp}-1]$: $\DP_{t}[k,x_t,x_ty_t]=\DP_{t}[k,x_t]$ and $\DP_{t}[k,y_t,x_ty_t]=\DP_{t}[k,y_t]$.
\end{itemize}

\medskip
Suppose that $t$ represents a series composition $\G_t=\G'_1\otimes \G'_2$. The children $t_1$ and  $t_2$ of $t$ respectively represent the $2$-terminal graphs $\G'_1=(G'_1,(x_t,z))$ and $\G'_2=(G'_2,(z,y_t))$. Then $\DP_t[\,\cdot\,]$ is computed as follows:

\begin{itemize}
\item By Lemma~\ref{lem_series_no_fictive}, we have for $k\in[c_{\rm sp}-1]$:
$$\DP_{t}[0]=
\min \left\{
\begin{array}{c} 
\max \left\{\DP_{t_1}[x_t],\DP_{t_2}[0]\right\}\\
\\
\max \left\{\DP_{t_2}[y_t],\DP_{t_1}[0]\right\}\\
\end{array}
\right\}.
$$

\item By Lemma~\ref{lem_series_fictive}, we have for $k\in[c_{\rm sp}-1]$: 
$$\DP_{t}[k,x_t]=\max \left\{\DP_{t_1}[k,x_t], \DP_{t_2}[k,z]\right\} \mbox{ and}$$
$$\DP_{t}[k,y_t]=\max \left\{\DP_{t_1}[k,y_t], \DP_{t_2}[k,z]\right\}.$$

\item By Lemma~\ref{lem_series_fictive2}, we have for $k\in[c_{\rm sp}-2]$:

$$\DP_{t}[k,x_t,x_ty_t]=\max \left\{\DP_{t_1}[k,x_t,x_tz], \DP_{t_2}[k+1,z]\right\} \mbox{ and}$$
$$\DP_{t}[k,y_t,x_ty_t]=\max \left\{\DP_{t_1}[k,y_t,y_tz], \DP_{t_2}[k+1,z]\right\}.$$

\end{itemize}

For every non-leaf node $t$ of $T(\G),$  the entries of $\DP_t[\,\cdot\,]$ are initialized to some dummy value $\bot$. Every leaf node $t$ of $T(\G)$ represents the single edge graph, that is $G_t=(V_t,E_t)$ with $V=\{x_t,y_t\}$ and $E=\{x_ty_t\}$. 
Then we can initialize the values associated to a leaf node $t$ as follows:
\begin{itemize}
\item $\DP_{t}[0]=\ecvs(\G_t^{+\emptyset},\{x_t,y_t\})=1$
\item for $k\in[c_{\rm sp}-1],$  $\DP_{t}[k,x_t]=\ecvs(\xchange{\tilde{\G}^{+F}_{t,k}},\{x_t,r_1,\dots, r_k\})={k+1}$ where $F=\{y_tr_i\mid i\in[k]\}$.
\item for $k\in[c_{\rm sp}-1],$  $\DP_{t}[k,y_t]=\ecvs(\xchange{\tilde{\G}^{+F}_{t,k}},\{y_t,r_1,\dots, r_k\})={k+1}$ where $F=\{x_tr_i\mid i\in[k]\}$.
\item for $k\in[c_{\rm sp}-2],$  $\DP_{t}[k,x_t,x_ty_t]=\ecvs(\xchange{\tilde{\G}^{+F}_{t,k}},\{x_t,r_1,\dots, r_k\})={k+1}$ where $F=\{y_tr_i\mid i\in[k]\}\cup\{x_ty_t\}$.
\item for $k\in[c_{\rm sp}-2],$  $\DP_{t}[k,y_t,x_ty_t]=\ecvs(\xchange{\tilde{\G}^{+F}_{t,k}},\{y_t,r_1,\dots, r_k\})={k+1}$ where $F=\{x_tr_i\mid i\in[k]\}\cup\{x_ty_t\}$.
\end{itemize}

As the \SPtree{} $T(\G)$ contains $O(n)$ nodes, filling the table $\DP_t[\,\cdot\,]$,  for every node $t,$  is achieved in $O(n\cdot \log n)$-time. Theorem~\ref{th:equiv} states that $\ctw(G)=\cvs(G)=\min\{\ecvs(\G^{+\emptyset},\{x,y\})\mid xy\in E\}$. This implies that the whole algorithm runs in $O(n^2\cdot \log n)$-time.
\end{proof}

\subsection{Generalization to graph of treewidth at most two}

Recall that a graph $G$ has treewidth at most two if and only if every biconnected component of $G$ induces a series-parallel graph. So we need a lemma to deal with cut vertices.

\begin{lemma} \label{lem:cutvertex}
Let $G=(V,E)$ be a graph containing a cut vertex $x$ and let $G_1=[C_1\cup\{x\}],$  \dots, $G_k=G[C_k\cup\{x\}]$ be the induced subgraphs where $C_1, \dots, C_k$ denote the connected components of $G-x$. Then 
\[
\cvs(G)
=\min_{i\in[k]} \Bigl\{ \max\left\{\cvs(G_i),\max\left\{\cvs(G_j,\{x\})\mid j\in [k], j\neq i\right\}\Bigr\} 
\right\} \mbox{ (see Figure~\ref{fig:cut-vertex})}.
\]
\end{lemma}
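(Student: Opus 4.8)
The plan is to prove the two inequalities separately, exploiting the fact that the cut vertex $x$ separates the components $C_1,\dots,C_k$ from one another, so that any connected layout of $G$ essentially decomposes into layouts of the pieces $G_i$ once we fix the order in which the components are first entered. The key structural observation is that in a connected layout $\sigma$ of $G$, the very first non-$x$ vertex placed must lie in whichever component $C_i$ is entered first (since $x$ is the unique gateway between components), and crucially that component $G_i$ may be explored \emph{from $x$ as a root in the middle of the layout}, whereas every other component $G_j$ must be a connected rooted graph with root $\{x\}$ because its exploration can only begin after $x$ has been placed. This asymmetry is exactly what the minimum over $i$ captures.

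\textbf{Upper bound ($\leq$).} First I would show that the right-hand side is achievable. Fix the index $i$ attaining the minimum. Take an optimal connected layout $\sigma_i \in \mathcal{L}^c(G_i)$ and, for each $j \neq i$, an optimal connected \emph{rooted} layout $\tau_j \in \mathcal{L}^c(G_j,\{x\})$ (which places $x$ first and then explores $C_j$ connectedly). I would build a layout $\sigma$ of $G$ by first laying out $G_i$ according to $\sigma_i$, making sure $x$ appears as early as it does in $\sigma_i$, and then appending the vertices of $C_j\setminus\{x\}$ in the order given by $\tau_j$, for each $j\neq i$ in turn. The claim is that $\sigma$ is a connected layout of $G$: the $G_i$-prefix is connected by choice of $\sigma_i$, and once $x$ has been placed, each subsequently appended block $C_j\setminus\{x\}$ keeps the graph connected because $\tau_j$ is a connected layout of $(G_j,\{x\})$ and $x$ is already present. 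For the cost bound, since $x$ separates the components, the supporting set $S_\sigma(v)$ of any vertex $v\in C_\ell$ is entirely contained in $C_\ell\cup\{x\}$ and moreover coincides with the supporting set computed inside $G_\ell$ (for $\ell=i$ using $\sigma_i$, for $\ell\neq i$ using $\tau_j$). Hence $\cost(G,\sigma)=\max\{\cost(G_i,\sigma_i),\max_{j\neq i}\cost(G_j,\{x\},\tau_j)\}$, which matches the right-hand side.

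\textbf{Lower bound ($\geq$).} Conversely, let $\sigma^*$ be an optimal connected layout of $G$, and let $i$ be the index of the component containing the first non-$x$ vertex of $\sigma^*$. I would argue that $x$ is forced to appear before any vertex of $C_j$ for $j\neq i$ (otherwise $C_j$ would be entered before $x$, contradicting connectivity since $x$ is the only link to the rest), so restricting $\sigma^*$ to $C_j\cup\{x\}$ gives a connected rooted layout of $(G_j,\{x\})$; and restricting to $C_i\cup\{x\}$ gives a connected layout of $G_i$. Because $x$ separates the components, each supporting set $S_{\sigma^*}(v)$ lies within the block of $v$, so these restricted layouts have cost at most $\cost(G,\sigma^*)=\cvs(G)$. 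This yields $\cvs(G_i)\leq\cvs(G)$ and $\cvs(G_j,\{x\})\leq\cvs(G)$ for all $j\neq i$, hence the right-hand side (minimizing over the choice of first-entered component, and this particular $i$ is a valid witness) is at most $\cvs(G)$.

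\textbf{Main obstacle.} The delicate point, and the step I expect to require the most care, is the asymmetric treatment of the distinguished component $G_i$ versus the others: one must verify that $G_i$ genuinely needs only the \emph{unrooted} connected search number $\cvs(G_i)$ while every other $G_j$ pays the \emph{rooted} cost $\cvs(G_j,\{x\})$, and that rearranging $\sigma^*$ so that $G_i$ is swept out first (before touching any other component) does not increase the cost. This rearrangement argument — showing that interleavings of the components can be serialized without cost penalty, using that $x$ is a cut vertex so supporting sets never cross between components — is the crux, and it parallels the layout-rearrangement technique used in the proofs of Lemmas~\ref{lem_series_no_fictive} and~\ref{lem_series_fictive}. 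Care is also needed to confirm that $x$ being placed in the middle of the $G_i$-sweep (rather than at the start) is compatible with the unrooted quantity $\cvs(G_i)$, which is exactly why $G_i$ is not charged the rooted cost.
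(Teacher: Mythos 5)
Your proof is correct and takes essentially the same route as the paper's: the paper likewise serializes a connected layout of $G$ into $\sigma[C_1\cup\{x\}]\odot\sigma[C_2\setminus\{x\}]\odot\cdots\odot\sigma[C_k\setminus\{x\}]$ and observes that, because $x$ is a cut vertex, supporting sets never cross between components, so the cost decomposes as the maximum of the costs on the pieces (the first-entered component unrooted, the others rooted at $x$). Your write-up simply spells out the two inequalities that the paper's one-line rearrangement argument handles at once.
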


\begin{figure}[h]
\centering
\begin{tikzpicture}[scale=1.0]
   \tikzstyle{vertex}=[fill,circle,minimum size=0.2cm,inner sep=0pt]
   \tikzstyle{vertexRoot}=[fill,red,rectangle,minimum size=0.15cm,inner sep=0pt]


\begin{scope}
\coordinate (A) at (0,0);
\coordinate (B) at (2,0);
\coordinate (S) at (1,0.8);
\coordinate (T) at (2,1);
\coordinate (SS) at (1,-0.8);
\coordinate (TT) at (2,-1);

\filldraw[fill=red!20] 
(A) .. controls (S) and (T) .. (B)  .. controls (TT) and (SS) .. (A)
-- cycle;

\node[] at (1,0) {$G_3$};
\end{scope}

\begin{scope}[rotate=120]
\coordinate (A) at (0,0);
\coordinate (B) at (2,0);
\coordinate (S) at (1,0.8);
\coordinate (T) at (2,1);
\coordinate (SS) at (1,-0.8);
\coordinate (TT) at (2,-1);

\filldraw[fill=blue!20] (A) .. controls (S) and (T) .. (B)  .. controls (TT) and (SS) .. (A)
-- cycle;

\node[] at (1,0) {$G_2$};
\end{scope}

\begin{scope}[rotate=240]
\coordinate (A) at (0,0);
\coordinate (B) at (2,0);
\coordinate (S) at (1,0.8);
\coordinate (T) at (2,1);
\coordinate (SS) at (1,-0.8);
\coordinate (TT) at (2,-1);

\filldraw[fill=black!20] (A) .. controls (S) and (T) .. (B)  .. controls (TT) and (SS) .. (A)
-- cycle;

\node[] at (1,0) {$G_1$};
\end{scope}

\node[vertex]  (x) at (0,0){};
\node[left] at (-0.2,0) {$x$};

\draw[black,very thick,->] (2.5,0) -- (3,0) ;


\begin{scope}[shift={(5,0.85)},rotate=240]
\coordinate (A) at (0,0);
\coordinate (B) at (2,0);
\coordinate (S) at (1,0.8);
\coordinate (T) at (2,1);
\coordinate (SS) at (1,-0.8);
\coordinate (TT) at (2,-1);

\filldraw[fill=black!20] (A) .. controls (S) and (T) .. (B)  .. controls (TT) and (SS) .. (A)
-- cycle;

\node[] at (1,0) {$G_1$};

\node[vertex]  (x) at (0,0){};
\node[right] at (x) {$x$};
\node[vertexRoot]  (r) at (1.5,0.4){};

\end{scope}

\draw[black,very thick] (5.6,0) -- (6,0) ;
\draw[black,very thick] (5.8,0.2) -- (5.8,-0.2) ;


\begin{scope}[shift={(8,-0.85)},rotate=120]
\coordinate (A) at (0,0);
\coordinate (B) at (2,0);
\coordinate (S) at (1,0.8);
\coordinate (T) at (2,1);
\coordinate (SS) at (1,-0.8);
\coordinate (TT) at (2,-1);

\filldraw[fill=blue!20] (A) .. controls (S) and (T) .. (B)  .. controls (TT) and (SS) .. (A)
-- cycle;

\node[] at (1,0) {$G_2$};

\node[vertexRoot]  (x) at (0,0){};
\node[right] at (x) {$x$};

\end{scope}

\draw[black,very thick] (8.6,0) -- (9,0) ;
\draw[black,very thick] (8.8,0.2) -- (8.8,-0.2) ;


\begin{scope}[shift={(9.7,0)}]
\coordinate (A) at (0,0);
\coordinate (B) at (2,0);
\coordinate (S) at (1,0.8);
\coordinate (T) at (2,1);
\coordinate (SS) at (1,-0.8);
\coordinate (TT) at (2,-1);

\filldraw[fill=red!20] 
(A) .. controls (S) and (T) .. (B)  .. controls (TT) and (SS) .. (A)
-- cycle;

\node[] at (1,0) {$G_3$};

\node[vertexRoot]  (x) at (0,0){};
\node[left] at (x) {$x$};
\end{scope}

\end{tikzpicture}
\caption{Decomposition of a graph with a cut vertex. If an optimal connected layout $\sigma^*$ starts at an arbitrary vertex of $G_1=G[C_1\cup\{x\}]$, then $\sigma^*[C_2]$ and $\sigma^*[C_3]$ start at $x$, which becomes a root of $G_2=G[C_2\cup\{x\}]$ and of $G_3=G[C_3\cup\{x\}]$.
\label{fig:cut-vertex}}
\end{figure}
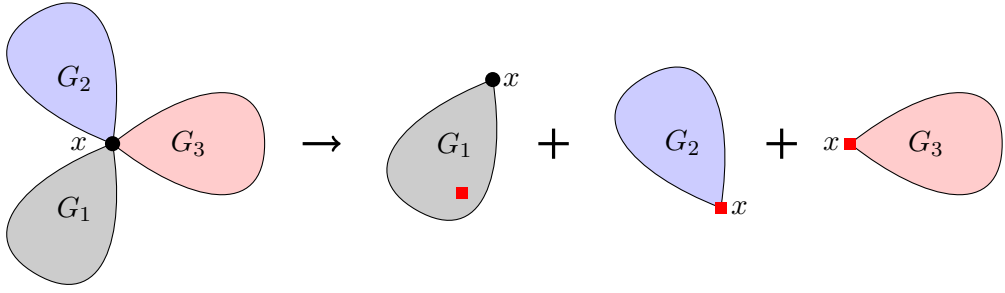
\begin{proof}
Let us consider $\sigma\in\mathcal{L}^{(c)}(G)$ and suppose that the first vertex of $\sigma$ belongs to $C_1$. Then observe that $\sigma$ can be rearranged into $\tau=\sigma[C_1]\odot \sigma[C_2\setminus\{x\}]\dots, \odot\sigma[C_k\setminus\{x\}]$ and that since $x$ is a cut vertex then $\tau\in\mathcal{L}^{(c)}(G)$ as well.
\xchange{The statement follows from the observation that $\cost(G,\tau)=\max\left\{\cost(G_i,\sigma[C_i])\mid i\in [k]\right\}$.}
\end{proof}

\begin{theorem}
Computing the connected treewidth of a graph of treewidth at most $2$ requires $O({n^3\cdot \log n})$-time.
\end{theorem}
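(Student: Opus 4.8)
The plan is to reduce the treewidth-at-most-two case to the biconnected case of Proposition~\ref{prop:SP} through the block-cut tree, gluing the per-block computations together by Lemma~\ref{lem:cutvertex}. First I would invoke the structural fact recalled in the text: a graph has treewidth at most two if and only if each of its biconnected components (blocks) is series-parallel. So I would begin by computing, in linear time, the block-cut tree $\mathcal{T}$ of $G$; its block nodes are biconnected series-parallel graphs, to which the algorithm of Proposition~\ref{prop:SP} and the tables $\DP$ behind it apply, while its cut-vertex nodes are exactly the vertices at which Lemma~\ref{lem:cutvertex} is invoked. I may assume $G$ is connected, handling each connected component separately otherwise, and by Theorem~\ref{th:equiv} it suffices to compute $\cvs(G)$.

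The combinatorial core is to turn Lemma~\ref{lem:cutvertex} into a bottom-up dynamic program over $\mathcal{T}$. I would root $\mathcal{T}$ and, for every block $B$ with parent cut vertex $p$, compute the rooted quantity $\cvs(H_B,\{p\})$, where $H_B$ is the subgraph induced by all vertices of the subtree of $\mathcal{T}$ rooted at $B$. The observation driving the recurrence is that a cut vertex $c$ separates the subtree hanging below it from the rest of the graph; hence in any connected layout whose root lies outside a pendant subtree $H$ attached at $c$, no vertex of $B$ can hold a vertex of $H\setminus\{c\}$ in its extended supporting set, and symmetrically. The pendant subtrees therefore interact neither with one another nor with the internal layout of $B$, so a rooted analogue of Lemma~\ref{lem:cutvertex}, provable by the same layout-rearrangement argument used throughout Section~\ref{adpal}, yields
\[
\cvs(H_B,\{p\}) \;=\; \max\Bigl\{\cvs(B,\{p\}),\ \max_{B'\ \mathrm{child\ block\ of}\ B}\cvs(H_{B'},\{p_{B'}\})\Bigr\},
\]
where each $\cvs(H_{B'},\{p_{B'}\})$ is already available from the children and $\cvs(B,\{p\})$ is the rooted connected separation number of the single block $B$, obtained from the biconnected machinery with $p$ designated as the root terminal (by building the \SPtree{} of $(B,(p,w))$ for an incident edge $pw$ and reading the single-root entry).

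For the top-level unrooted value I would exploit that a connected layout has a first vertex, so $\cvs(G)=\min_{u\in V}\cvs(G,\{u\})$. Fixing a candidate first vertex $u$ re-roots $\mathcal{T}$ at the block or cut vertex containing $u$ and turns $\cvs(G,\{u\})$ into the maximum, over all blocks $B$, of $\cvs(B,\{c_B\})$, where $c_B$ is the vertex through which the layout enters $B$ (with $c_B=u$ for $u$'s own block); this is exactly what the recurrence above delivers once $\mathcal{T}$ is rooted at $u$. For the running time, the recurrence over $\mathcal{T}$ is near-linear once the single-block rooted values are known, so each choice of first vertex $u$ induces a rooting of every block and is handled by at most one invocation of the $O(n^2\log n)$ biconnected algorithm of Proposition~\ref{prop:SP}; ranging over the $n$ possible first vertices multiplies this by $n$, giving the claimed $O(n^3\log n)$ bound and matching the advertised extra factor of $n$.

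The part I expect to be delicate is twofold. First, I must make precise the rooted companion of Lemma~\ref{lem:cutvertex}: the stated lemma expresses the unrooted $\cvs(G)$ in terms of the rooted values $\cvs(G_j,\{x\})$, whereas here the outer layout is itself rooted, and I must check that rooting a block at a single prescribed vertex $c$, rather than at a terminal edge as the tables $\DP$ are organised, is captured correctly by the single-root entry of an \SPtree{} built from an edge incident to $c$. Second, the non-interaction claim that licenses replacing the joint layout cost by a maximum of independent rooted costs must be argued from cut-vertex separation exactly as in the proof of Lemma~\ref{lem:cutvertex}; this is the conceptual crux, after which the block-cut tree bookkeeping and the $O(n^3\log n)$ accounting are routine.
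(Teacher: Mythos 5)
Your proposal is correct and follows essentially the same route as the paper: decompose via the block-cut tree, guess where the layout starts (you guess the first vertex, the paper guesses the first block -- an equivalent $O(n)$-fold choice), reduce to per-block rooted values combined by a maximum as in Lemma~\ref{lem:cutvertex}, and invoke Proposition~\ref{prop:SP} on each block for an overall $O(n^3\log n)$ bound. Your write-up is in fact more explicit than the paper's about the rooted analogue of Lemma~\ref{lem:cutvertex} over the whole block-cut tree and about extracting single-vertex-rooted block values from the edge-rooted tables, both of which the paper leaves implicit.
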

\begin{proof}
Let $G$ be a graph of treewidth at most $2$. The algorithm first computes the biconnected tree decomposition of $G$. This can be done in linear time.
Following Lemma~\ref{lem:cutvertex}, we guess a biconnected component $C_1$ in which the connected layout will start. This generates for every biconnected component $C_k$ distinct from $C_1$ a root vertex $r_k$. Then using Proposition~\ref{prop:SP}, in $O(n^2\cdot\log n)$ we can compute $\cvs(G[C_1])$ and $\cvs(G[C_k],\{r_k\})$ for each $k\neq 1$. This leads to an $O(n^3\cdot \log n)$-time algorithm.
\end{proof}

\section{Discussion and open problems}
\label{klopath}

We obtained a polynomial time algorithm to compute the connected treewidth for the class of graphs of treewidth at most two. This result naturally leads to the problem of determining the  algorithmic complexity of computing the connected treewidth for the class of  bounded treewidth graphs. To discuss this, we present the problem as a decision  problem:\medskip

\noindent {\sc Connected Treewidth}\\
\noindent {\sl Input:} A graph $G$ and an integer $k$.\\
\noindent {\sl Question:} $\ctw(G)\leq k$?
\medskip

Our result implies  that {\sc Connected Treewidth} can be solved in $O(n^3\cdot k)$-time for graphs of treewidth at most $2$. 
Let us discuss the following three conjectures.

\begin{conjecture}
\label{conj1}
{\sc Connected Treewidth} can be solved by an $O(n^{f(\tw(G))})$-time algorithm.
\end{conjecture}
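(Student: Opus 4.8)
The plan is to establish the conjecture by lifting the extended-graph dynamic programming developed above for the treewidth-two case to a dynamic programming over an arbitrary rooted tree-decomposition. By Theorem~\ref{th:equiv} it suffices to compute $\cvs(G)$, the minimum over connected layouts $\sigma$ of $\cost(G,\sigma)$, so throughout one works directly with connected layouts. First I would fix a (nice) rooted tree-decomposition $(T,\mathcal F)$ of $G$ of width $w=\tw(G)$; for fixed $w$ this is computable in linear time. For a node $t$ let $G_t$ be the subgraph induced by the vertices in the bags of the subtree rooted at $t$, and let $X_t$ be the bag at $t$. The separator $X_t$ plays exactly the role the terminal pair $\{x,y\}$ played for series-parallel graphs: every path of $G$ joining $G_t\setminus X_t$ to the rest of $G$ passes through $X_t$, so the interaction between a partial layout of $G_t$ and the remaining vertices is mediated entirely by the at most $w+1$ vertices of $X_t$.

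The table at $t$ would be indexed by a \emph{boundary profile} recording, for the separator $X_t$: (i) the relative order in which the vertices of $X_t$ are scheduled in the global layout; (ii) a set of \emph{fictive edges} on $X_t$, exactly as in the extended graphs $\G^{+F}$, simulating the $(u,v)$-paths whose internal vertices lie outside $G_t$; (iii) which vertices of $X_t$ serve as \emph{roots}, i.e.\ connection anchors for the part of the layout still to be built, mirroring the isolated roots $r_1,\dots,r_k$ of Lemmas~\ref{lem_parallel_fictive}--\ref{lem_series_fictive2}; and (iv) the scheduling position in $[n]$ of each vertex of $X_t$. The stored value is the minimum $\ecost$ of a connected layout of the corresponding rooted extended graph on $G_t\cup\R$ consistent with the profile. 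Here Theorem~\ref{th:upper-bound} is essential: it bounds the optimum, and hence every relevant supporting set $S^{(e)}_{\sigma}(v)$, by $c=O(w\log n)$, so the number of order patterns, fictive-edge sets and root subsets is bounded by a function of $w$ alone, while recording the $\le w+1$ scheduling positions of (iv) contributes a factor $n^{O(w)}$. This is precisely the source of the $n^{f(\tw(G))}$ running time asserted in Conjecture~\ref{conj1}, as opposed to an FPT bound.

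To fill the table I would process $T$ bottom-up, with the leaf, introduce, forget and join rules of a nice tree-decomposition replacing the series and parallel rules of Proposition~\ref{prop:SP}. The central tool would be \emph{rearrangement lemmas} analogous to Figures~\ref{fig:layout-rearrangement-parallel} and~\ref{fig:series-rearrangement}: starting from an optimal connected layout, show that the internal vertices of the distinct subtrees hanging off a join node can be grouped into blocks ordered consistently with a single global schedule of the shared separator, without increasing $\ecost$. A join at $t$ with children $t_1,t_2$ would then combine two profiles that agree on $X_t$ by the vertex-by-vertex identity $S^{(e)}_{\sigma}(v)=S^{(e)}_{\sigma_i}(v)$ for $v\notin X_t$, together with the $\min/\max$ aggregation already visible in Lemmas~\ref{lem_parallel_fictive} and~\ref{lem_series_no_fictive}; fictive edges on $X_t$ would be updated to record newly available external paths (as $xy$ is inserted in Lemma~\ref{lem_parallel_fictive}), and root anchors propagated as in the series lemmas. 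The final answer is read at the root, minimised over the first vertex of the layout exactly as $\ctw(G)=\min\{\ecvs(\G^{+\emptyset},\{x,y\})\mid xy\in E\}$ in Proposition~\ref{prop:SP}.

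The hard part will be the global connectivity constraint coupled with interleaving across subtrees. Because $\ctw$ is only contraction-closed and not minor-closed, the usual minor-based machinery is unavailable, and one cannot argue via forbidden minors that a bounded profile suffices. Concretely, at a join node the order in which the vertices of $X_t$ become connected is forced by the requirement that every prefix of $\sigma$ induce a connected subgraph, and this order couples the two children non-locally: a vertex of $X_t$ may be reachable only through one child while its scheduling governs the cost charged in the other. The rearrangement lemmas must therefore respect this coupling, and it is exactly this phenomenon that forces the profile to retain the scheduling \emph{positions} of the separator (component~(iv)) and not merely their relative order --- the latter would already yield an FPT bound. Proving that retaining these $O(w)$ positions, together with the fictive-edge and root bookkeeping, is \emph{sufficient}, i.e.\ that every optimal connected layout admits a normal form determined by such a profile, is the crux; this is what would upgrade the treewidth-two algorithm of Proposition~\ref{prop:SP} into a full proof of Conjecture~\ref{conj1}.
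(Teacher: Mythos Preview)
The statement you are addressing is Conjecture~\ref{conj1}, and the paper does \emph{not} prove it: it is explicitly listed in Section~\ref{klopath} as an open problem, with the authors remarking that ``a general resolution of Conjecture~\ref{conj1} would require a vast extension of our dynamic programming approach.'' There is therefore no proof in the paper to compare your proposal against.

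What you have written is not a proof either, and you acknowledge this yourself: your final paragraph identifies the crux (``Proving that retaining these $O(w)$ positions \dots\ is \emph{sufficient} \dots\ is the crux'') without supplying that argument. So the proposal is a research plan, not a proof. As a plan it is broadly aligned with the paper's own suggestion of lifting the extended-graph dynamic programming to general tree-decompositions, and your identification of the obstacle --- that connectivity couples the children of a join node non-locally, forcing one to record actual scheduling positions of the separator rather than just their relative order --- is a reasonable diagnosis of why the treewidth-two argument does not extend mechanically. But two points are worth flagging. First, your invocation of Theorem~\ref{th:upper-bound} to bound the number of isolated roots by $O(w\log n)$ does not, by itself, keep the root bookkeeping to ``a function of $w$ alone'' as you claim in component~(iii); a $\log n$ factor creeps in there too and should be accounted for in the running-time analysis. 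Second, and more seriously, the rearrangement lemmas you promise for join nodes (grouping the internal vertices of the two subtrees into blocks without increasing $\ecost$) are precisely what fails to generalise straightforwardly: in the series-parallel case the separator has size two and the first-neighbour trick of Lemmas~\ref{lem_parallel_fictive} and~\ref{lem_series_no_fictive} pins down a single branching choice, whereas for separators of size $w+1$ the interleaving pattern across the two children is governed by $w+1$ such choices simultaneously, and it is not clear that a block-by-block rearrangement exists at all. Until that lemma is actually proved, the conjecture remains open.
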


\begin{conjecture}
\label{conj2}
{\sc Connected Treewidth} can be solved by an   $O(f(k,\tw(G))\cdot n)$-time algorithm.
\end{conjecture}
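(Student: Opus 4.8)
The plan is to combine a treewidth-bounded tree-decomposition with a dynamic programming scheme that generalizes the fictive-edge bookkeeping of Lemmas~\ref{lem:parallel-no-fictive}--\ref{lem_series_fictive2} from two-terminal boundaries to boundaries of size at most $\tw(G)+1$. Since $\tw(G)$ is part of the parameter, I would first invoke Bodlaender's algorithm to compute, in time $f(\tw(G))\cdot n$, a nice rooted tree-decomposition $(T,\mathcal{F})$ of width $w=\tw(G)$. I would then work through the layout characterization $\ctw(G)=\cvs(G)$ of Theorem~\ref{th:equiv}: a solution is a connected layout $\sigma$ with $\cost(G,\sigma)\le k$, and the DP builds such a layout implicitly while sweeping $T$ from the leaves to the root. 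Note that the more generic route via Courcelle's theorem is not available here, since neither a vertex ordering nor a connected tree-decomposition of unbounded size is directly quantifiable in $\mathsf{MSO}_2$ over $G$, which is exactly why an explicit DP is needed.

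The heart of the argument is the design of the DP state at a node $t$ with bag $X_t$ (so $|X_t|\le w+1$), recording the interaction between $\sigma$ restricted to $V_t$ and the boundary $X_t$. Mirroring the $\DP_t[\,\cdot\,]$ tables of Proposition~\ref{prop:SP}, a state would consist of: (i) a linear order on $X_t$ describing the relative $\sigma$-positions of the boundary vertices; (ii) a partition of $X_t$ into the connectivity classes that its vertices form through the already-finished interior vertices of $V_t$, which is the information needed to certify that every prefix of $\sigma$ stays connected; and (iii) for each boundary vertex, a \emph{cumulative support count} in $\{0,\dots,k\}$ generalizing the parameter $k$ of the extended graphs $\tilde{\G}_{t,k}$, recording how many not-yet-placed vertices of $V_t$ can reach that boundary vertex along solid paths whose internal vertices lie ahead in $\sigma$. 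The fictive edges of the paper are precisely the device that lets these counts be passed across the boundary without re-exploring $V_t$, except that here each of the $w+1$ boundary vertices carries its own count rather than a single terminal. The number of such states is bounded by $(w+1)!\cdot B_{w+1}\cdot (k+1)^{w+1}=f(k,w)$, a function of $k$ and $w$ only.

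I would then specify the transitions at introduce, forget, and join nodes. Introduce and forget nodes adapt the series rules: forgetting a boundary vertex folds its support count into the partition bookkeeping, in the same way the series rule of Lemma~\ref{lem_series_fictive2} increments the parameter (the $\DP_{t_2}[k+1,z]$ term) when the middle vertex $z$ leaves the boundary, while an introduce node plays the role of attaching a new edge. The join node merges two child states by unioning the connectivity partitions and \emph{adding} the cumulative support counts of the shared boundary vertices, capping at $k+1$ to detect a cost violation, exactly as the $\max$/$\min$ combinations of Lemmas~\ref{lem_parallel_fictive} and~\ref{lem_series_fictive} merge the two sides of a composition. Correctness would follow by a rearrangement argument in the spirit of the ``first neighbour $v$ of $y$'' exchange used there: given an optimal connected layout, one normalizes it so that within each subtree $T_t$ the finished interior vertices precede the boundary-dependent ones, making the state of $t$ well defined and forcing the supporting sets to split along $X_t$ as in Claims~\ref{cl:parallel-fictive-2} and~\ref{cl:series2}. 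With $O(n)$ nodes and $f(k,w)$ work per node (and $f(k,w)^2$ at join nodes), the total running time would be $O(f(k,\tw(G))\cdot n)$.

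The step I expect to be the main obstacle is guaranteeing that the global connectivity constraint is \emph{locally} certifiable by the partition in (ii). Because connected treewidth is not minor-closed---it is only closed under contraction, as stressed in the introduction and in~\cite{APT19}---the separator-based arguments that justify such DPs for treewidth do not transfer verbatim: a single prefix of $\sigma$ may be held together through interior vertices of many distinct subtrees at once, and the rearrangement lemma must show that this connectivity can always be witnessed boundary-class by boundary-class without a loss in cost. Making the exchange argument work uniformly across $(w+1)$-element boundaries, rather than the two-terminal case where a lone cut vertex or terminal pair cleanly separates the graph, is where the genuine difficulty lies, and it is the reason this statement is posed here as Conjecture~\ref{conj2} rather than proved.
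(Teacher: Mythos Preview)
The statement you address is Conjecture~\ref{conj2}: the paper does \emph{not} prove it, and there is therefore no proof to compare your proposal against. You recognise this yourself in your final paragraph, so your write-up is a research plan rather than a proof, and should be read as such.

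It is worth noting that the paper's own discussion of Conjecture~\ref{conj2} points in a different direction from yours. The authors suggest attacking it via Courcelle's theorem, by expressing ``$\ctw(G)\le k$'' in MSOL through the (contraction-)obstruction set ${\cal Z}_k$; they observe that ${\cal Z}_k$ is infinite for $k\ge 2$ and speculate about alternative partial orders with finite obstruction sets. You explicitly set that route aside and instead propose a direct generalisation of the paper's $\DP_t[\,\cdot\,]$ tables to bags of size $w+1$. Neither route is carried to completion, so neither can be called correct or incorrect; they are simply different strategies.

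On the technical side of your sketch, there is one point where the proposed state is likely too thin even before the connectivity obstacle you flag. Item~(iii) attaches a single integer support count to each boundary vertex, and at a join you add these counts. But in the paper's two-terminal setting the fictive edges do more than carry a count: they encode \emph{which} external vertices can reach the terminal through paths avoiding the processed part, and crucially they also encode external connectivity between the two terminals (the fictive edge $xy$ in Lemmas~\ref{lem_parallel_fictivebis} and~\ref{lem_series_fictive2}). With a $(w{+}1)$-vertex bag, an interior vertex $v$ can have support coming through several bag vertices simultaneously, and two bag vertices may be joined by an external path; a per-vertex counter loses this correlation, so the join rule ``add the counts and cap at $k{+}1$'' can both over- and under-count $|S^{(e)}_\sigma(v)|$. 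A faithful generalisation would need, at minimum, the external connectivity pattern on $X_t$ (a second partition, dual to your~(ii)) together with per-\emph{block} support information, which keeps the state size $f(k,w)$ but changes the transitions substantially. This compounds the global-connectivity issue you already identify as the main obstacle.
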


\begin{conjecture}
\label{conj3}
{\sc Connected Treewidth} can be solved by an   $O(n^{f(k)})$-time algorithm.
\end{conjecture}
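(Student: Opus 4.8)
The plan is to establish that \textsc{Connected Treewidth} admits an $O(n^{f(k)})$-time algorithm by grafting the connectivity requirement onto the classical $n^{O(k)}$-time separator/component dynamic programme for ordinary treewidth. Since connected layouts form a subset of all layouts we have $\cvs\geq\mathsf{vs}$, hence $\ctw(G)\geq\tw(G)$; so when deciding $\ctw(G)\leq k$ we may assume $\tw(G)\leq k$, and then every width-$k$ tree-decomposition uses bags, hence separators, of size at most $k+1$. There are only $O(n^{k+1})$ such separators, which is exactly the source of the $n^{f(k)}$ bound.

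The first step is to isolate the connectivity demand as a purely \emph{local} condition on a rooted tree-decomposition, so that it can be checked one node at a time during the recursion. I claim that a rooted tree-decomposition $(T,\mathcal{F},r)$ is connected in the sense of Section~\ref{sec:prelim} if and only if (a) $G[X_r]$ is connected, and (b) for every parent--child edge $(p,u)$ of $T$, every vertex of $X_u$ is joined to $X_u\cap X_p$ inside $G[X_u]$. The ``only if'' direction uses the separator property: a shortest path inside $G[V_{r,u}]$ from a vertex of $X_u$ to $V_{r,p}$ must stay in $X_u\setminus X_p$ until it first reaches $X_u\cap X_p$, because $V_{r,u}\setminus V_{r,p}\subseteq X_u\setminus X_p$ and the separator $X_u\cap X_p$ shields the subtree below $u$. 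The ``if'' direction is an induction on the distance from $r$: given $G[V_{r,p}]$ connected, condition (b) attaches every vertex of $X_u$ to $X_u\cap X_p\subseteq V_{r,p}$, so $G[V_{r,u}]=G[V_{r,p}\cup X_u]$ is connected. The decisive feature is that (b) refers only to the small graph $G[X_u]$ and to the interface $X_u\cap X_p$.

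The second step is to fold this local test into the component DP. The state is a pair $(S,C)$ with $S\subseteq V$, $|S|\leq k+1$, and $C$ a connected component of $G-S$; the table records whether $G[S\cup C]$ has a width-$k$ tree-decomposition rooted at a bag containing $S$ and satisfying (a)/(b) with $S$ in the role of the interface $X_u\cap X_p$ to the absent parent. In a \emph{nice} decomposition each step introduces a single vertex $x\in C$, so the top bag is $S\cup\{x\}$, the interface is exactly $S$, and condition (b) for that edge collapses to the single test $x\in N(S)$. The recursion then reads: $(S,C)$ is feasible iff $C=\emptyset$, or there is $x\in C$ with $x\in N(S)$ and $|S\cup\{x\}|\leq k+1$ such that every component $C_i$ of $G[C]-x$ yields a feasible pair $(N(C_i),C_i)$ with $N(C_i)\subseteq S\cup\{x\}$. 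At the top level one guesses the root bag $X_r$ with $G[X_r]$ connected and demands $(N(C_i),C_i)$ feasible for every component $C_i$ of $G-X_r$. With $O(n^{k+2})$ states, each evaluated in $n^{O(1)}$ time, and $O(n^{k+1})$ root guesses, the whole procedure runs in $O(n^{f(k)})$ time; by Theorem~\ref{th:equiv} the least $k$ for which the top level succeeds equals $\ctw(G)$.

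The hard part will be \emph{completeness} of this restricted search space: I must show that whenever $\ctw(G)\leq k$ there is an optimal connected tree-decomposition of the precise ``nice, one-introduced-vertex-per-step, interface $=X_u\cap X_p=S$'' shape that the DP enumerates. For plain treewidth this canonicalisation is standard, but here it has to be carried out while preserving both (a) and (b): I must verify that refining a bag into a path of single-vertex introductions never produces a child whose new vertex misses the running interface, and that forcing $X_u\cap X_p=S$ by duplicating bags does not break any $G[V_{r,u}]$. The natural instrument is the equivalence $\ctw=\cvs$ of Theorem~\ref{th:equiv}: a connected layout of cost at most $k$ supplies an elimination order in which every vertex attaches to the already-placed connected prefix, which is exactly the ``$x\in N(S)$'' condition. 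The delicate point is to reorganise such a layout so that the vertices of the different components of each $G-S$ are handled in the independent, branching order used by the DP, without raising any supporting set above $k$ or destroying connectivity; establishing this exchange argument is where the real work lies, after which the correctness and the $O(n^{f(k)})$ bound follow as sketched above.
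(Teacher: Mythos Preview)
The statement you address is listed in the paper as an open \emph{conjecture}; there is no proof in the paper to compare against, only the remark that Conjecture~\ref{conj3} would follow from an $O(n^{f(k,\tw(G))})$-time algorithm because yes-instances satisfy $\tw(G)\leq k$. Your proposal takes exactly this route, and your local characterisation (a)/(b) of connected tree-decompositions is correct and is the right device for turning the global connectivity demand into a per-edge condition that a separator dynamic programme can check.

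That said, what you have written is a plan, not a proof. You yourself say that ``the hard part will be completeness'' and that ``establishing this exchange argument is where the real work lies.'' This acknowledged gap is essentially the whole content of the conjecture. Soundness of the ACP-style DP with the added constraint $x\in N(S)$ is immediate from (a)/(b); what is missing is the argument that some optimal connected tree-decomposition can always be brought into the one-vertex-per-step nice form the DP enumerates without its width exceeding $k$. For ordinary treewidth this canonicalisation is routine, but here every transformation---splitting a multi-vertex introduction into a chain, inserting forget nodes, duplicating bags at branch points, reordering a connected layout so that components of $G-S$ are processed one after another---must be shown to preserve condition (b) and the cost bound simultaneously, and you supply none of these verifications. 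Until that completeness step is actually carried out, your proposal is a sharpened restatement of the strategy the paper already suggests, not a resolution of the conjecture.
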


Our result can be seen as a special case of Conjecture \ref{conj1} (when $\tw(G)\leq 2$). A general resolution of Conjecture \ref{conj1} would require
a vast extension of 
our dynamic programming approach.
In our approach (for treewidth at most two) we essentially solve a slightly modified problem where the input is a pair $(G,e),$ 
 where $e\in E(G),$  and we return the minimum cost of a layout 
 that {\sl begins} with the endpoints of $e$. Then we reduce   the computation of connected treewidth to this problem by paying an  overhead of $O(n^2)$. An interesting question is whether and how a similar approach might work 
 for the general case. 
 Of course, one may try to reduce the  $O(n^3\cdot \log n)$-time complexity 
 of our algorithm by avoiding such reductions and directly build a dynamic programming scheme for {\sc Connected Treewidth} on graphs of treewidth $\leq k$. We believe that this is possible and can  reduce the time complexity to $O(n^c)$ for some $1<c\leq 3$.

For  Conjecture \ref{conj2} one may attempt to use tools related to Courcelle's theorem. This would require to express the question $\ctw(G)\leq k$ in using a formula $\phi_{k}$ in Monadic Second Order Logic (MSOL) which is far from being obvious. A possible direction would be to consider the contraction-obstruction set ${\cal Z}_{k}$ of the class ${\cal G}_{k}=\{G\mid \ctw(G)\leq k\},$  i.e., the contraction minimal graphs not in ${\cal G}_{k}$. Indeed contraction testing is MSOL expressible. 
However, it turns out  that, unlike the case for treewidth and pathwidth with respect to minors, ${\cal Z}_{k}$ is infinite for every $k\geq 2,$  as observed in~\cite{APT19}. A possible 
way to overcome this obstacle is to consider some other partial ordering relation, alternative to contractions, that maintains closeness, MSOL expressibility
and gives rise to bounded size obstructions. Such a step can be done using the results of~\cite{APT19} for graphs of connected treewidth at most two. However, it is not clear whether this can be extended for bigger values of connected treewidth.
Let us mention that recently, Kant\'e et al.~\cite{Kante2020linear} obtained a $O(f(k)\cdot n)$-time algorithm to compute the connected pathwidth of a graph (that is the analogue of Conjecture~\ref{conj2} for pathwidth). {Finally, it is also natural to ask if one can compute in {\sf FPT}-time the connected treewidth of a graph when parameterized by the pathwidth of the input graph.}

A proof of Conjecture \ref{conj3} would follow if we devise an algorithm to check whether {\sf ctw}$(G)\leq k$  in   $O(n^{f(k,\tw(G))})$  time. This follows 
directly from the fact that {\sf yes}-instances of {\sc Connected Treewidth} have always treewidth at most $k$. Such a result would be analogous to 
the one of~\cite{DOR19} for connected pathwidth and is perhaps the first  (and easier) to be attacked among the three above conjectures.


\end{document}